\documentclass[11pt]{article}
\usepackage[letterpaper,margin=1in]{geometry}
\usepackage{cmap}
\usepackage[T1]{fontenc}
\usepackage[utf8]{inputenc}
\usepackage[english]{babel}
\usepackage{lmodern}
\usepackage{amsmath,amssymb,amsthm,mathtools}
\usepackage{microtype}
\usepackage{booktabs}
\usepackage{enumitem}
\usepackage{xcolor}
\usepackage{xspace}

\usepackage{interval}
\intervalconfig{soft open fences}

\newcommand{\braket}[2]{\left< #1 \vphantom{#2} \middle| #2 \vphantom{#1} \right>} 
\newcommand{\ketbra}[2]{\ensuremath{\ket{#1}\bra{#2}}}
\DeclarePairedDelimiter\rbra{\lparen}{\rparen}
\DeclarePairedDelimiter\sbra{\lbrack}{\rbrack}
\DeclarePairedDelimiter\cbra{\{}{\}}
\DeclarePairedDelimiter\abs{\lvert}{\rvert}

\DeclarePairedDelimiter\ceil{\lceil}{\rceil}
\DeclarePairedDelimiter\floor{\lfloor}{\rfloor}
\let\ket\relax
\DeclarePairedDelimiter\ket{\lvert}{\rangle}
\let\bra\relax
\DeclarePairedDelimiter\bra{\langle}{\rvert}

\newcommand{\supp} {\operatorname{supp}}
\newcommand{\im}{\operatorname{image}}

\newcommand{\Inf}{\operatorname{Inf}}

\newcommand{\Real} {\operatorname{Re}}

\usepackage{amsfonts,latexsym,mathdots,bbm,graphicx,float}
\usepackage{caption,subcaption,ellipsis,textcomp,thmtools,thm-restate,mleftright}
\usepackage{algorithm}
\usepackage{algpseudocode}
\usepackage[colorlinks=true,citecolor=blue]{hyperref}
\usepackage[capitalize,nameinlink]{cleveref}
\crefname{ineq}{inequality}{inequalities}
\creflabelformat{ineq}{#2{\upshape(#1)}#3}
\crefname{algorithm}{Algorithm}{Algorithms}

\declaretheorem[numberwithin=section]{theorem}
\declaretheorem[sibling=theorem]{lemma}

\declaretheorem[sibling=theorem]{proposition}
\declaretheorem[sibling=theorem]{fact}
\declaretheorem[sibling=theorem]{corollary}
\declaretheorem[style=definition,sibling=theorem]{definition}
\declaretheorem[style=definition,sibling=theorem]{problem}

\makeatletter

\def\theHALG@line{\thealgorithm.\arabic{ALG@line}}
\makeatother

\newcommand{\ignore}[1]{}

\DeclareMathOperator*{\Ex}{\mathbb{E}}

\DeclareMathOperator{\Tr}{\mathrm{Tr}}

\newcommand{\R}{\mathbb R}
\newcommand{\C}{\mathbb C}

\newcommand{\Z}{\mathbb Z}

\newcommand{\U}{\mathbb U}

\newcommand{\eps}{\varepsilon}
\newcommand{\lam}{\lambda}

\newcommand{\vphi}{\varphi}

\newcommand{\wt}[1]{\widetilde{#1}}
\newcommand{\wh}[1]{\widehat{#1}}

\newcommand{\ot}{\otimes}

\newcommand{\bbone}{\mathbbm 1}

\newcommand{\calA}{\mathcal{A}}

\newcommand{\calC}{\mathcal{C}}

\newcommand{\calH}{\mathcal{H}}

\newcommand{\calJ}{\mathcal{J}}
\newcommand{\calK}{\mathcal{K}}

\newcommand{\calM}{\mathcal{M}}

\newcommand{\calO}{\mathcal{O}}

\newcommand{\calS}{\mathcal{S}}

\newcommand{\calU}{\mathcal{U}}

\DeclarePairedDelimiter\norm{\lVert}{\rVert}

\DeclarePairedDelimiterX\diverg[2]{(}{)}{#1 \,\|\, #2}
\newcommand{\dtv}[2]{\mathrm{d}_{\mathrm{TV}}\rbra{#1,#2}}
\newcommand{\distU}{\mathrm{dist}_{\mathrm U}}
\newcommand{\distC}{\mathrm{dist}_{\mathrm C}}
\newcommand{\mass}{\operatorname{mass}}

\title{Optimal Quantum Junta Testing without Inverse Queries}
\author{
  Jinge Bao\thanks{School of Informatics, University of Edinburgh, Edinburgh EH8 9AB, UK. \href{mailto:jinge.bao@ed.ac.uk}{jinge.bao@ed.ac.uk}}
  \and
  Minbo Gao\thanks{Institute of Software, Chinese Academy of Sciences, Beijing 100190, China. \href{mailto:gmb17@tsinghua.org.cn}{gmb17@tsinghua.org.cn}}
  \and
  Penghui Yao\thanks{State Key Laboratory for Novel Software Technology, Nanjing University, Nanjing 210023, China. }~\thanks{Hefei National Laboratory, Hefei 230088, China. \href{mailto:pyao@nju.edu.cn}{pyao@nju.edu.cn}}
}

\date{}

\begin{document}

\maketitle

\begin{abstract}
We study the query complexity of testing quantum juntas. Given an unknown \(n\)-qubit unitary or quantum channel, the goal is to decide whether it acts nontrivially on at most \(k\) qubits or is \(\varepsilon\)-far from every such operation. For unitaries, we consider the forward-only model, where the tester has access to \(U\), but not to \(U^\dagger\) or controlled-\(U\).

We give tight bounds for both unitary and channel junta testing. For sufficiently large \(k\), \(n\ge 2k\), and a broad range of \(\varepsilon\), both problems have query complexity
$\Theta\!\left(k/(\varepsilon^2\log k)\right)$.
Our upper bounds are achieved by simple nonadaptive testers using product inputs and single-qubit measurements. For junta channels, this closes the gap by~\hyperlink{cite.BY25}{Bao and Yao (COLT 23, TPAMI 2025)} between an \(\widetilde{O}(k)\) upper bound and a \(\widetilde{\Omega}(\sqrt{k})\) lower bound at constant \(\varepsilon\), establishing the optimal \(\Theta(k/\log k)\) query complexity.

At the heart of both bounds is a strong connection between quantum junta testing and classical support-size testing. For the upper bounds, we reduce quantum junta testing to a generalized support-size problem for distributions over random subsets, and develop a new tester with improved sample complexity. For the lower bounds, we proceed in the reverse direction, reducing classical support-size testing to quantum junta testing through a phase-masked hard ensemble and showing that forward quantum queries can be exactly classicalized on average. Together, these reductions identify support-size testing as the classical core of quantum junta testing and yield matching lower bounds. At constant $\varepsilon$, they also give a separation between $\Theta(k/\log k)$ forward queries and $\widetilde{O}(\sqrt{k})$ queries with inverse access.
\end{abstract}

\setcounter{tocdepth}{3}

\clearpage

\tableofcontents

\clearpage

\section{Introduction}\label{sec:introduction}

Property testing asks whether an object has a specified property or is
far from every object with that property~\cite{Gol17}.
The aim is to answer this question using few queries, without
reconstructing the object.
Quantum property testing studies how quantum algorithms test properties
of classical and quantum objects~\cite{BFNR08,MdW16}.
For quantum states and operations, testing can require far fewer queries
than full reconstruction.
A central question is which properties can be tested with a number of
queries independent of the system size.

Quantum juntas extend the notion of Boolean juntas: functions that
depend on only a small set of input variables.
Boolean juntas have been studied in learning and property testing under
several access models~\cite{MOS03,PRS02,FKR+04,CG04,Bla08,Bla09,STW15,Sag18,Bsh19}.
A junta unitary acts on a small, unknown set of qubits and as the
identity on the remaining qubits.
A junta channel has the same form, but its action on the selected
qubits may include noise.
Pauli analysis and influence for quantum Boolean functions were
developed in~\cite{MO10}.
The tester for junta unitaries in~\cite{Wan11} has query complexity
independent of the total number of qubits.
An improved tester combines influence estimation with quantum group
testing~\cite{CNY23}.
Testing and learning algorithms for junta channels are given in~\cite{BY25}.
Further work studies tolerant and nonadaptive quantum junta
testing~\cite{CLL24,ADEG25,BLY+25}.

The query complexity depends on the available oracle access.
Standard amplitude amplification and amplitude estimation use both a
unitary and its inverse~\cite{BHMT02}.
The usual Boolean evaluation oracle is self-inverse, whereas a general
unitary need not be.
Access to an unknown unitary does not naturally include an oracle for its inverse \cite{TW25}.
The latter can be simulated with forward queries but requires a number of queries that grows with the system size~\cite{SCHL16,Nav18,SBZ19,QDSSM19,OYM24,CMLZW24,CYZ26}.
In the \emph{forward-only}, or \emph{inverse-free}, model, a tester may
apply the unknown unitary but is given neither its inverse nor a
controlled version as an oracle.
The faster junta tester in~\cite{CNY23} uses inverse queries.
We ask how restricting access to forward queries affects junta testing.

We resolve this question in the parameter range stated below by
determining the optimal query complexity of testing junta unitaries
with forward-only access and of testing junta channels.
For a constant distance parameter, our forward-only lower bound and the upper
bound with inverse access in~\cite{CNY23} establish a quadratic
separation, up to logarithmic factors, between the two access
models for testing junta unitaries.
For junta channels, our upper bound improves the bound
in~\cite{BY25} by a logarithmic factor in the junta size.
Our optimal testers for both problems are nonadaptive and use
product input states and single-qubit measurements, without ancillas.

\subsection{Main results}

Let $n$ be the number of qubits, $k$ the junta-size bound, and $\eps$
the distance parameter.
A $k$-junta acts nontrivially on at most $k$ qubits.
The tester must distinguish such an operation from one whose distance
from every $k$-junta of the same type is at least $\eps$.
For unitaries, we use normalized Hilbert--Schmidt distance up to
global phase, as in \cite{CNY23}.
For channels, we use normalized Hilbert--Schmidt distance between
Choi matrices, as in \cite{BY25}.
The precise definitions are given in
\cref{def:unitary-distance,def:channel-distance}.
A channel query applies the unknown channel once, without access to
its environment.
Each tester must succeed with probability at least $2/3$
on every promised input.

The following statements concern sufficiently large $k$ and
integers $n\geq2k$.
The matching lower bounds hold for
$2^{-k/16}\leq\eps\leq1/16$; the upper bounds have a broader range,
as described below.

\pagebreak
\begin{theorem}[Testing junta unitaries, informal version of
\cref{thm:unitary-junta-upper-bound,thm:unitary-junta-lower-bound}]
\label{thm:main-unitary}
For small $\eps$, testing whether an $n$-qubit
unitary is a $k$-junta or is $\eps$-far from every $k$-junta
has query complexity $\Theta(k/(\eps^2\log k))$
with forward-only access.
Moreover, a nonadaptive tester without ancillas achieves
this optimal query complexity.
\end{theorem}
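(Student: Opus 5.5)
The plan has two halves: a nonadaptive upper bound and a matching forward-only lower bound. Both rest on a correspondence between junta testing and classical support-size testing.

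\textbf{Upper bound.} Use the Pauli expansion $U=\sum_{P}\widehat U(P)P$. This gives a distribution $p(P)=|\widehat U(P)|^2$ over Pauli strings, and we consider the random subset $S=\supp(P)\subseteq[n]$. The tester prepares a random product state (each qubit a uniformly random single-qubit stabilizer state), applies $U$ once, and measures each qubit in the matching basis. On each qubit it records whether the outcome deviates from the input. Averaging over inputs, the flagged set is a random subset of $S$: each qubit in $\supp(P)$ is flagged independently with some constant probability. So a single query yields a sample from a ``noisy random subset'' distribution.

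If $U$ is a $k$-junta on $J$, every flagged set lies inside $J$, so the union of all samples has size at most $k$. If $U$ is $\eps$-far, a standard influence argument (as in \cite{CNY23}) shows the following: for every $k$-set $J$, the Pauli mass outside $J$, namely $\sum_{P:\supp(P)\not\subseteq J}p(P)$, is $\Omega(\eps^2)$. Hence every $k$-set is escaped with probability $\Omega(\eps^2)$ per sample.

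The task is now a generalized support-size test: decide whether the union of the subset distribution's support has size at most $k$, or whether every $k$-set misses a mass-$\eps^2$ part. The tester counts distinct qubits seen and rejects if the count exceeds $k$. A Valiant--Valiant style estimator, adapted to subsets, should achieve $O(k/(\eps^2\log k))$ samples. Plain counting only gives $O(k/\eps^2)$, so the $\log k$ saving has to come from estimating unseen elements via linear-programming or polynomial-approximation fingerprint methods. The adaptation must handle correlated elements inside a subset; I would reduce to singleton marginals by sampling one random element per subset.

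\textbf{Lower bound.} Take a hard pair of distributions $D_0,D_1$ for classical support-size testing on $2k$ elements. Their supports have size at most $k$ versus being $\eps^2$-far, and they match moments up to degree $\Theta(\log k)$, so distinguishing them needs $\Omega(k/(\eps^2\log k))$ samples. Embed a distribution $q$ over qubits as
\[
U=\sum_i \sqrt{q_i}\,e^{i\theta_i}\,P_i\otimes V,
\]
or better as a controlled-phase mixture. The phases $\theta_i$ are random masks, which kill cross terms when $U$ is sandwiched between arbitrary forward-only processing. After averaging over the phases and a random relabeling of the qubits, the output of a $T$-query forward algorithm should depend only on $T$ i.i.d. samples from $q$, in the sense that it equals the output of a classical channel fed those samples.

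This ``exact classicalization on average'' is the main obstacle. Without inverse access, each query contributes one factor $U$ to the bra side and one to the ket side. Phase averaging should then force the Pauli labels on the two sides to coincide, giving a diagonal term weighted by $\prod_t q_{i_t}$. I would first try a Haar-random phase per element together with a symmetrization lemma showing that only equal bra/ket multisets survive. Finally, I need to check that $\eps^2$-far distributions give unitaries $\eps$-far from every $k$-junta, using the Hilbert--Schmidt distance with global phase, and that unitarity holds, for example by using orthogonal Paulis on distinct qubits with $V$ chosen appropriately. The range $2^{-k/16}\le\eps\le 1/16$ comes from the parameter constraints in this construction.
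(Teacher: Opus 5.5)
Your plan has the same high-level shape as the paper's: influence sampling reduces the upper bound to a subset-support problem, and a phase-masked ensemble with exact classicalization of forward queries gives the lower bound. Your sampler, with an independent random stabilizer basis on each qubit, is also valid. It outputs a $2/3$-thinning of $\supp(P)$ with $P\sim\abs{\wh U(P)}^2$, so $\frac23\Inf_{\overline{T}}\sbra{U}\le\Pr[X\nsubseteq T]\le\Inf_{\overline{T}}\sbra{U}$, exactly as for the paper's same-axis \textsc{Influence-Sample}.

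However, the $\log k$ saving is the entire content of the upper bound, and the one concrete step you give for it fails. Keeping one uniformly random element of each sampled set turns escape probability $\delta$ into singleton mass outside $T$ of order only $\delta/\abs{X}$. For example, let $\abs{J}=k$, $j'\notin J$, $\beta=2\eps^2-\eps^4$, and $U=X^{\otimes J}\otimes W_{j'}$ with $W=\sqrt{1-\beta}\,I+\mathrm{i}\sqrt{\beta}\,Z$. Every qubit of $J$ has influence $1$ and $j'$ has influence $\beta$, so \cref{lem:influence-certificate}(ii) makes $U$ $\eps$-far from $\calJ_k$, while $X^{\otimes J}$ is a $k$-junta. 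Under your sampler each flagged set contains about $2k/3$ qubits of $J$. The kept element is therefore $j'$ with probability about $\beta/(k+1)$, and the two singleton laws are $O(\eps^2/k)$-close in total variation. Any tester built on them needs $\Omega(k/\eps^2)$ queries on this pair, which is no better than plain counting.

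The paper keeps whole sets and uses the marginals $a_i=\Pr[i\in R]$ only in the analysis: Poisson thinning and linearity give $\Ex[Z]=\sum_iQ_d(a_i)$ regardless of dependence. What you are missing is the mechanism that controls the set sizes. The paper amplifies the escape probability to a constant by taking unions of $\lceil1/\delta\rceil$ samples, then forms an initial union $S$ of $O(k/\log k)$ amplified samples. It then replaces any residual $\widetilde X\setminus S$ of size more than $b=\lceil\log k\rceil$ by $\varnothing$. By \cref{lem:set-initial-union}, unless $\abs{S}>k$ (in which case the tester rejects), large residuals have probability at most $1/4$, so a constant escape probability survives truncation. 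The bound $\abs{R}\le b$ then gives both $\sum_ia_i\le b$, which is needed for the gap $\ell/(1024b)$ of the Fej\'er statistic, and the bounded-difference constant $2bH$ in the Efron--Stein variance bound.

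On the lower-bound side, $U=\sum_i\sqrt{q_i}e^{\mathrm{i}\theta_i}P_i\otimes V$ is not unitary. Unitarity for all phase choices forces every cross coefficient $\sqrt{q_iq_j}\,(P_i\otimes V)^*(P_j\otimes V)$ with $i\neq j$ to vanish, which is impossible because these are products of unitaries; choosing orthogonal Paulis does not help. The missing idea is to let $p$ enter through random labels rather than amplitudes. The paper adds $r=\lceil\log(Ck/\delta^2)\rceil$ address qubits and, for each of the $R=2^r$ addresses, draws an independent label $f(x)\sim p$ and a phase $g_x$. It then sets $U_{f,g}=\sum_xg_x\ketbra{x}{x}\otimes Z_{f(x)}$, and every realization is an $(r+s)$-junta when $\abs{\supp(p)}\le s$.

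In the NO case, a candidate support containing all address qubits leaves influence equal to the empirical mass outside $s$ labels. A support omitting an address qubit has influence at least $1/4$ because of the random phases (\cref{lem:NO_case_2}). Keeping $r\le k/2$ is where the restriction $\eps\ge2^{-k/16}$ comes from.

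In the classicalization, phase averaging separates paths by \emph{address} histogram, but paths with the same histogram still interfere. The surviving terms are therefore not diagonal terms weighted by $\prod_tq_{i_t}$. What does hold is that each sector vector depends only on the labels at most $q$ addresses, and these labels are i.i.d.\ from $p$. You must then show that the induced acceptance rule lies in $[0,1]$ on every sample tuple. The paper does this by symmetrizing over sample order and applying \cref{lem:pointwise-normalization}.

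Finally, moment matching alone gives $\Omega(k/\log k)$ at constant distance. The $1/\eps^2$ factor needs an extra dilution by a fixed distribution on a disjoint support (\cref{app:classical-hard-priors}).
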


The tester in~\cite{Wan11}, under the refined analysis in~\cite{MdW16},
uses $O\rbra{k/\eps^2}$
forward queries.
Our upper bound saves a logarithmic factor in the junta size.
With access to both the unitary and its inverse,
the tester in~\cite{CNY23} uses
$O\rbra{\sqrt{k}\log^{3/2}k/\eps}$ queries.
For a constant distance parameter, this is asymptotically smaller than the
$\Theta\rbra{k/\log k}$ forward-query complexity proved here.
The lower bounds in~\cite{TW25} show that the generic quadratic speedups
of amplitude amplification and estimation require inverse access in
sufficiently large dimension.
Our result establishes an inverse-access advantage for testing
junta unitaries and determines the optimal forward-query complexity.

\begin{theorem}[Testing junta channels, informal version of
\cref{thm:channel-junta-upper-bound,thm:channel-junta-lower-bound}]
\label{thm:main-channel}
For small $\eps$, testing whether an $n$-qubit
channel is a $k$-junta or is $\eps$-far from every $k$-junta
has query complexity $\Theta(k/(\eps^2\log k))$
with channel access.
Moreover, a nonadaptive tester without ancillas achieves
this optimal query complexity.
\end{theorem}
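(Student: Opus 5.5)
Both directions go through the reduction to classical support-size testing described in the abstract. The plan mirrors the unitary case of \cref{thm:main-unitary}.

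\textbf{Upper bound.} The tester prepares a uniformly random product of single-qubit Pauli eigenstates, applies the channel $\Phi$ once, and measures each qubit in a random Pauli basis. From each query it records the random set $S\subseteq[n]$ of qubits whose outcome is inconsistent with the identity action.
\begin{itemize}[nosep]
  \item Writing the Choi state in the Pauli basis, the probability that qubit $i$ lies in $S$ is a constant times the influence $\Inf_i(\Phi)$. The induced distribution over subsets is a mixture indexed by the Pauli spectrum of the Choi matrix.
  \item If $\Phi$ is a $k$-junta, every sampled $S$ lies inside a fixed $k$-set $T$.
  \item If $\Phi$ is $\eps$-far, I would use the known structural bound from \cite{BY25}: for every $k$-set $T$, the total influence outside $T$ is $\Omega(\eps^2)$. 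The resulting distribution over random subsets is then $\Omega(\eps^2)$-far, in the generalized support-size sense, from any distribution supported on subsets of a $k$-set.
\end{itemize}
I would then invoke the paper's new generalized support-size tester. It should use $O(k/(\eps^2\log k))$ samples by the usual Valiant--Valiant style estimator on the union of revealed coordinates, adapted to subsets. The tester is nonadaptive and ancilla-free.

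\textbf{Lower bound.} I would embed a support-size instance. Take a hard pair of distributions $D_{\mathrm{yes}},D_{\mathrm{no}}$ on $[n]$ with supports of size $\le k$ versus $\ge(1+c)k$; classically these need $\Omega(k/\log k)$ samples. Build random channels that, with rate $\Theta(\eps^2)$, apply a random-phase-masked Pauli error to a qubit drawn from $D$, and otherwise act as the identity. Channels make this easier than unitaries, since noise is allowed.
\begin{itemize}[nosep]
  \item Yes instances act only on $\supp D$, so they are $k$-juntas.
  \item No instances spread $\Omega(\eps^2)$ Choi-weight on more than $k$ qubits, which by a distance computation with \cref{def:channel-distance} makes them $\eps$-far.
  \item Averaging over the phase masks, any adaptive sequence of $q$ forward queries is exactly simulated by a classical algorithm that sees $\mathrm{Bin}(q,\Theta(\eps^2))$ samples from $D$. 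This is the "exact classicalization on average" step.
\end{itemize}
Hence $q\,\eps^2=\Omega(k/\log k)$.

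\textbf{Main obstacle.} I expect the classicalization step to be the hardest. Adaptive quantum queries with memory could correlate samples. I would first try to show that the twirl over random phases makes the averaged $q$-fold channel equal to a mixture over sampled index multisets of fixed Pauli actions, so the transcript depends on $D$ only through the samples. The range $\eps\ge 2^{-k/16}$ will then be needed to make the far-ness estimate robust.
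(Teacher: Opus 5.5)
\textbf{Lower bound.} Your construction fails the NO promise, and the remark that noise makes channels easier is backwards. As described, an instance is $\Phi=(1-\eta)I+\eta\Phi'$ with $\eta=\Theta(\eps^2)$, where $\Phi'$ is a mixture of single-qubit Pauli conjugations. A random global phase on a Kraus operator does not change the channel, so the phase mask has no effect here. Since $\Phi\mapsto v(\Phi)$ is affine and $\norm{v(\Phi')-v(I)}_2\le\sqrt2$, you get $\distC\rbra{\Phi,I}=\eta\,\distC\rbra{\Phi',I}\le\eta$. The identity channel lies in $\calC_0\subseteq\calC_k$, so every NO instance is $O(\eps^2)$-close to a $k$-junta channel rather than $\eps$-far.

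For noisy channels, influence outside every $k$-set does not certify $\distC$-distance. \cref{lem:channel-influence-certificate} runs only from distance to influence. The converse used in the paper, \cref{lem:unitary-channel-junta-separation}, requires a unitary channel (a pure Choi state).

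The missing idea is to keep every instance unitary and store the samples coherently. The paper uses $U_{f,g}=\sum_x g_x\ketbra{x}{x}\ot Z_{f(x)}$, with $f(x)$ i.i.d.\ from $p$ over $r=O(\log(k/\eps))$ address qubits and i.i.d.\ uniform $Q$th-root-of-unity phases $g_x$. The phases give every address qubit influence about $1/2$ (\cref{lem:NO_case_2}), which handles candidate supports that drop an address qubit. The condition $\eps\ge2^{-k/16}$ is what keeps $r\le k/2$; it has nothing to do with robustness of the far-ness estimate. The factor $1/\eps^2$ must then come from the label distribution itself, not from channel noise. The appendix does this by diluting the hard priors as $4\delta p+(1-4\delta)u$ with $\delta=\Theta(\eps^2)$.

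With $f$ fixed inside the oracle, classicalization is no longer one fresh sample per query, because adaptive queries can revisit addresses. You need three ingredients:
\begin{itemize}
\item the address-histogram decomposition with $Q>q$ (\cref{lem:path-sector});
\item the fact that each histogram sector depends on at most $q$ i.i.d.\ labels;
\item symmetrization together with \cref{lem:pointwise-normalization}, to obtain an acceptance rule with values in $[0,1]$.
\end{itemize}

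\textbf{Upper bound.} Your reduction to a distribution over subsets is the paper's. What is actually needed is the set-level bound $\tfrac23\Inf_{\overline T}\sbra{\Phi}\le\Pr\sbra{X\nsubseteq T}\le\Inf_{\overline T}\sbra{\Phi}$, with each qubit measured in the axis it was prepared in. However, the subset tester is only asserted, and it is where the $\log k$ saving lives. A Valiant--Valiant-style estimator does not supply it, for three reasons. Such estimators rely on a minimum-probability assumption, whereas the marginals here can be arbitrarily small. Applied directly, they do not give the clean multiplicative $1/\delta$. And one subset sample can reveal many mutually dependent elements, which destroys the variance bound.

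The paper's tester handles these issues in turn:
\begin{itemize}
\item Unions of $\ceil{1/\delta}$ samples raise the escape probability to $1/2$ (\cref{prop:set-amplification}).
\item An initial union $S$ of $O(k/\log k)$ amplified samples either exceeds $k$ elements or makes $\abs{\widetilde X\setminus S}>\ceil{\log k}$ rare (\cref{lem:set-initial-union}). Larger residuals can then be replaced by $\varnothing$ while the escape probability stays at least $1/4$ (\cref{lem:set-residual-promises}).
\item With $\ell=k-\abs{S}$, a Poissonized damped Fej\'er statistic has mean $\sum_i Q_d(a_i)$. This mean is at most $\ell$ in the YES case and at least $\ell+\ell/(1024b)$ in the NO case.
\item The variance is controlled by the Poisson Efron--Stein bound, because each truncated residual set changes the statistic by at most $2bH$.
\end{itemize}
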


The tester in~\cite{BY25} uses $O\rbra{k/\eps^2}$
channel queries.
Our tester saves a logarithmic factor while retaining product input
states and single-qubit measurements.
For a constant distance parameter, the previous lower bounds for junta unitaries
and junta channels had square-root dependence on the junta size,
up to logarithmic factors~\cite{BKT18,CNY23,BY25}.
In the forward-only unitary and channel-access models, our lower
bounds increase this to $k/\log k$ and match the new upper bounds.
The channel lower bound already holds for unitary input channels,
with distance measured against all junta channels, including noisy ones.

\Cref{tab:junta-comparison} summarizes the query complexity for a constant distance parameter.

\renewcommand{\thefootnote}{\fnsymbol{footnote}}
\begin{table}[!t]
\centering
\setlength{\tabcolsep}{12pt}
\renewcommand{\arraystretch}{1.15}
\begin{tabular}{@{}lccc@{}}
\toprule
& \begin{tabular}{@{}c@{}}Classical\\testing\end{tabular}
& \begin{tabular}{@{}c@{}}Quantum\\testing\end{tabular}
& \begin{tabular}{@{}c@{}}Inverse-free\\quantum testing\end{tabular} \\
\midrule
Boolean juntas
& \begin{tabular}{@{}c@{}}
    $O(k\log k)$\\
    \cite{Bla09}
  \end{tabular}
& \begin{tabular}{@{}c@{}}
    $O(\sqrt{k}\log k)$\\
    \cite{ABRW16}
  \end{tabular}
& \begin{tabular}{@{}c@{}}
    $O(\sqrt{k}\log k)$\\
    \cite{ABRW16}
  \end{tabular} \\
\addlinespace[0.3em]
& \begin{tabular}{@{}c@{}}
    $\Omega(k\log k)$\\
    \cite{Sag18}
  \end{tabular}
& \begin{tabular}{@{}c@{}}
    $\Omega(\sqrt{k}/\log^2 k)$\footnotemark[5]\\
    \cite{BKT18}
  \end{tabular}
& \begin{tabular}{@{}c@{}}
    $\Omega(\sqrt{k}/\log^2 k)$\footnotemark[5]\\
    \cite{BKT18}
  \end{tabular} \\
\addlinespace[0.6em]
Junta unitaries
& ---
& \begin{tabular}{@{}c@{}}
    $O(\sqrt{k}\log^{3/2}k)$\\
    \cite{CNY23}
  \end{tabular}
& \begin{tabular}{@{}c@{}}
    $O(k/\log k)$\\
    (\cref{thm:unitary-junta-upper-bound})
  \end{tabular} \\
\addlinespace[0.3em]
& ---
& \begin{tabular}{@{}c@{}}
    $\Omega(\sqrt{k}/\log^2 k)$\footnotemark[5]\\
    \cite{CNY23}
  \end{tabular}
& \begin{tabular}{@{}c@{}}
    $\Omega(k/\log k)$\\
    (\cref{thm:unitary-junta-lower-bound})
  \end{tabular} \\
\addlinespace[0.6em]
Junta channels
& ---
& \begin{tabular}{@{}c@{}}
    $O(k/\log k)$\\
    (\cref{thm:channel-junta-upper-bound})
  \end{tabular}
& \begin{tabular}{@{}c@{}}
    $O(k/\log k)$\\
    (\cref{thm:channel-junta-upper-bound})
  \end{tabular} \\
\addlinespace[0.3em]
& ---
& \begin{tabular}{@{}c@{}}
    $\Omega(k/\log k)$\\
    (\cref{thm:channel-junta-lower-bound})
  \end{tabular}
& \begin{tabular}{@{}c@{}}
    $\Omega(k/\log k)$\\
    (\cref{thm:channel-junta-lower-bound})
  \end{tabular} \\
\bottomrule
\end{tabular}
\caption{Our contributions and prior work on testing $k$-juntas.}
\label{tab:junta-comparison}
\end{table}

For both tasks, the upper bounds cover every positive junta size
up to the total number of qubits.
They hold for every positive distance parameter up to one.
The formal statements give bounds that are uniform over this range,
including bounded junta size.
The testers use nonadaptive queries with product input states and
single-qubit measurements, without ancillas.
The lower bounds allow arbitrary ancillas, adaptive
oracle-independent operations, and intermediate measurements.
The restriction on system size provides enough data qubits to
represent the classical alphabet.
The lower bound on the distance parameter ensures that the address qubits leave
a constant fraction of the allowed junta size for data qubits.
The range includes constant and inverse-polynomial values of the
distance parameter for sufficiently large junta size.
The upper bounds need neither restriction.

By treating each sampled label as a singleton set, our classical
tester also yields a support-size tester in total-variation distance
on a known finite alphabet.
For a distance parameter that is inverse polynomial in the support size,
it saves a logarithmic factor in the support size over the upper bound of
\cite{PH26}.
For a constant distance parameter, the two bounds have the same asymptotic order.
The earlier tester applies more generally to any countable domain.

\subsection{Technical overview}

\paragraph{Upper bound.}
Our upper bound reduces quantum junta testing to testing a classical
distribution over subsets of qubits.
We use the \emph{Influence-Sample} procedure from~\cite{BY25}.
The procedure chooses one of the three Pauli bases uniformly and uses
it on every qubit.
It prepares a uniformly random product eigenstate in that basis,
applies the unknown operation once, and measures every qubit in the
same basis.
The output is the set of coordinates whose measurement outcomes
differ from their input labels.

If the operation is a $k$-junta, every sample lies in the same set of
at most $k$ qubits.
If the operation is $\eps$-far from every $k$-junta, a fresh sample
contains a qubit outside any fixed set of at most $k$ qubits with
probability $\Omega\rbra{\eps^2}$.
Thus the classical task is to distinguish a distribution whose samples
all lie in a fixed set of at most $k$ elements from one whose samples
escape every such set with probability $\Omega\rbra{\eps^2}$.
\begingroup
\renewcommand{\thefootnote}{\fnsymbol{footnote}}
\footnotetext[5]{The logarithmic factors in the quantum lower bounds follow by
supplementing the proof in~\cite{BKT18} with symmetrization over symbol
counts and applying the reductions in~\cite{ABRW16,CNY23}.}
\endgroup

We amplify this escape probability to a constant by grouping
$O\rbra{1/\eps^2}$ independent samples and taking their union.
For sufficiently large $k$, we solve the resulting problem using
$O\rbra{k/\log k}$ grouped samples.
The tester avoids identifying every element that can occur.
It first forms the union $S$ of an initial batch of grouped samples
and rejects if $S$ contains more than $k$ elements.
Otherwise, let $\ell=k-|S|$ be the number of additional elements
allowed by the junta-size bound.
It remains to test whether the parts of fresh samples outside $S$
can be covered by at most $\ell$ elements.

When $\ell$ is small, a direct union test suffices.
For larger $\ell$, we remove $S$ from each fresh grouped sample and
call the resulting set a \emph{residual sample}.
We replace each residual sample exceeding a threshold of order
$\log k$ by the empty set.
With high constant probability, the initial batch either causes
rejection or ensures that this replacement preserves a constant
probability of escaping every set of at most $\ell$ elements
whenever the original distribution satisfies the escape condition.

Following polynomial methods for support-size problems~\cite{WY19,PH26},
we construct a statistic from the occurrence counts of the residual
elements using a damped Fej\'er polynomial.
Its expectation is at most $\ell$ when all residual samples lie in
a fixed set of at most $\ell$ elements.
Under the escape condition, its expectation exceeds $\ell$ by
$\Omega\rbra{\ell/\log k}$.
Poissonization, which replaces a fixed sample count by a Poisson
random count, expresses the expectation in terms of marginal
inclusion probabilities.
The bound on each residual sample controls the variance even though
elements within a sample may be dependent.
If the Poisson draw exceeds a fixed multiple of its mean, the tester
stops before taking samples.

The amplified classical test uses $O\rbra{k/\log k}$ grouped samples.
Each grouped sample uses $O\rbra{1/\eps^2}$ oracle calls, giving
$O\rbra{k/\rbra{\eps^2\log k}}$ queries for sufficiently large $k$.
The same classical test applies to junta unitaries and junta channels;
only the influence-to-distance argument differs.
The quantum experiments use nonadaptive queries, product input states,
and single-qubit measurements, without ancillas.

\paragraph{Lower bound.}
Our lower bound reduces classical support-size testing to quantum
junta testing.
Let $s$ be the classical support-size bound and $\delta$ the
distance parameter in total-variation distance.
For sufficiently large $s$ and sufficiently small $\delta$, testing
whether a distribution on an alphabet of $2s$ labels is supported on
at most $s$ labels or is $\delta$-far from every such distribution requires
$\Omega\rbra{s/\rbra{\delta\log s}}$ samples.
We prove this formulation in \cref{app:classical-hard-priors} by
adapting moment-matching constructions from~\cite{VV11,WY19}.

We encode samples from the unknown distribution $p$ over a finite
alphabet into a random unitary oracle.
The construction uses a register of \emph{address qubits} and one
\emph{data qubit} for each possible label.
Independently for every address, we draw a label from $p$.
The oracle applies a phase flip to the data qubit selected by that
address and adds an independent random phase associated with the address.
These labels and phases are sampled once and held fixed throughout
the computation.
We choose $s$ so that the address register and $s$ data qubits contain
$k$ qubits in total.
If $p$ is supported on at most $s$ labels, every resulting oracle is
therefore a $k$-junta.

Suppose instead that $p$ is $\delta$-far from every distribution
supported on at most $s$ labels.
Consider a candidate support of at most $k$ qubits.
If it contains all address qubits, it can contain at most $s$ data
qubits.
With high constant probability, the empirical label distribution
assigns mass at least a constant fraction of $\delta$ outside
every set of at most $s$ labels.
This mass equals the Pauli influence outside the corresponding
candidate support.
If the candidate omits an address qubit, the random phases give
constant influence on that qubit with high probability.
The construction ensures that, with high constant probability,
every candidate support has enough influence outside it to certify
distance from all $k$-junta unitaries.

The main quantum step converts forward queries exactly into classical
samples after averaging over the random oracle.
For an algorithm making $q$ queries, choose each phase independently
and uniformly from roots of unity of an order greater than $q$.
We expand the computation into query paths and group them by their
\emph{address histogram}, which records the number of visits to each
address.
This argument uses path and histogram expansions~\cite{TW25},
related to compressed oracles~\cite{Zha19}.
Averaging over the phases removes interference between paths with
different histograms.

Each surviving histogram depends on labels at at most $q$ distinct
addresses, and these labels are independent samples from $p$.
To obtain a classical algorithm, we must also ensure that its
acceptance probability lies between zero and one for every sample tuple.
Symmetrization over the sample order and a normalization identity
provide this guarantee.
Thus every $q$-query forward quantum algorithm has the same acceptance
probability, averaged over the random oracle, as a classical randomized
algorithm using $q$ independent samples from $p$.
The classical algorithm is independent of $p$.

Choosing $\delta=\Theta\rbra{\eps^2}$ requires
$O\rbra{\log\rbra{k/\eps}}$ address qubits.
In the parameter range of our lower bounds, this leaves
$s=\Theta\rbra{k}$ data qubits within the junta-size bound.
The classical sample lower bound therefore gives
$\Omega\rbra{k/\rbra{\eps^2\log k}}$ forward queries.
The same influence bounds also certify that the induced channels
are far from every $k$-junta channel, including nonunitary channels.
Each query to an induced channel can be implemented by one forward
query to its underlying unitary.
The same simulation therefore gives the channel lower bound.

Forward access is essential to this histogram argument.
Each forward query adds one to an address's phase exponent.
With inverse queries, positive and negative contributions can cancel.
A zero net phase exponent therefore need not mean that the computation
is independent of that address's label.
The histogram argument no longer gives the same classical simulation.

\subsection{Related work}

Quantum property testing covers classical inputs~\cite{BFNR08,CFMdW10}
and quantum objects~\cite{MdW16}.
Quantum algorithms also test classical distributions~\cite{BHH11,GL20}.
State testing addresses identity~\cite{BOW19}, spectral properties~\cite{OW21},
product structure~\cite{HM13}, and matrix product structure with bounded
bond dimension~\cite{SW22}.
Stabilizer testing has exact-membership~\cite{GNW21} and
tolerant~\cite{AD25,BvDH25,MT25} variants.
Operation testing addresses Clifford membership~\cite{Wan11,HBD+26},
measurement locality~\cite{Wan12}, and channel certification~\cite{FFGO23,RAS+24}.
Hamiltonian testing covers locality~\cite{BCO24,KL25},
sparsity~\cite{ADEG25}, and certification~\cite{GJW+25}.
Measurement restrictions~\cite{BCL20,CHLL22,HH25} and channel
access~\cite{CWZ26} also affect complexity.\par

Dimension-independent testing is known for dictatorships~\cite{PRS02}
and general Boolean juntas~\cite{FKR+04}.
For Boolean juntas, adaptive~\cite{Bla09,CG04,Sag18} and
nonadaptive~\cite{Bla08,CSTWX17} query complexities can differ~\cite{STW15}.
Quantum algorithms and lower bounds are also known~\cite{AS07,ABRW16,BKT18}.
Related work considers learning from uniform examples~\cite{MOS03},
quantum identification of juntas with a prescribed symmetric
function~\cite{Bel15}, distribution-free testing with
classical~\cite{LCSSX18,Bsh19,Zhang19DF} or quantum
queries~\cite{Bel19}, and relative-error testing~\cite{CPPS25}.
Junta distributions have been studied through samples~\cite{ABR16,BEG24},
subcube conditioning~\cite{CJLW21}, and connections to learning
parities with noise~\cite{Ber25}.

Tolerant testing distinguishes inputs close to the property from inputs
sufficiently far from it~\cite{PRR06}.
For Boolean juntas, the size bound may differ between these two
conditions~\cite{BCELR19}.
For a common size bound, results address
nonadaptive complexity~\cite{LW19,PRW22,CDLNS24,NP24},
dimension-independent testing~\cite{DMN19,ITW21,CPS26},
adaptive lower bounds~\cite{CP23}, quantum algorithms~\cite{BLY+25},
and quantum advantages~\cite{TY26}.

Pauli analysis and influence for quantum Boolean functions~\cite{MO10}
underlie testers for junta unitaries~\cite{Wan11,CNY23} and
junta channels~\cite{BY25}.
We improve the classical decision rule of the product-state
Influence-Sample procedure in~\cite{BY25}.
Related work studies junta approximation~\cite{RWZ24},
junta states~\cite{BEG24}, agnostic process tomography~\cite{WLKD25},
experimental influence sampling~\cite{ZBY+25}, and tolerant
testing of junta unitaries~\cite{CLL24,ADEG25,BLY+25}.
We study exact-membership testing of unitaries and channels.

Our forward-query lower bound adapts path and histogram
expansions~\cite{TW25}, related to compressed oracles~\cite{Zha19} and
multiset representations of random diagonal queries~\cite{FLMNW26}.
Other work develops approximate simulations in different oracle
models~\cite{Zha12,TWZ25,GZ25}, polynomial and invariant-theoretic
lower bounds~\cite{SY23}, sample-to-query lifting~\cite{WZ25},
and control removal~\cite{TW25control}.

Classical support-size estimation has lower bounds from moment
methods~\cite{RRSS09} and optimal sample bounds under a
minimum-probability assumption~\cite{VV11,WY19}.
The total-variation tester of~\cite{PH26} permits arbitrarily small
nonzero probabilities on countable domains.
Our singleton-set tester improves their bound by up to a logarithmic
factor on a known finite alphabet, with the improvement depending on
the distance parameter.

\paragraph{Organization.}
\Cref{sec:preliminaries} develops Pauli analysis, influence bounds,
and the testing models.
\Cref{sec:upper-bounds,sec:lower-bounds} prove the upper and lower
bounds, respectively.
\Cref{app:classical-hard-priors} proves the classical support-size
lower bound used in the reduction.

\section{Preliminaries}\label{sec:preliminaries}

We develop Pauli analysis for operators and channels, define the junta
classes and distances, and prove the influence bounds used in both
reductions. We then formulate the testing problems and specify the
quantum query and classical sampling models.

\subsection{Notation}\label{subsec:notation}

For a positive integer $m$, write $\sbra{m}=\cbra{1,\ldots,m}$.
For a finite set $S$, its cardinality is denoted by $\abs{S}$.

For a vector $x=\rbra{x_1,\ldots,x_m}$, write
$\supp\rbra{x}=\cbra{i\in\sbra{m}\colon x_i\neq0}$.
For $x\in\C^m$, write
$\norm{x}_1=\sum_{i=1}^m\abs{x_i}$.
For $T\subseteq\sbra{m}$, let $x_T$ denote the restriction of $x$
to coordinates in $T$, ordered by their indices.

For a finite set $\Omega$, let $\Delta\rbra{\Omega}$
denote the set of probability distributions on $\Omega$.
We use $\Pr$ and $\Ex$ for probability and expectation,
respectively.
For $p\in\Delta\rbra{\sbra{M}}$ and $S\subseteq\sbra{M}$, write
$p\rbra{S}=\sum_{j\in S}p\rbra{j}$ and
$\supp\rbra{p}=\cbra{j\in\sbra{M}\colon p\rbra{j}>0}$.
The total-variation distance between $p, \mu \in \Delta\rbra{\sbra{M}}$ is
\begin{equation*}
  \dtv{p}{\mu}
  = \frac{1}{2}\sum_{j=1}^M \abs*{p\rbra*{j} - \mu\rbra*{j} } = \max_{S \subseteq \sbra*{M}} \abs*{ p\rbra*{S} - \mu\rbra*{S} }.
\end{equation*}
For an integer $s\geq1$, the largest probability mass
carried by at most $s$ labels is
\begin{equation*}
  \mass_s\rbra*{p} = \max_{S \subseteq \sbra*{M} \colon \abs*{S} \le s} p\rbra*{S}.
\end{equation*}
The total-variation distance from $p$ to distributions supported on at
most $s$ labels equals the mass outside a set attaining
$\mass_s\rbra*{p}$:
\begin{equation} \label{eq:support-tv}
  \inf_{\mu \in \Delta\rbra*{\sbra*{M}} \colon \abs*{ \supp\rbra*{ \mu } } \le s} \dtv{p}{\mu} = 1 - \mass_s\rbra*{p}.
\end{equation}
Indeed, if $\mu$ is supported on $S$ with $\abs{S}\leq s$,
then $\dtv{p}{\mu}
\geq p\rbra{\sbra{M}\setminus S}
\geq1-\mass_s\rbra{p}$.
Conversely, choose $S$ attaining $\mass_s\rbra{p}$.
Since $s\geq1$, $p\rbra{S}>0$.
Let $\mu$ be $p$ conditioned on $S$.
For $j\in S$, $\mu\rbra{j}\geq p\rbra{j}$, so the
total-variation formula gives
$\dtv{p}{\mu}=p\rbra{\sbra{M}\setminus S}
=1-\mass_s\rbra{p}$, proving~\cref{eq:support-tv}.

We write $\log$ for the base-two logarithm and $\ln$ for the natural logarithm.
Unless stated otherwise, dot products of binary vectors are taken modulo two.

\subsection{Pauli analysis}\label{subsec:operators-channels}

An $n$-qubit system has Hilbert space $\calH=\C^N$, where $N=2^n$.
For a finite-dimensional Hilbert space $\calK$, let
$L\rbra{\calK}$ denote the space of linear operators on $\calK$.
We write $\calU_N$ for the group of unitary operators on $\calH$
and $I_N$ for the identity operator.
We also use $I$ for the identity channel on an operator space.
A register subscript specifies where an operator or channel acts.
We omit it when the space is clear.

For a complex number $z$, write $\overline z$ for its complex conjugate.
For an operator $A$, write $A^*$ for its adjoint and
$\Tr\rbra{A}$ for its trace.
We use the Hilbert--Schmidt norm
$\norm{A}_2=\sqrt{\Tr\rbra{A^*A}}$
and the trace norm
$\norm{A}_1=\Tr\rbra*{\sqrt{A^*A}}$.
The operator norm $\norm{A}_\infty$ is the largest
singular value of $A$.
These norms are unnormalized.
A density operator is an operator $\rho\succeq0$ with
$\Tr\rbra{\rho}=1$.

\paragraph{Operators.}

Let
\begin{equation*}
  \sigma_0=I_2,\qquad \sigma_1=X,\qquad
  \sigma_2=Y,\qquad \sigma_3=Z
\end{equation*}
be the single-qubit Pauli matrices. Write $\Z_4=\cbra{0,1,2,3}$.
For a label $x=\rbra{x_1,\ldots,x_n}\in\Z_4^n$, define the corresponding Pauli string
by
\begin{equation*}
  \sigma_x=\sigma_{x_1}\ot\cdots\ot\sigma_{x_n}.
\end{equation*}
These strings form the phase-free Pauli basis
$\cbra{I,X,Y,Z}^{\ot n}$ used throughout the paper.
For a Pauli label $x\in\Z_4^n$, $\supp\rbra{x}$ records the qubits
on which $\sigma_x$ acts nontrivially. For $T\subseteq\sbra{n}$,
$x_T\in\Z_4^T$ and
$x_T\neq0$ means that $\supp\rbra{x}\cap T\neq\varnothing$.
We use $\Z_4^n$ only to label Pauli strings.

We equip $L\rbra{\calH}$ with the unnormalized
Hilbert--Schmidt inner product
\begin{equation*}
  \langle A,B\rangle
  =\Tr\rbra*{A^*B}.
\end{equation*}
The $N^2$ Pauli strings are orthogonal and satisfy
\begin{equation*}
  \langle\sigma_x,\sigma_y\rangle=
  \begin{cases}
    N, & x=y,\\
    0, & x\neq y.
  \end{cases}
\end{equation*}
Thus every $A\in L\rbra{\calH}$ has the Pauli expansion
\begin{equation*}
  A=\sum_{x\in\Z_4^n}\wh A\rbra*{x}\sigma_x,
  \qquad
  \wh A\rbra*{x}=\frac1N\langle\sigma_x,A\rangle
  =\frac1N\Tr\rbra*{\sigma_x^*A}.
\end{equation*}
We call $\wh A\rbra{x}$ the Pauli coefficient of $A$ at $x$.

Fix a computational basis
$\cbra{|a\rangle:a\in\sbra{N}}$ of $\calH$.
For $A\in L\rbra{\calH}$, define its vectorization by
\begin{equation}
\label{eq:vectorization}
  \operatorname{vec}\rbra*{A}
  =\rbra*{A\ot I_N}
    \sum_{a=1}^N|a\rangle\ot|a\rangle.
\end{equation}
For any $A,B\in L\rbra{\calH}$, direct expansion gives
\begin{equation}
\label{eq:vectorization-inner-product}
  \operatorname{vec}\rbra*{A}^*\operatorname{vec}\rbra*{B}
  =\Tr\rbra*{A^*B}
  =\langle A,B\rangle.
\end{equation}
For a unitary $U\in\calU_N$, define
\begin{equation*}
  |v\rbra*{U}\rangle
  =\frac1{\sqrt N}\operatorname{vec}\rbra*{U}.
\end{equation*}
This vector has unit norm.
For $U=I_N$, it is the maximally entangled state
\begin{equation*}
  |v\rbra*{I_N}\rangle
  =\frac1{\sqrt N}\sum_{a=1}^N|a\rangle\ot|a\rangle.
\end{equation*}

By \cref{eq:vectorization-inner-product} and Pauli orthogonality,
the Pauli vectors
$\cbra{|v\rbra{\sigma_x}\rangle:x\in\Z_4^n}$
form an orthonormal basis of $\calH\ot\calH$.
Applying vectorization to the Pauli expansion of $U$ gives
\begin{equation*}
  |v\rbra*{U}\rangle
  =\sum_{x\in\Z_4^n}\wh U\rbra*{x}
    |v\rbra*{\sigma_x}\rangle.
\end{equation*}
Since $|v\rbra{U}\rangle$ has unit norm,
\begin{equation}
\label{eq:unitary-pauli-mass}
  \sum_{x\in\Z_4^n}\abs{\wh U\rbra{x}}^2=1.
\end{equation}
Thus $x\mapsto\abs{\wh U\rbra{x}}^2$ defines a
probability distribution on Pauli labels.
Taking the support of a sampled label gives a distribution
on subsets of $\sbra{n}$.

\paragraph{Superoperators.}

A superoperator on $L\rbra{\calH}$ is a linear map
from $L\rbra{\calH}$ to itself.
In the Choi representation below, the first tensor factor is the
output space and the second is a reference copy of $\calH$.

\begin{definition}[Choi matrix]
\label{def:choi-representation}
For a superoperator $\Phi$ on $L\rbra{\calH}$, define its Choi matrix by
\begin{equation*}
  J\rbra*{\Phi}=
  \sum_{a,b\in\sbra{N}}
    \Phi\rbra*{|a\rangle\langle b|}
    \ot|a\rangle\langle b|.
\end{equation*}
\end{definition}

\begin{definition}[Pauli coefficients of a superoperator]
\label{def:channel-fourier-coefficients}
For a superoperator $\Phi$ on $L\rbra{\calH}$ and
$x,y\in\Z_4^n$, define
\begin{equation*}
  \wh\Phi\rbra*{x,y}=
  \frac1N
  \langle v\rbra*{\sigma_x}|
  J\rbra*{\Phi}|v\rbra*{\sigma_y}\rangle.
\end{equation*}
We write $\wh\Phi$ for the matrix with these entries,
using the same fixed order on $\Z_4^n$ for its rows and columns.
\end{definition}

\begin{proposition}[Pauli expansion of a superoperator {\cite[Definition~7]{BY25}}]
\label{prop:channel-fourier-representation}
For every superoperator $\Phi$ on $L\rbra{\calH}$ and
every $A\in L\rbra{\calH}$,
\begin{equation}
\label{eq:channel-fourier-expansion}
  \Phi\rbra*{A}=\sum_{x,y\in\Z_4^n}\wh\Phi\rbra*{x,y}\sigma_x A\sigma_y.
\end{equation}
\end{proposition}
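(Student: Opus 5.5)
The plan is to prove \cref{eq:channel-fourier-expansion} by a change of basis on the Choi matrix. Since both sides are linear in $\Phi$ and in $A$, it suffices to compare them through their Choi matrices. The map $\Phi\mapsto J\rbra{\Phi}$ is a linear bijection between superoperators on $L\rbra{\calH}$ and $L\rbra{\calH\ot\calH}$. Two superoperators are therefore equal exactly when their Choi matrices agree.

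The first step is to compute the Choi matrix of an elementary map $\Psi_{x,y}\colon A\mapsto\sigma_xA\sigma_y$. By \cref{def:choi-representation},
\begin{equation*}
  J\rbra*{\Psi_{x,y}}
  =\sum_{a,b}\sigma_x|a\rangle\langle b|\sigma_y\ot|a\rangle\langle b|
  =\rbra*{\sigma_x\ot I_N}\Big(\sum_a|a\rangle\ot|a\rangle\Big)\Big(\sum_b\langle b|\ot\langle b|\Big)\rbra*{\sigma_y\ot I_N}.
\end{equation*}
By \cref{eq:vectorization}, this equals $\operatorname{vec}\rbra{\sigma_x}\operatorname{vec}\rbra{\sigma_y^*}^*$. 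Because Pauli strings are Hermitian, $\sigma_y^*=\sigma_y$. Hence
\begin{equation*}
  J\rbra*{\Psi_{x,y}}=N\,|v\rbra*{\sigma_x}\rangle\langle v\rbra*{\sigma_y}|.
\end{equation*}

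The second step expands $J\rbra{\Phi}$ in the orthonormal basis $\cbra{|v\rbra{\sigma_x}\rangle}$ of $\calH\ot\calH$, which is orthonormal by \cref{eq:vectorization-inner-product} and Pauli orthogonality. This gives
\begin{equation*}
  J\rbra*{\Phi}=\sum_{x,y}\langle v\rbra{\sigma_x}|J\rbra{\Phi}|v\rbra{\sigma_y}\rangle\,|v\rbra{\sigma_x}\rangle\langle v\rbra{\sigma_y}|
  =\sum_{x,y}N\wh\Phi\rbra*{x,y}\,|v\rbra{\sigma_x}\rangle\langle v\rbra{\sigma_y}|,
\end{equation*}
using \cref{def:channel-fourier-coefficients} for the coefficients. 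Comparing with the first step,
\begin{equation*}
  J\rbra*{\Phi}=\sum_{x,y}\wh\Phi\rbra*{x,y}\,J\rbra*{\Psi_{x,y}}=J\Big(\sum_{x,y}\wh\Phi\rbra*{x,y}\Psi_{x,y}\Big).
\end{equation*}

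The final step applies injectivity of the Choi map, which yields \cref{eq:channel-fourier-expansion}. Injectivity holds because $\Phi\rbra{|a\rangle\langle b|}$ can be read off from $J\rbra{\Phi}$ as the $\rbra{a,b}$ block in the second factor, and the operators $|a\rangle\langle b|$ span $L\rbra{\calH}$.

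The only delicate point is bookkeeping in the first step. One must check that the bra side produces $\operatorname{vec}\rbra{\sigma_y}^*$ with the correct ordering of the output and reference factors. This relies on $\rbra*{\sum_b\langle b|\ot\langle b|}\rbra{\sigma_y\ot I}=\rbra*{\rbra{\sigma_y^*\ot I}\sum_b|b\rangle\ot|b\rangle}^*$ together with Hermiticity of $\sigma_y$. No other obstacle is expected.
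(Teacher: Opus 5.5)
Your proposal is correct and follows the paper's proof. Both expand $J(\Phi)$ in the orthonormal Pauli-vector basis, identify $N|v(\sigma_x)\rangle\langle v(\sigma_y)|$ as the Choi matrix of $A\mapsto\sigma_xA\sigma_y$, and conclude by injectivity of the Choi map; your explicit check that the bra side yields $\operatorname{vec}(\sigma_y^*)^*=\operatorname{vec}(\sigma_y)^*$ by Hermiticity spells out a step the paper leaves implicit.
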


\begin{proof}
In the Pauli-vector basis,
\begin{equation*}
  J\rbra*{\Phi}
  =N\sum_{x,y\in\Z_4^n}\wh\Phi\rbra*{x,y}
    |v\rbra*{\sigma_x}\rangle\langle v\rbra*{\sigma_y}|.
\end{equation*}
The map $A\mapsto\sigma_x A\sigma_y$ has Choi matrix
$\operatorname{vec}\rbra{\sigma_x}\operatorname{vec}\rbra{\sigma_y}^*
=N|v\rbra{\sigma_x}\rangle\langle v\rbra{\sigma_y}|$.
The Choi matrix determines the map, so comparing Choi matrices proves
\cref{eq:channel-fourier-expansion}.
\end{proof}

A quantum channel is a completely positive trace-preserving
(CPTP) superoperator.
We write $C\rbra{\calH}$ for the set of such channels.
Every $\Phi\in C\rbra{\calH}$ has a finite Kraus
representation~\cite[Sec.~2.2]{Wat18}
with operators $K_j\in L\rbra{\calH}$:
\begin{equation*}
  \Phi\rbra*{A}=\sum_jK_jAK_j^*,
  \qquad \sum_jK_j^*K_j=I_N.
\end{equation*}
The first identity holds for every $A\in L\rbra{\calH}$.
The second is the trace-preservation condition.
For $U\in\calU_N$, the single Kraus operator $U$ defines
the unitary channel
\begin{equation}
\label{eq:unitary-channel}
  \Phi_U\rbra*{A}=UAU^*.
\end{equation}

For a channel $\Phi\in C\rbra{\calH}$, its Choi state is
\begin{equation}
\label{eq:choi-state}
  v\rbra*{\Phi}=\frac{J\rbra*{\Phi}}{N}
  =\rbra*{\Phi\ot I}
    \rbra*{|v\rbra*{I_N}\rangle\langle v\rbra*{I_N}|}.
\end{equation}
Complete positivity gives $J\rbra{\Phi}\succeq0$.
Trace preservation gives $\Tr\rbra*{J\rbra{\Phi}}=N$.
Hence $v\rbra{\Phi}$ is a density operator.
By \cref{eq:choi-state,eq:vectorization},
the Kraus representation gives
\begin{equation}
\label{eq:choi-kraus-vectors}
  v\rbra*{\Phi}
  =\frac1N\sum_j
    \operatorname{vec}\rbra*{K_j}
    \operatorname{vec}\rbra*{K_j}^*.
\end{equation}
Here the vectors $\operatorname{vec}\rbra{K_j}$ need not have unit norm.
For a unitary channel,
$v\rbra{\Phi_U}=|v\rbra{U}\rangle\langle v\rbra{U}|$ is pure.

For a channel $\Phi$ with Kraus operators $\rbra{K_j}_j$, its
Pauli coefficients satisfy
\begin{equation}
\label{eq:channel-fourier-kraus}
  \wh\Phi\rbra*{x,y}
  =\sum_j\wh K_j\rbra*{x}\,\overline{\wh K_j\rbra*{y}},
  \qquad x,y\in\Z_4^n.
\end{equation}
This follows from \cref{eq:choi-kraus-vectors} and
$\langle v\rbra{\sigma_x}|\operatorname{vec}\rbra{K_j}
=\sqrt N\wh K_j\rbra{x}$.

\begin{fact}[Pauli coefficient matrix {\cite[Lemma~8]{BY25}}]
For every channel $\Phi\in C\rbra{\calH}$,
\begin{equation}
\label{eq:channel-fourier-normalization}
  \wh\Phi\succeq0,
  \qquad\Tr\rbra{\wh\Phi}
    =\sum_{x\in\Z_4^n}\wh\Phi\rbra*{x,x}=1.
\end{equation}
\end{fact}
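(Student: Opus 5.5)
The plan is to derive both claims from the Kraus formula \cref{eq:channel-fourier-kraus}, which expresses the coefficient matrix as a sum of rank-one matrices. Fix a Kraus representation $\rbra{K_j}_j$ of $\Phi$. For each $j$, let $w_j\in\C^{\Z_4^n}$ be the column vector with entries $w_j\rbra{x}=\wh K_j\rbra{x}$, indexed in the same fixed order on $\Z_4^n$ used for the rows and columns of $\wh\Phi$. Then \cref{eq:channel-fourier-kraus} states exactly that
\begin{equation*}
  \wh\Phi=\sum_j w_jw_j^*.
\end{equation*}
This is a sum of rank-one positive semidefinite matrices, so $\wh\Phi\succeq0$.

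As a cross-check, positivity also follows directly from \cref{def:channel-fourier-coefficients}. There $\wh\Phi$ is $\frac1N$ times the matrix of $J\rbra{\Phi}$ in the orthonormal basis $\cbra{|v\rbra{\sigma_x}\rangle}$, and $J\rbra{\Phi}\succeq0$ by complete positivity.

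For the trace, take diagonal entries: $\Tr\rbra{\wh\Phi}=\sum_j\sum_x\abs{\wh K_j\rbra{x}}^2$. Expanding $K_j=\sum_x\wh K_j\rbra{x}\sigma_x$ and using Pauli orthogonality gives
\begin{equation*}
  \Tr\rbra*{K_j^*K_j}=N\sum_x\abs{\wh K_j\rbra{x}}^2.
\end{equation*}
Summing over $j$ and applying trace preservation $\sum_jK_j^*K_j=I_N$ yields $N\Tr\rbra{\wh\Phi}=\Tr\rbra{I_N}=N$, so $\Tr\rbra{\wh\Phi}=1$.

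The trace can alternatively be obtained as basis-independence of the trace: $\Tr\rbra{\wh\Phi}=\frac1N\Tr\rbra{J\rbra{\Phi}}=1$, using $\Tr\rbra{J\rbra{\Phi}}=N$ as already noted after \cref{eq:choi-state}.

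The argument is routine. The only point requiring care is that the rows and columns of $\wh\Phi$ are indexed consistently with the vectors $w_j$. With that fixed, the entry $\wh\Phi\rbra{x,y}=\sum_j w_j\rbra{x}\overline{w_j\rbra{y}}$ matches $\rbra{w_jw_j^*}_{x,y}$, so the placement of the conjugation is correct.
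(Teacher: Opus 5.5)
Your proof is correct. The cross-check you give is exactly the paper's proof: $\wh\Phi$ is the matrix of the density operator $v\rbra{\Phi}=J\rbra{\Phi}/N$ in the orthonormal basis $\cbra{|v\rbra{\sigma_x}\rangle}$, so $\wh\Phi\succeq0$ and $\Tr\rbra{\wh\Phi}=1$ follow at once from complete positivity and trace preservation.

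Your main route is different. You use \cref{eq:channel-fourier-kraus} to write $\wh\Phi=\sum_j w_jw_j^*$, and you get the trace from Parseval together with $\sum_jK_j^*K_j=I_N$. This takes a few more lines than the paper's one-line basis-change argument. In return, it makes the rank-one structure of $\wh\Phi$ explicit; the paper later uses that structure in the support characterization of junta channels. It is also not circular, since \cref{eq:channel-fourier-kraus} was derived from \cref{eq:choi-kraus-vectors} before the fact is stated.
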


\begin{proof}
The matrix $\wh\Phi$ represents the density operator $v\rbra{\Phi}$ in
an orthonormal basis, so it is positive semidefinite with trace one.
\end{proof}

Thus $x\mapsto\wh\Phi\rbra{x,x}$ defines a probability distribution
on Pauli labels.

For a unitary channel $\Phi_U$, the single Kraus operator $U$ gives
\begin{equation}
\label{eq:unitary-channel-fourier}
  \wh{\Phi_U}\rbra*{x,y}
  =\wh U\rbra*{x}\,\overline{\wh U\rbra*{y}},
  \qquad \wh{\Phi_U}\rbra*{x,x}=\abs{\wh U\rbra*{x}}^2.
\end{equation}

Pauli orthogonality and the matrix representation of $J\rbra{\Phi}/N$
in the orthonormal Pauli-vector basis give the following identities.

\begin{fact}[Plancherel identity]
\label{fact:pauli-plancherel}
For all operators $A,B\in L\rbra{\calH}$ and superoperators
$\Phi,\Psi$ on $L\rbra{\calH}$,
\begin{equation*}
  \frac1N\langle A,B\rangle
  =\sum_{x\in\Z_4^n}
    \overline{\wh A\rbra{x}}\,\wh B\rbra{x},
  \qquad
  \frac1{N^2}\langle J\rbra*{\Phi},J\rbra*{\Psi}\rangle
  =\sum_{x,y\in\Z_4^n}
    \overline{\wh\Phi\rbra{x,y}}\,\wh\Psi\rbra{x,y}.
\end{equation*}
\end{fact}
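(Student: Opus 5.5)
The first identity follows by expanding both operators in the Pauli basis and using orthogonality. Write $A=\sum_x\wh A\rbra{x}\sigma_x$ and $B=\sum_y\wh B\rbra{y}\sigma_y$. Sesquilinearity of $\langle\cdot,\cdot\rangle$ (conjugate-linear in the first slot) gives
\begin{equation*}
  \langle A,B\rangle=\sum_{x,y\in\Z_4^n}\overline{\wh A\rbra{x}}\,\wh B\rbra{y}\,\langle\sigma_x,\sigma_y\rangle .
\end{equation*}
Since $\langle\sigma_x,\sigma_y\rangle=N$ if $x=y$ and $0$ otherwise, only the diagonal terms survive, and dividing by $N$ yields the claim.

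For the second identity, we use the expansion of the Choi matrix in the Pauli-vector basis established in the proof of \cref{prop:channel-fourier-representation}:
\begin{equation*}
  J\rbra{\Phi}=N\sum_{x,y}\wh\Phi\rbra{x,y}\,|v\rbra{\sigma_x}\rangle\langle v\rbra{\sigma_y}|,
\end{equation*}
and likewise for $\Psi$. This expansion is valid because $\cbra{|v\rbra{\sigma_x}\rangle}$ is an orthonormal basis of $\calH\ot\calH$, by \cref{eq:vectorization-inner-product} and Pauli orthogonality. The coefficient of $|v\rbra{\sigma_x}\rangle\langle v\rbra{\sigma_y}|$ is then exactly $\langle v\rbra{\sigma_x}|J\rbra{\Phi}|v\rbra{\sigma_y}\rangle=N\wh\Phi\rbra{x,y}$, by \cref{def:channel-fourier-coefficients}.

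The rank-one operators $E_{xy}=|v\rbra{\sigma_x}\rangle\langle v\rbra{\sigma_y}|$ are orthonormal under the Hilbert--Schmidt inner product on $L\rbra{\calH\ot\calH}$. Indeed,
\begin{equation*}
  \Tr\rbra{E_{xy}^*E_{x'y'}}=\langle v\rbra{\sigma_x}|v\rbra{\sigma_{x'}}\rangle\,\langle v\rbra{\sigma_{y'}}|v\rbra{\sigma_y}\rangle=\delta_{xx'}\delta_{yy'} .
\end{equation*}
Expanding $\langle J\rbra{\Phi},J\rbra{\Psi}\rangle$ in this basis therefore gives $N^2\sum_{x,y}\overline{\wh\Phi\rbra{x,y}}\,\wh\Psi\rbra{x,y}$, and dividing by $N^2$ proves the second identity.

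There is no real obstacle here. The only points requiring care are placing the complex conjugate on the first argument, consistent with the convention $\langle A,B\rangle=\Tr\rbra{A^*B}$, and confirming that the Choi matrix expansion holds for arbitrary superoperators, not only channels. It does, since it is simply a change of basis.
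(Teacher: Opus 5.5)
Your proof is correct and follows the paper's own justification. The paper gives no separate proof of this fact; it cites Pauli orthogonality for the operator identity, and the matrix representation of $J\rbra{\Phi}/N$ in the orthonormal Pauli-vector basis for the superoperator identity. Your check that the rank-one operators $|v\rbra{\sigma_x}\rangle\langle v\rbra{\sigma_y}|$ are Hilbert--Schmidt orthonormal is the one step the paper leaves implicit, and you handle it correctly.
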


\begin{fact}[Parseval identity]
\label{fact:pauli-parseval}
For every operator $A\in L\rbra{\calH}$ and superoperator $\Phi$
on $L\rbra{\calH}$,
\begin{equation*}
  \frac1N\norm{A}_2^2
  =\sum_{x\in\Z_4^n}\abs{\wh A\rbra{x}}^2,
  \qquad
  \frac1{N^2}\norm{J\rbra*{\Phi}}_2^2
  =\sum_{x,y\in\Z_4^n}\abs{\wh\Phi\rbra{x,y}}^2.
\end{equation*}
\end{fact}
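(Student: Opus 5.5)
The operator half of the Parseval identity is the special case $B=A$ of the operator half of \cref{fact:pauli-plancherel}, and the superoperator half is the special case $\Psi=\Phi$. So the plan is to prove the Plancherel identity first and then specialize.

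For operators, I would expand both $A$ and $B$ in the Pauli basis, $A=\sum_x\wh A(x)\sigma_x$ and $B=\sum_y\wh B(y)\sigma_y$. Sesquilinearity of $\langle\cdot,\cdot\rangle$ then gives
\begin{equation*}
  \langle A,B\rangle=\sum_{x,y}\overline{\wh A(x)}\,\wh B(y)\,\langle\sigma_x,\sigma_y\rangle .
\end{equation*}
The stated orthogonality relation $\langle\sigma_x,\sigma_y\rangle=N\,[x=y]$ collapses this to $N\sum_x\overline{\wh A(x)}\wh B(x)$. Dividing by $N$ yields the first identity, and taking $B=A$ gives $\frac1N\norm{A}_2^2=\sum_x\abs{\wh A(x)}^2$.

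For superoperators, the proof of \cref{prop:channel-fourier-representation} already writes the Choi matrix in the Pauli-vector basis:
\begin{equation*}
  J(\Phi)=N\sum_{x,y}\wh\Phi(x,y)\,|v(\sigma_x)\rangle\langle v(\sigma_y)|,
\end{equation*}
and similarly for $\Psi$. The vectors $|v(\sigma_x)\rangle$ form an orthonormal basis of $\calH\ot\calH$. Hence the rank-one operators $|v(\sigma_x)\rangle\langle v(\sigma_y)|$ are orthonormal for the Hilbert--Schmidt inner product, since
\begin{equation*}
  \Tr\bigl(|v(\sigma_y)\rangle\langle v(\sigma_x)|v(\sigma_{x'})\rangle\langle v(\sigma_{y'})|\bigr)=[x=x'][y=y'] .
\end{equation*}
Expanding $\langle J(\Phi),J(\Psi)\rangle$ sesquilinearly therefore gives $N^2\sum_{x,y}\overline{\wh\Phi(x,y)}\,\wh\Psi(x,y)$. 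Dividing by $N^2$ yields the second Plancherel identity, and setting $\Psi=\Phi$ gives the second Parseval identity.

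Nothing here is a real obstacle. The only point needing care is that the orthonormality of $\{|v(\sigma_x)\rangle\}$ follows from \cref{eq:vectorization-inner-product} together with Pauli orthogonality: $\langle v(\sigma_x)|v(\sigma_y)\rangle=\frac1N\langle\sigma_x,\sigma_y\rangle=[x=y]$. The $1/N$ normalization in \cref{def:channel-fourier-coefficients} is what makes the factor come out as exactly $N^2$.
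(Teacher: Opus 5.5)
Your proposal is correct and follows essentially the same route as the paper. The paper justifies this fact in one line, by Pauli orthogonality for operators and by the matrix representation of $J(\Phi)/N$ in the orthonormal Pauli-vector basis for superoperators; you spell out exactly that argument, first proving Plancherel and then setting $B=A$ and $\Psi=\Phi$.
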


\subsection{Quantum junta classes}
\label{subsec:quantum-juntas-distance}

For $T\subseteq\sbra{n}$, write $\overline{T}=\sbra{n}\setminus T$.
Let $\calH_T$ be the tensor product of the single-qubit spaces
indexed by $T$, in increasing order.
Set $\calH_\varnothing=\C$.
We write $V_T\ot I_{\overline{T}}$ for the operator that acts as
$V_T$ on $T$ and as the identity elsewhere, with tensor factors in the
original qubit order. The same convention applies to channels.

We first define junta unitaries and a distance that ignores global phase.

\begin{definition}[Junta unitaries {\cite[Definition~2.2]{CNY23}}]
\label{def:quantum-juntas}
For $T\subseteq\sbra{n}$, define the class of $T$-junta unitaries by
\begin{equation*}
  \calJ_T^{\rbra*{n}}
  =\cbra*{V_T\ot I_{\overline{T}}:
      V_T\in\calU_{2^{\abs{T}}}}.
\end{equation*}
For an integer $0\leq k\leq n$, define the class of
$n$-qubit $k$-junta unitaries by
\begin{equation*}
  \calJ_k^{\rbra*{n}}
  =\bigcup_{\substack{T\subseteq\sbra*{n}\\\abs{T}\leq k}}
      \calJ_T^{\rbra*{n}}.
\end{equation*}
\end{definition}

\begin{definition}[Unitary distance
  {\cite[Definition~2.3]{CNY23}}]
\label{def:unitary-distance}
For $U,V\in\calU_N$, define the phase-insensitive
normalized Hilbert--Schmidt distance by
\begin{equation}
\label{eq:unitary-distance}
  \distU\rbra*{U,V}
  =\min_{0\leq\theta<2\pi}
  \frac{1}{\sqrt{2N}}\norm{e^{i\theta}U-V}_2.
\end{equation}
For a nonempty set $\calS\subseteq\calU_N$, define
\begin{equation*}
  \distU\rbra*{U,\calS}
  =\inf_{V\in\calS}\distU\rbra*{U,V}.
\end{equation*}
\end{definition}

Expanding the squared norm in \cref{eq:unitary-distance}
and minimizing over the global phase gives
\begin{equation}
\label{eq:unitary-distance-overlap}
  \distU\rbra*{U,V}^2
  =1-\frac{\abs{\langle U,V\rangle}}{N}
  =1-\frac{\abs{\Tr\rbra*{U^*V}}}{N}.
\end{equation}
By Cauchy--Schwarz, $\abs{\langle U,V\rangle}\leq N$,
so $0\leq\distU\rbra{U,V}\leq1$.

We next define junta channels and a distance based on their Choi states.

\begin{definition}[Junta channels~\cite{BY25}]
\label{def:junta-channels}
For $T\subseteq\sbra{n}$, define the class of $T$-junta channels by
\begin{equation*}
  \calC_T^{\rbra*{n}}
  =\cbra*{\Psi_T\ot I_{\overline{T}}:
    \Psi_T\in C\rbra*{\calH_T}}.
\end{equation*}
For an integer $0\leq k\leq n$, define the class of
$n$-qubit $k$-junta channels by
\begin{equation*}
  \calC_k^{\rbra*{n}}
  =\bigcup_{\substack{T\subseteq\sbra*{n}\\\abs{T}\leq k}}\calC_T^{\rbra*{n}}.
\end{equation*}
\end{definition}

We omit the superscript $\rbra{n}$ when the number of qubits is clear.
Every $U\in\calJ_T$ induces a channel $\Phi_U\in\calC_T$.

\begin{definition}[Channel distance {\cite{BY25}}]
\label{def:channel-distance}
For $\Phi,\Psi\in C\rbra{\calH}$, define the
normalized Hilbert--Schmidt distance by
\begin{equation*}
  \distC\rbra*{\Phi,\Psi}
  =
  \frac1{\sqrt2}
  \norm{v\rbra*{\Phi}-v\rbra*{\Psi}}_2
  =\frac1{N\sqrt2}
  \norm{J\rbra*{\Phi}-J\rbra*{\Psi}}_2.
\end{equation*}
For a nonempty set
$\calS\subseteq C\rbra{\calH}$, define
\begin{equation*}
  \distC\rbra*{\Phi,\calS}
  =\inf_{\Psi\in\calS}
  \distC\rbra*{\Phi,\Psi}.
\end{equation*}
\end{definition}

Since Choi states are density operators,
$\Tr\rbra{v\rbra{\Phi}^2}\leq1$,
$\Tr\rbra{v\rbra{\Psi}^2}\leq1$, and
$\Tr\rbra{v\rbra{\Phi}v\rbra{\Psi}}\geq0$.
Hence $0\leq\distC\rbra{\Phi,\Psi}\leq1$.

A unitary or channel is \emph{$\eps$-far} from a junta
class if its distance to that class is at least $\eps$.

The Choi states of unitary channels are pure, so for every
$U,V\in\calU_N$,
\begin{align*}
  \distC\rbra*{\Phi_U,\Phi_V}^2
  =1-\frac{\abs{\langle U,V\rangle}^2}{N^2}
  =\distU\rbra*{U,V}^2
    \rbra*{2-\distU\rbra*{U,V}^2}.
\end{align*}
The infimum defining $\distC\rbra{\Phi_U,\calC_k}$
ranges over all $k$-junta channels, including nonunitary ones.

\subsection{Influence}
\label{subsec:influence}

For a candidate junta support $T$, influence measures the Pauli mass
carried by labels that act outside $T$.  The bounds below relate this
mass to distance from the junta classes.

Unless otherwise specified, Pauli labels in sums range
over $\Z_4^n$, subject to the indicated restrictions.

\paragraph{Unitary operators.}
We define unitary influence and relate it to distance
from junta unitaries.

\begin{definition}[Unitary influence
  {\cite[Definitions~2.5 and~2.6]{CNY23}}]
\label{def:influence}
Fix $S\subseteq\sbra{n}$.
For a unitary $U\in\calU_N$, define
\begin{equation*}
  \Inf_S\sbra*{U}
  =\sum_{x:\,x_S\neq0}
  \abs{\wh U\rbra*{x}}^2.
\end{equation*}
For $j\in\sbra{n}$, write
$\Inf_j\sbra{U}=\Inf_{\cbra{j}}\sbra{U}$.
\end{definition}

Under the Pauli-label distribution in \cref{eq:unitary-pauli-mass},
$\Inf_S\sbra{U}$ is the probability that the sampled label has support
intersecting $S$. Hence $0\leq\Inf_S\sbra{U}\leq1$.

For $T\subseteq\sbra{n}$, a Pauli label $x$ satisfies
$\supp\rbra{x}\subseteq T$ if and only if
$x_{\overline{T}}=0$.
Since the probabilities sum to one,
\begin{equation}
\label{eq:unitary-complementary-mass}
  \sum_{x:\,\supp\rbra*{x}\subseteq T}
    \abs{\wh U\rbra*{x}}^2
  =1-\Inf_{\overline{T}}\sbra*{U}.
\end{equation}

To relate influence to junta distance, we first express the distance
to $\calJ_T$ using the normalized partial trace.
This partial trace retains exactly the Pauli coefficients supported on $T$.

\begin{lemma}[Distance to a fixed junta support,
special case of {\cite[Eq.~(14)]{GDKBR10}}]
\label{lem:unitary-junta-exact-distance}
For every $U\in\calU_N$ and $T\subseteq\sbra{n}$, set $d_T=2^{\abs{T}}$ and
define the normalized partial trace
\begin{equation*}
  A_T=2^{-\abs{\overline{T}}}\Tr_{\overline{T}}\rbra*{U}
      \in L\rbra*{\calH_T},
\end{equation*}
where $\Tr_{\overline{T}}$ traces out the tensor factors outside $T$.
Then
\begin{equation}
\label{eq:unitary-junta-trace-norm-distance}
  \distU\rbra*{U,\calJ_T}^2
  =1-\frac{\norm{A_T}_1}{d_T}.
\end{equation}
\end{lemma}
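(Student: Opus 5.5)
By \cref{eq:unitary-distance-overlap}, we have
\[
\distU\rbra{U,\calJ_T}^2=1-\sup_{V_T}\abs{\Tr\rbra{U^*(V_T\ot I_{\overline T})}}/N,
\]
where the supremum runs over $V_T\in\calU_{d_T}$. Taking the supremum inside is legitimate because $t\mapsto 1-t/N$ is decreasing and continuous, and $\calU_{d_T}$ is compact, so the infimum over $\calJ_T$ is attained. The plan is therefore to show that
\[
\max_{V_T}\abs{\Tr\rbra{U^*(V_T\ot I_{\overline T})}}=N\,\norm{A_T}_1/d_T .
\]

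The first step reduces the trace to one on $\calH_T$ using the partial trace. Up to reordering tensor factors,
\[
\Tr\rbra{U^*(V_T\ot I_{\overline T})}=\Tr_T\rbra{\Tr_{\overline T}(U^*)\,V_T}=\Tr_T\rbra{(\Tr_{\overline T}U)^*V_T}.
\]
This holds because the partial trace over $\overline T$ commutes with multiplication by operators acting only on $T$, and with taking adjoints. Substituting $\Tr_{\overline T}U=2^{\abs{\overline T}}A_T$ gives
\[
\abs{\Tr\rbra{U^*(V_T\ot I_{\overline T})}}=2^{\abs{\overline T}}\abs{\Tr\rbra{A_T^*V_T}}.
\]
Since $N=d_T\cdot 2^{\abs{\overline T}}$, the prefactor is $2^{\abs{\overline T}}=N/d_T$.

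The second step is the variational formula for the trace norm,
\[
\max_{V\in\calU_d}\abs{\Tr(A^*V)}=\norm{A}_1 .
\]
For the upper bound, write the singular value decomposition $A=W\Sigma X^*$. Then $\abs{\Tr(A^*V)}=\abs{\Tr(\Sigma W^*VX)}$, which is at most $\sum_i\sigma_i$ because every diagonal entry of the unitary $W^*VX$ has modulus at most one. For attainment, take $V=WX^*$, which gives $\Tr(A^*V)=\Tr\Sigma=\norm{A}_1$. Combining the two steps yields the maximum $(N/d_T)\norm{A_T}_1$, and therefore \cref{eq:unitary-junta-trace-norm-distance}.

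Everything here is standard; the only point needing care is the partial-trace identity when $T$ is not an initial segment of the qubits. I would handle it by expanding $U$ in the Pauli basis. A string $\sigma_x$ with $x_{\overline T}\neq0$ contributes zero to both $\Tr\rbra{U^*(V_T\ot I)}$ and $\Tr_{\overline T}U$, since the Pauli factors outside $T$ are traceless. The remaining strings factor as a tensor product over $T$ and $\overline T$, so the identity can be checked term by term. This also confirms the remark that $A_T$ retains exactly the Pauli coefficients supported on $T$:
\[
A_T=\sum_{x:\supp(x)\subseteq T}\wh U(x)\,\sigma_{x_T}.
\]
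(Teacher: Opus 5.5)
Your proposal is correct and follows the same route as the paper: rewrite the distance via \cref{eq:unitary-distance-overlap}, use the defining property of the partial trace to get $\abs{\Tr(U^*(V_T\ot I_{\overline{T}}))}=(N/d_T)\abs{\Tr(A_T^*V_T)}$, and conclude with the variational characterization of the trace norm, attained at $V_T=WX^*$ from a singular-value decomposition. The only difference is that you also prove the variational formula and the partial-trace identity for non-contiguous $T$ (via the Pauli expansion), which the paper treats as standard.
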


\begin{proof}
The defining property of the partial trace gives
\begin{align*}
  \max_{V_T\in\calU_{d_T}}
  \frac{\abs{\langle U,V_T\ot I_{\overline{T}}\rangle}}{N}
  =\frac1{d_T}\max_{V_T\in\calU_{d_T}}
    \abs{\Tr\rbra*{A_T^*V_T}}
  =\frac{\norm{A_T}_1}{d_T}.
\end{align*}
The last equality is the variational characterization of the trace norm.
A maximizer is $V_T=LR^*$ for a singular-value decomposition $A_T=L\Sigma R^*$.
The result follows from \cref{eq:unitary-distance-overlap}.
\end{proof}

The formula includes $T=\varnothing$, when $A_T=\Tr\rbra{U}/N$
is a scalar, and $T=\sbra{n}$, when $A_T=U$.

\begin{lemma}[Influence controls junta distance]
\label{lem:influence-controls-distance}
For every $U\in\calU_N$ and $T\subseteq\sbra{n}$,
\begin{equation*}
  1-\sqrt{1-\Inf_{\overline{T}}\sbra*{U}}
  \leq\distU\rbra*{U,\calJ_T}^2
   \leq\Inf_{\overline{T}}\sbra*{U}.
\end{equation*}
\end{lemma}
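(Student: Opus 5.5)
Both inequalities will follow from \cref{lem:unitary-junta-exact-distance} once the normalized partial trace $A_T$ is expressed through the Pauli coefficients of $U$. First observe that $2^{-\abs{\overline T}}\Tr_{\overline T}\rbra{\sigma_x}$ equals $\sigma_{x_T}$ if $x_{\overline T}=0$ and vanishes otherwise, since every nontrivial single-qubit Pauli is traceless. By linearity,
\begin{equation*}
  A_T=\sum_{x:\,\supp\rbra{x}\subseteq T}\wh U\rbra*{x}\,\sigma_{x_T}.
\end{equation*}
The Parseval identity (\cref{fact:pauli-parseval}) on $\calH_T$ combined with \cref{eq:unitary-complementary-mass} then gives
\begin{equation*}
  \frac{1}{d_T}\norm{A_T}_2^2=\sum_{x:\,\supp\rbra{x}\subseteq T}\abs{\wh U\rbra{x}}^2=1-\Inf_{\overline T}\sbra*{U}.
\end{equation*}
What remains is to compare $\norm{A_T}_1/d_T$ with this normalized Hilbert--Schmidt quantity from both sides.

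For the lower bound on the distance, i.e.\ an upper bound on $\norm{A_T}_1$, apply Cauchy--Schwarz to the singular values: $\norm{A_T}_1\le\sqrt{d_T}\,\norm{A_T}_2$. Hence
\begin{equation*}
  \frac{\norm{A_T}_1}{d_T}\le\sqrt{\frac{\norm{A_T}_2^2}{d_T}}=\sqrt{1-\Inf_{\overline T}\sbra*{U}},
\end{equation*}
and substituting into \cref{eq:unitary-junta-trace-norm-distance} yields $1-\sqrt{1-\Inf_{\overline T}\sbra{U}}\le\distU\rbra{U,\calJ_T}^2$.

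For the upper bound, I need a lower bound on $\norm{A_T}_1$. The key fact is that $A_T$ is a contraction: $\norm{A_T}_\infty\le1$, because the normalized partial trace equals the average $\Ex_{P}\,(I_T\ot P)U(I_T\ot P)$ over Paulis $P$ on $\overline T$, with the appropriate identification onto $\calH_T\ot I_{\overline T}$, and is therefore a convex combination of unitaries. Alternatively, contractivity follows from $\abs{\bra{\phi}A_T\ket{\psi}}\le1$ by writing the partial trace against the maximally mixed state. Given $\norm{A_T}_\infty\le1$, each singular value satisfies $s_i^2\le s_i$, so $\norm{A_T}_2^2\le\norm{A_T}_1$. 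This gives $\norm{A_T}_1/d_T\ge1-\Inf_{\overline T}\sbra{U}$, and \cref{eq:unitary-junta-trace-norm-distance} yields $\distU\rbra{U,\calJ_T}^2\le\Inf_{\overline T}\sbra{U}$.

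The only step that is not a routine substitution is the operator-norm bound $\norm{A_T}_\infty\le1$, which I expect to be the main obstacle. I would establish it by the Pauli-twirl averaging identity $\frac{1}{4^{m}}\sum_{P}P B P^*=\frac{\Tr\rbra{B}}{2^m}I$ applied on $\overline T$ with $m=\abs{\overline T}$. This identity exhibits $A_T\ot I_{\overline T}$ as an average of unitaries, and the triangle inequality for the operator norm then bounds it by one. The degenerate cases $T=\varnothing$ and $T=\sbra{n}$ are covered by the same argument.
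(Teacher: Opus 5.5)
Your proposal is correct and follows essentially the same route as the paper: Parseval gives $\norm{A_T}_2^2/d_T=1-\Inf_{\overline{T}}\sbra{U}$, then Cauchy--Schwarz and $s_i^2\leq s_i$ (which uses $\norm{A_T}_\infty\leq1$) sandwich $\norm{A_T}_1/d_T$, and \cref{lem:unitary-junta-exact-distance} finishes. The only minor difference is how you prove contractivity: your Pauli twirl writes $A_T\ot I_{\overline{T}}$ as an average of unitaries, whereas the paper bounds $\abs{v^*A_Tw}$ directly by averaging $\abs{\rbra{v\ot b_j}^*U\rbra{w\ot b_j}}\leq1$ over an orthonormal basis $\rbra{b_j}$ of $\calH_{\overline{T}}$, which is essentially your stated alternative.
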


\begin{proof}
Let $A_T$ and $d_T$ be as in
\cref{lem:unitary-junta-exact-distance}. We first show that
$\norm{A_T}_\infty\leq1$. For unit vectors $v,w\in\calH_T$
and an orthonormal basis $\rbra{b_j}_{j=1}^{2^{\abs{\overline{T}}}}$ of $\calH_{\overline{T}}$,
\begin{equation*}
  \abs{v^*A_Tw}
  \leq2^{-\abs{\overline{T}}}\sum_{j=1}^{2^{\abs{\overline{T}}}}
    \abs{\rbra*{v\ot b_j}^*U\rbra*{w\ot b_j}}
   \leq1.
\end{equation*}
The last inequality uses unitarity of $U$. Taking the supremum over $v,w$
proves the contraction bound, so all singular values of $A_T$ lie in
$\sbra{0,1}$.

The normalized partial trace removes Pauli terms whose support
intersects $\overline{T}$. Parseval's identity on $\calH_T$
(\cref{fact:pauli-parseval}) and
\cref{eq:unitary-complementary-mass} therefore give
\begin{equation}
\label{eq:unitary-partial-trace-mass}
  \frac{\norm{A_T}_2^2}{d_T}
  =\sum_{x:\,\supp\rbra*{x}\subseteq T}\abs{\wh U\rbra*{x}}^2
   =1-\Inf_{\overline{T}}\sbra*{U}.
\end{equation}
This is the partial-trace identity in
\cite[Lemma~A.1]{CNY23}, expressed in our notation.

Since each singular value $s_i$ of $A_T$ lies in $\sbra{0,1}$,
the inequality $s_i^2\leq s_i$ gives the first bound below.
Cauchy--Schwarz gives the second.
\begin{equation*}
  \frac{\norm{A_T}_2^2}{d_T}
  \leq\frac{\norm{A_T}_1}{d_T}
   \leq\sqrt{\frac{\norm{A_T}_2^2}{d_T}}.
\end{equation*}
Substituting \cref{eq:unitary-partial-trace-mass} into these bounds and
using \cref{eq:unitary-junta-trace-norm-distance} proves both inequalities.
\end{proof}

The lower bound also follows from the proof of
\cite[Lemma~2.8]{CLL24}.

The two bounds give a necessary condition and a sufficient condition for
being far from all $k$-junta unitaries.

\begin{lemma}[Influence bounds for testing junta unitaries]
\label{lem:influence-certificate}
Fix $U\in\calU_N$, an integer $0\leq k\leq n$, and
$0\leq\eps\leq1$.
\begin{enumerate}[label=(\roman*)]
\item If $\distU\rbra{U,\calJ_k}\geq\eps$, then
  $\Inf_{\overline{T}}\sbra{U}\geq\eps^2$ for every $T\subseteq\sbra{n}$
  with $\abs{T}\leq k$.
\item If $\Inf_{\overline{T}}\sbra{U}\geq2\eps^2-\eps^4$ for every
  $T\subseteq\sbra{n}$ with $\abs{T}\leq k$, then
  $\distU\rbra{U,\calJ_k}\geq\eps$.
\end{enumerate}
\end{lemma}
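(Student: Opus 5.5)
Both parts follow from \cref{lem:influence-controls-distance} together with the identity
\begin{equation*}
  \distU\rbra*{U,\calJ_k}=\min_{T\subseteq\sbra{n}\colon\abs{T}\leq k}\distU\rbra*{U,\calJ_T}.
\end{equation*}
This identity holds because $\calJ_k$ is the finite union of the classes $\calJ_T$ with $\abs{T}\leq k$, so the infimum over $\calJ_k$ splits as a minimum over $T$ of infima over $\calJ_T$. No further tools are needed; the remaining work is elementary manipulation of the two inequalities.

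For (i), fix any $T$ with $\abs{T}\leq k$. The hypothesis gives $\eps\leq\distU\rbra{U,\calJ_k}\leq\distU\rbra{U,\calJ_T}$. Squaring, which is valid since both sides are nonnegative, and applying the upper bound of \cref{lem:influence-controls-distance} gives
\begin{equation*}
  \eps^2\leq\distU\rbra{U,\calJ_T}^2\leq\Inf_{\overline{T}}\sbra{U}.
\end{equation*}

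For (ii), fix $T$ with $\abs{T}\leq k$ and write $I=\Inf_{\overline{T}}\sbra{U}\in\sbra{0,1}$. By the lower bound of \cref{lem:influence-controls-distance}, $\distU\rbra{U,\calJ_T}^2\geq1-\sqrt{1-I}$, so it suffices to show $1-\sqrt{1-I}\geq\eps^2$. Since $0\leq\eps\leq1$, this is equivalent to $\sqrt{1-I}\leq1-\eps^2$, and squaring both nonnegative sides turns it into
\begin{equation*}
  1-I\leq1-2\eps^2+\eps^4,
\end{equation*}
that is, $I\geq2\eps^2-\eps^4$, which is exactly the hypothesis. Hence $\distU\rbra{U,\calJ_T}\geq\eps$ for every admissible $T$. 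Taking the minimum over $T$ through the identity above gives $\distU\rbra{U,\calJ_k}\geq\eps$.

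The only points needing care are the case $T=\varnothing$ and the sign conditions in the squaring steps. The case $T=\varnothing$ is already covered by \cref{lem:influence-controls-distance}. For the squaring steps, $1-\eps^2\geq0$ because $\eps\leq1$, and $1-I\geq0$ because $I\leq1$. Beyond these checks, neither part has a real obstacle.
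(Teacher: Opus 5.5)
Your proposal is correct and follows essentially the same route as the paper: part (i) combines the inclusion $\calJ_T\subseteq\calJ_k$ with the upper bound of \cref{lem:influence-controls-distance}, and part (ii) combines its lower bound with $1-\sqrt{1-(2\eps^2-\eps^4)}=\eps^2$ before minimizing over $T$. Your squaring-based equivalence in (ii) is that same identity run in reverse.
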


\begin{proof}
For the first claim, fix any $T\subseteq\sbra{n}$ with $\abs{T}\leq k$.
The inclusion $\calJ_T\subseteq\calJ_k$ and
\cref{lem:influence-controls-distance} give
\begin{equation*}
  \eps^2
  \leq\distU\rbra*{U,\calJ_k}^2
   \leq\distU\rbra*{U,\calJ_T}^2
   \leq\Inf_{\overline{T}}\sbra*{U}.
\end{equation*}

For the second claim, the same lemma gives, for every such $T$,
\begin{align*}
  \distU\rbra*{U,\calJ_T}^2
  \geq1-\sqrt{1-\Inf_{\overline{T}}\sbra*{U}}
  \geq1-\sqrt{1-\rbra*{2\eps^2-\eps^4}}
   =\eps^2.
\end{align*}
The last equality uses $0\leq\eps\leq1$.
Taking square roots and minimizing over $T\subseteq\sbra{n}$ with
$\abs{T}\leq k$ proves (ii).
\end{proof}

The necessary condition has threshold $\eps^2$ in place of $\eps^2/4$
in \cite[Lemma~2.2]{CNY23}. That paper attributes the latter bound
to Wang. We use (i) for the upper bounds and (ii) for the lower bounds.

\paragraph{Quantum channels.}
We define channel influence and relate it to distance
from junta channels.

\begin{definition}[Channel influence {\cite{BY25}}]
\label{def:channel-influence}
Fix $S\subseteq\sbra{n}$.
For a channel $\Phi\in C\rbra{\calH}$, define
\begin{equation*}
  \Inf_S\sbra*{\Phi}
  =\sum_{x:\,x_S\neq0}
  \wh\Phi\rbra*{x,x}.
\end{equation*}
For $j\in\sbra{n}$, write
$\Inf_j\sbra{\Phi}=\Inf_{\cbra{j}}\sbra{\Phi}$.
\end{definition}

Under the Pauli-label distribution in \cref{eq:channel-fourier-normalization},
$\Inf_S\sbra{\Phi}$ is the probability that the sampled label has support
intersecting $S$. Hence $0\leq\Inf_S\sbra{\Phi}\leq1$.

Since the probabilities sum to one, for every $T\subseteq\sbra{n}$,
\begin{equation}
\label{eq:channel-complementary-mass}
  \sum_{x:\,\supp\rbra*{x}\subseteq T}
    \wh\Phi\rbra*{x,x}
  =1-\Inf_{\overline{T}}\sbra*{\Phi}.
\end{equation}

\begin{lemma}[Support characterization for junta channels,
  adapted from {\cite[Lemma~8 and Theorem~12]{BY25}}]
\label{lem:junta-fourier-support}
For every channel $\Phi\in C\rbra{\calH}$
and $T\subseteq\sbra{n}$, $\Phi\in\calC_T$ if and only if $\Inf_{\overline{T}}\sbra{\Phi}=0$.
\end{lemma}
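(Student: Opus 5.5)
The plan is to prove the two directions separately, working throughout with the Kraus representation and the positivity of the Pauli coefficient matrix $\wh\Phi$ from \cref{eq:channel-fourier-normalization}.

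\emph{Forward direction.} Suppose $\Phi=\Psi_T\ot I_{\overline{T}}$ with $\Psi_T\in C\rbra{\calH_T}$. Choose Kraus operators $\rbra{L_j}_j$ for $\Psi_T$. Then $\rbra{L_j\ot I_{\overline{T}}}_j$ is a Kraus family for $\Phi$. Each $L_j\ot I_{\overline{T}}$ expands only in Pauli strings $\sigma_x$ with $x_{\overline{T}}=0$, so $\wh{(L_j\ot I_{\overline{T}})}\rbra{x}=0$ whenever $x_{\overline{T}}\neq0$. By \cref{eq:channel-fourier-kraus}, $\wh\Phi\rbra{x,x}=\sum_j\abs{\wh K_j\rbra{x}}^2$ therefore vanishes for every such $x$. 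Hence $\Inf_{\overline{T}}\sbra{\Phi}=0$.

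\emph{Reverse direction.} Suppose $\Inf_{\overline{T}}\sbra{\Phi}=0$, and let $\rbra{K_j}_j$ be any Kraus family for $\Phi$.

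\begin{enumerate}[label=(\arabic*)]
\item By \cref{eq:channel-fourier-kraus}, $0=\sum_{x:\,x_{\overline{T}}\neq0}\wh\Phi\rbra{x,x}=\sum_j\sum_{x:\,x_{\overline{T}}\neq0}\abs{\wh K_j\rbra{x}}^2$. Every term is nonnegative, so each $\wh K_j$ is supported on labels with $x_{\overline{T}}=0$. Equivalently, the full off-diagonal block of $\wh\Phi$ indexed by such labels vanishes, which is the positivity argument of \cite[Lemma~8]{BY25}.
\item It follows that each $K_j=L_j\ot I_{\overline{T}}$ with $L_j=\sum_{x\in\Z_4^T}\wh K_j\rbra{x}\sigma_x\in L\rbra{\calH_T}$.
\item Define $\Psi_T\rbra{B}=\sum_jL_jBL_j^*$ on $L\rbra{\calH_T}$. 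This map is completely positive by construction.
\item Trace preservation of $\Psi_T$ follows from $I_N=\sum_jK_j^*K_j=\rbra{\sum_jL_j^*L_j}\ot I_{\overline{T}}$, which forces $\sum_jL_j^*L_j=I_{d_T}$.
\end{enumerate}

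Since $\Phi$ and $\Psi_T\ot I_{\overline{T}}$ share a Kraus family, $\Phi=\Psi_T\ot I_{\overline{T}}\in\calC_T$.

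\emph{Edge cases and expected difficulty.} When $T=\varnothing$, $\Psi_T$ acts on $\C$ and the condition forces the $L_j$ to be scalars with $\sum_j\abs{L_j}^2=1$, so $\Phi$ is the identity channel. This is consistent with the argument above. No step is a real obstacle. The only point needing care is step (1), namely that vanishing of the diagonal mass forces each Kraus operator individually to be supported on $T$. This holds because the diagonal entries are sums of squared moduli, so there is no need to invoke the block structure of $\wh\Phi$ directly.
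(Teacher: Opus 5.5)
Your proposal is correct and follows essentially the same route as the paper: tensor the Kraus operators of $\Psi_T$ with the identity and apply \cref{eq:channel-fourier-kraus} for one direction; for the converse, use nonnegativity of the terms $\abs{\wh K_j\rbra{x}}^2$ to get $K_j=L_j\ot I_{\overline{T}}$, then read off $\sum_jL_j^*L_j=I_T$ from trace preservation. The only thing to tidy is notation in step (2): the coefficient there should be $\wh K_j$ evaluated at the full label that equals $x$ on $T$ and $0$ on $\overline{T}$.
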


\begin{proof}
If $\Phi=\Psi_T\ot I_{\overline{T}}$,
tensoring each Kraus operator of $\Psi_T$ with the identity
gives a Kraus representation of $\Phi$.
Then \cref{eq:channel-fourier-kraus} gives
$\Inf_{\overline{T}}\sbra{\Phi}=0$.

Conversely, suppose this influence is zero and choose
Kraus operators $\rbra{K_j}_j$ for $\Phi$.
By \cref{eq:channel-fourier-kraus},
$\Inf_{\overline{T}}\sbra{\Phi}$ is the sum of the
nonnegative terms $\abs{\wh K_j\rbra{x}}^2$ over $j$
and labels $x$ with $\supp\rbra{x}\nsubseteq T$.
Since this sum is zero, every such coefficient vanishes.
Hence $K_j=L_j\ot I_{\overline{T}}$.
Trace preservation gives $\sum_jL_j^*L_j=I_T$,
so the operators $L_j$ define a channel $\Psi_T$
with $\Phi=\Psi_T\ot I_{\overline{T}}$.
\end{proof}

We relate channel distance to influence by projecting the Choi state
onto the Pauli-vector subspace supported on $T$.
We then complete the projected map to a junta channel.

For $T\subseteq\sbra{n}$, let $\Pi_T$ be the orthogonal projection on the
output--reference space $\calH\ot\calH$ onto the span of Pauli vectors
whose labels have support contained in $T$:
\begin{equation*}
  \Pi_T=\sum_{x:\,\supp\rbra*{x}\subseteq T}
    |v\rbra*{\sigma_x}\rangle\langle v\rbra*{\sigma_x}|.
\end{equation*}

\begin{lemma}[Projected-channel completion,
  adapted from {\cite[Theorem~12]{BY25}}]
\label{lem:channel-projected-completion}
For every channel $\Phi\in C\rbra{\calH}$ and $T\subseteq\sbra{n}$, there
exists a channel $\Psi\in\calC_T$ such that
\begin{equation*}
  \distC\rbra*{\Phi,\Psi}^2\leq\Inf_{\overline{T}}\sbra*{\Phi}.
\end{equation*}
\end{lemma}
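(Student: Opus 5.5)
The plan is to build $\Psi$ explicitly from a Kraus representation of $\Phi$, by projecting each Kraus operator onto $T$ and then adding one completion operator to restore trace preservation. The distance is then bounded blockwise with respect to $\Pi_T$.

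\textbf{Construction.} Fix Kraus operators $\rbra{K_j}_j$ for $\Phi$. Let $E$ be the normalized partial trace, re-embedded on the full space:
\begin{equation*}
  E\rbra{X}=\rbra{2^{-\abs{\overline{T}}}\Tr_{\overline{T}}X}\ot I_{\overline{T}} .
\end{equation*}
In the Pauli basis, $E$ keeps exactly the coefficients $\wh X\rbra{x}$ with $\supp\rbra{x}\subseteq T$. Set $E\rbra{K_j}=L_j\ot I_{\overline{T}}$. Since $E$ is a unital completely positive conditional expectation, the Kadison--Schwarz inequality gives
\begin{equation*}
  \sum_j E\rbra{K_j}^*E\rbra{K_j}\preceq E\rbra*{\sum_jK_j^*K_j}=I_N .
\end{equation*}
Hence $Q=I_T-\sum_jL_j^*L_j\succeq0$. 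Put $M=\sqrt{Q}$ and let $\Psi_T$ have Kraus operators $\rbra{L_j}_j$ together with $M$. Then $\Psi=\Psi_T\ot I_{\overline{T}}\in\calC_T$.

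\textbf{Choi state of $\Psi$.} Write $\rho=v\rbra{\Phi}$ and $P=\Pi_T$. By \cref{eq:channel-fourier-kraus}, the Pauli-vector components of $\operatorname{vec}\rbra{E\rbra{K_j}}$ are those of $\operatorname{vec}\rbra{K_j}$ restricted to labels supported in $T$. Therefore \cref{eq:choi-kraus-vectors} gives
\begin{equation*}
  v\rbra{\Psi}=P\rho P+m,\qquad m=\tfrac1N\operatorname{vec}\rbra{M\ot I_{\overline{T}}}\operatorname{vec}\rbra{M\ot I_{\overline{T}}}^*\succeq0 .
\end{equation*}
The operator $m$ is supported in the range of $P$. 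Both Choi states have unit trace, and $\Tr\rbra{P\rho P}=1-\Inf_{\overline{T}}\sbra{\Phi}$ by \cref{eq:channel-complementary-mass}. Hence $\Tr m=I$, where $I:=\Inf_{\overline{T}}\sbra{\Phi}$. Because $m$ is positive semidefinite, $\norm{m}_2\le\Tr m=I$.

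\textbf{Distance bound.} Decompose $\rho-v\rbra{\Psi}$ into its $P$/$P^\perp$ blocks. The $PP$ block is exactly $-m$. By block orthogonality in Hilbert--Schmidt norm,
\begin{equation*}
  \norm{\rho-v\rbra{\Psi}}_2^2=2\norm{P\rho P^\perp}_2^2+\norm{P^\perp\rho P^\perp}_2^2+\norm{m}_2^2 .
\end{equation*}
We bound each term using positivity of $\rho$:
\begin{itemize}
\item The Cauchy--Schwarz inequality for positive block matrices gives $\norm{P\rho P^\perp}_2^2\le\Tr\rbra{P\rho P}\,\Tr\rbra{P^\perp\rho P^\perp}=\rbra{1-I}I$.
\item A positive semidefinite operator has Hilbert--Schmidt norm at most its trace, so $\norm{P^\perp\rho P^\perp}_2^2\le I^2$.
\item From the previous step, $\norm{m}_2^2\le I^2$.
\end{itemize}
Summing, the right-hand side is at most $2I\rbra{1-I}+I^2+I^2=2I$. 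Dividing by $2$ gives $\distC\rbra{\Phi,\Psi}^2\le I$, as claimed.

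The main obstacle I expect is the completion term $m$. A naive accounting through $\rho^2\preceq\rho$ leaves an uncontrolled $+I^2$ from $m$. The fix is to bound the off-diagonal block by the product of the diagonal traces rather than through $\rho^2\preceq\rho$; this frees exactly the $-I^2$ needed to absorb $\norm{m}_2^2$. The other point that needs care is verifying $\sum_jL_j^*L_j\preceq I_T$ via Kadison--Schwarz, since without it the completion operator $M$ would not exist.
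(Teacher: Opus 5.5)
Your proposal is correct and is essentially the paper's proof. It uses the same truncated Kraus operators completed by a square-root operator, the same decomposition $v\rbra{\Psi}=\Pi_T v\rbra{\Phi}\Pi_T+\kappa$ with $\Tr\rbra{\kappa}=\Inf_{\overline{T}}\sbra{\Phi}$, and the same Hilbert--Schmidt orthogonality giving the bound $2\Inf_{\overline{T}}\sbra{\Phi}$; the only differences are cosmetic, namely that you obtain $\sum_jL_j^*L_j\preceq I_T$ from Kadison--Schwarz rather than from the Pauli expansion on $\overline{T}$, and that you bound the blocks by block Cauchy--Schwarz rather than entrywise via $\abs{\wh\Phi\rbra{x,y}}^2\leq\wh\Phi\rbra{x,x}\wh\Phi\rbra{y,y}$.
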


\begin{proof}
We use the truncation and completion construction in
\cite[Theorem~12]{BY25} and sharpen its error bound using
Hilbert--Schmidt orthogonality.

Choose Kraus operators $\rbra{K_j}_j$ for $\Phi$ and expand each of them in
the Pauli basis on $\calH_{\overline{T}}$:
\begin{equation*}
  K_j=\sum_{z\in\Z_4^{\overline{T}}}B_{j,z}\ot\sigma_z,
  \qquad B_{j,z}\in L\rbra*{\calH_T}.
\end{equation*}
Trace preservation and Pauli orthogonality on $\calH_{\overline{T}}$ give
\begin{equation*}
  \sum_{j,z}B_{j,z}^*B_{j,z}
  =2^{-\abs{\overline{T}}}\Tr_{\overline{T}}
    \rbra*{\sum_jK_j^*K_j}
   =I_T.
\end{equation*}
Since each $B_{j,z}^*B_{j,z}$ is positive semidefinite,
the preceding identity gives
$\sum_jB_{j,0}^*B_{j,0}\preceq I_T$.
Hence $C=\rbra{I_T-\sum_jB_{j,0}^*B_{j,0}}^{1/2}$
is well defined. The operators
$B_{j,0}\ot I_{\overline{T}}$, together with $C\ot I_{\overline{T}}$, are
Kraus operators for a channel $\Psi\in\calC_T$, since
$\sum_jB_{j,0}^*B_{j,0}+C^*C=I_T$.

The projection retains the Pauli coefficients of $K_j$ supported on $T$:
\begin{equation}
\label{eq:channel-kraus-support-projection}
  \Pi_T\operatorname{vec}\rbra*{K_j}
  =\operatorname{vec}\rbra*{B_{j,0}\ot I_{\overline{T}}}.
\end{equation}
Set $\Gamma=\Pi_Tv\rbra{\Phi}\Pi_T$.  By
\cref{eq:choi-kraus-vectors} and
\cref{eq:channel-kraus-support-projection},
\begin{align*}
  v\rbra*{\Psi}=\Gamma+\kappa, \qquad
  \kappa=\frac1N\operatorname{vec}\rbra*{C\ot I_{\overline{T}}}
    \operatorname{vec}\rbra*{C\ot I_{\overline{T}}}^*.
\end{align*}
The matrix $\kappa$ is positive semidefinite and has rank at most one.
Since $v\rbra{\Psi}$ has trace one, the definition of $\Pi_T$
and \cref{eq:channel-complementary-mass} give
\begin{equation*}
  \norm{\kappa}_2
  =\Tr\rbra{\kappa}
   =1-\Tr\rbra{\Gamma}
   =\Inf_{\overline{T}}\sbra*{\Phi}.
\end{equation*}

Positive semidefiniteness of $\wh\Phi$
(\cref{eq:channel-fourier-normalization}) implies
$\abs{\wh\Phi\rbra{x,y}}^2\leq\wh\Phi\rbra{x,x}\wh\Phi\rbra{y,y}$ for
all $x,y$. In the Pauli-vector basis, $v\rbra{\Phi}-\Gamma$ retains exactly
those entries for which at least one label has support not contained in
$T$.  Therefore
\begin{align*}
  \norm{v\rbra*{\Phi}-\Gamma}_2^2
  \leq\sum_{x,y:\,x_{\overline{T}}\neq0\text{ or }y_{\overline{T}}\neq0}
    \wh\Phi\rbra*{x,x}\wh\Phi\rbra*{y,y}
  =1-\rbra*{1-\Inf_{\overline{T}}\sbra*{\Phi}}^2.
\end{align*}
The equality uses the fact that the diagonal coefficients sum to one,
together with \cref{eq:channel-complementary-mass}.

Finally, $\Pi_T\rbra{v\rbra{\Phi}-\Gamma}\Pi_T=0$ and
$\Pi_T\kappa\Pi_T=\kappa$, so $v\rbra{\Phi}-\Gamma$ and $\kappa$ are
orthogonal in the Hilbert--Schmidt inner product.  Consequently,
\begin{align*}
  2\distC\rbra*{\Phi,\Psi}^2
  =\norm{v\rbra*{\Phi}-\Gamma}_2^2+\norm{\kappa}_2^2
  \leq1-\rbra*{1-\Inf_{\overline{T}}\sbra*{\Phi}}^2
    +\Inf_{\overline{T}}\sbra*{\Phi}^2
   =2\Inf_{\overline{T}}\sbra*{\Phi}.
\end{align*}
Dividing by two proves the claim.
\end{proof}

\begin{lemma}[Influence bound for testing junta channels]
\label{lem:channel-influence-certificate}
Fix a channel $\Phi\in C\rbra{\calH}$, an integer $0\leq k\leq n$, and
$0\leq\eps\leq1$.
If $\distC\rbra{\Phi,\calC_k}\geq\eps$, then
$\Inf_{\overline{T}}\sbra{\Phi}\geq\eps^2$
for every $T\subseteq\sbra{n}$ with $\abs{T}\leq k$.
\end{lemma}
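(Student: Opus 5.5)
The plan is to mirror part (i) of \cref{lem:influence-certificate}, using \cref{lem:channel-projected-completion} in place of \cref{lem:influence-controls-distance}. Fix any $T\subseteq\sbra{n}$ with $\abs{T}\leq k$. \Cref{lem:channel-projected-completion} produces a channel $\Psi\in\calC_T$ with $\distC\rbra{\Phi,\Psi}^2\leq\Inf_{\overline{T}}\sbra{\Phi}$.

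By \cref{def:junta-channels}, $\calC_T\subseteq\calC_k$ whenever $\abs{T}\leq k$, so $\Psi$ lies in $\calC_k$. The definition of $\distC\rbra{\Phi,\calC_k}$ as an infimum then gives the chain
\begin{equation*}
  \eps^2\leq\distC\rbra*{\Phi,\calC_k}^2\leq\distC\rbra*{\Phi,\Psi}^2\leq\Inf_{\overline{T}}\sbra*{\Phi}.
\end{equation*}
Squaring preserves the first inequality because $\eps\geq0$ and distances are nonnegative. Since $T$ was arbitrary with $\abs{T}\leq k$, the claim follows.

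There is no real obstacle. All the substantive work, namely the truncation-and-completion construction together with the Hilbert--Schmidt orthogonality estimate, is already contained in \cref{lem:channel-projected-completion}. The only points to check are the inclusion $\calC_T\subseteq\calC_k$ and that the edge cases $T=\varnothing$ and $k=0$ are covered. Both hold, because the completion lemma is stated for every $T\subseteq\sbra{n}$, including the empty set.
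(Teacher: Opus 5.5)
Your proposal is correct and is essentially the paper's own proof. Both fix $T$ with $\abs{T}\leq k$, use \cref{lem:channel-projected-completion} to bound $\distC\rbra{\Phi,\calC_T}^2$ by $\Inf_{\overline{T}}\sbra{\Phi}$, and then pass to $\distC\rbra{\Phi,\calC_k}^2$ through the inclusion $\calC_T\subseteq\calC_k$.
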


\begin{proof}
For every $T\subseteq\sbra{n}$ with $\abs{T}\leq k$, the inclusion
$\calC_T\subseteq\calC_k$ and
\cref{lem:channel-projected-completion} give
\begin{equation*}
  \eps^2
  \leq\distC\rbra*{\Phi,\calC_k}^2
   \leq\distC\rbra*{\Phi,\calC_T}^2
   \leq\Inf_{\overline{T}}\sbra*{\Phi}.
\end{equation*}
\end{proof}

This bound replaces the threshold $\eps^2/4$ in
\cite[Corollary~13]{BY25} by $\eps^2$.

\paragraph{Unitary channels.}
We now relate unitary influence to the distance of the induced channel
from junta channels.
For every $U\in\calU_N$ and $S\subseteq\sbra{n}$,
\cref{eq:unitary-channel-fourier} gives
\begin{equation}
\label{eq:unitary-channel-influence}
  \Inf_S\sbra*{\Phi_U}=\Inf_S\sbra*{U}.
\end{equation}

The projection $\Pi_T$ also gives a distance lower bound
when the input channel is unitary.

\begin{lemma}[Unitary channels versus junta channels,
  adapted from {\cite{WLKD25}}]
\label{lem:unitary-channel-junta-separation}
For every $U\in\calU_N$, $T\subseteq\sbra{n}$, and
$\Psi\in\calC_T$,
\begin{equation*}
  \distC\rbra*{\Phi_U,\Psi}^2
  \geq\Inf_{\overline{T}}\sbra*{U}
      -\frac12\Inf_{\overline{T}}\sbra*{U}^2.
\end{equation*}
\end{lemma}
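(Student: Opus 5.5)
Write $I=\Inf_{\overline{T}}\sbra{U}$. I will work directly with Choi states in the Pauli-vector basis. Since $v\rbra{\Phi_U}=|v\rbra{U}\rangle\langle v\rbra{U}|$ is pure, expanding the Hilbert--Schmidt norm gives
\begin{equation*}
  2\distC\rbra*{\Phi_U,\Psi}^2
  =1+\Tr\rbra*{v\rbra{\Psi}^2}
   -2\langle v\rbra{U}|v\rbra{\Psi}|v\rbra{U}\rangle .
\end{equation*}
The task is therefore to upper bound the overlap term $\langle v\rbra{U}|v\rbra{\Psi}|v\rbra{U}\rangle$ and trade it against the purity $\Tr\rbra{v\rbra{\Psi}^2}$.

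The key structural fact comes from \cref{lem:junta-fourier-support}: for $\Psi\in\calC_T$, the Kraus operators can be taken of the form $L_j\ot I_{\overline{T}}$. By \cref{eq:choi-kraus-vectors}, each vector $\operatorname{vec}\rbra{L_j\ot I_{\overline{T}}}$ lies in the range of $\Pi_T$, so $v\rbra{\Psi}=\Pi_Tv\rbra{\Psi}\Pi_T$. Write $|u\rangle=\Pi_T|v\rbra{U}\rangle$. By \cref{eq:unitary-complementary-mass}, $\norm{u}^2=1-I$. Then
\begin{equation*}
  \langle v\rbra{U}|v\rbra{\Psi}|v\rbra{U}\rangle
  =\langle u|v\rbra{\Psi}|u\rangle
  \leq\norm{v\rbra{\Psi}}_2\,\norm{u}^2
  =\norm{v\rbra{\Psi}}_2\rbra{1-I}.
\end{equation*}
The inequality uses Cauchy--Schwarz, $\langle u|\rho|u\rangle=\langle uu^*,\rho\rangle\le\norm{uu^*}_2\norm{\rho}_2$.

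Setting $t=\norm{v\rbra{\Psi}}_2\in\sbra{0,1}$, this yields
\begin{equation*}
  2\distC^2\geq1+t^2-2t\rbra{1-I}.
\end{equation*}
Minimizing over $t$ gives $t=1-I$, so $2\distC^2\geq1-\rbra{1-I}^2=2I-I^2$. Dividing by two gives exactly the claimed bound $I-\tfrac12I^2$.

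I expect the only point needing care to be the justification that $v\rbra{\Psi}$ is supported on the range of $\Pi_T$, that is, that $\operatorname{vec}\rbra{L\ot I_{\overline{T}}}$ expands only in Pauli vectors with support in $T$. This follows directly from the Pauli expansion of $L$ on $\calH_T$ tensored with $\sigma_0$ on $\overline{T}$, together with vectorization linearity. It is also the reason the bound holds against arbitrary, including nonunitary, junta channels: no purity assumption on $\Psi$ is used, only the free optimization over $t$.
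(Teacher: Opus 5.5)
Your proof is correct and is essentially the paper's argument. Both rest on the same two facts: $v\rbra{\Psi}=\Pi_Tv\rbra{\Psi}\Pi_T$ and $\norm{\Pi_T v\rbra{U}}^2=1-\Inf_{\overline{T}}\sbra{U}$; the paper derives the first from $\Tr\rbra{\Pi_Tv\rbra{\Psi}}=1$ and positivity, whereas you use the Kraus form. The paper then finishes with the Hilbert--Schmidt orthogonal projection $B\mapsto\Pi_TB\Pi_T$ (best approximation plus Pythagoras), which gives $\norm{v\rbra{\Phi_U}-v\rbra{\Psi}}_2^2\geq1-\rbra{1-\Inf_{\overline{T}}\sbra{U}}^2$ directly; your Cauchy--Schwarz bound on the cross term followed by minimizing over $t$ is an unpacked version of that step.
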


\begin{proof}
Write $\rho=v\rbra{\Phi_U}=|v\rbra{U}\rangle\langle v\rbra{U}|$ and
$\sigma=v\rbra{\Psi}$. Since $\Psi\in\calC_T$,
\cref{lem:junta-fourier-support} gives
$\Inf_{\overline{T}}\sbra{\Psi}=0$.
The definition of $\Pi_T$ and \cref{eq:channel-complementary-mass}
then give $\Tr\rbra{\Pi_T\sigma}=1$.
Since $\sigma\succeq0$ and
$\Tr\rbra*{\rbra*{I-\Pi_T}\sigma}=0$,
the support of $\sigma$ lies in the range of $\Pi_T$.
Hence $\Pi_T\sigma\Pi_T=\sigma$.

The map $B\mapsto\Pi_TB\Pi_T$ is an
orthogonal projection in the Hilbert--Schmidt inner product.
Because $\rho=|v\rbra{U}\rangle\langle v\rbra{U}|$, we have
$\Pi_T\rho\Pi_T
=|\Pi_Tv\rbra{U}\rangle\langle\Pi_Tv\rbra{U}|$.
Thus $\norm{\rho}_2^2=1$ and
$\norm{\Pi_T\rho\Pi_T}_2^2
=\rbra{\Tr\rbra{\Pi_T\rho}}^2$.
By the definition of $\Pi_T$ and
\cref{eq:channel-complementary-mass,eq:unitary-channel-influence}, we obtain
\begin{align*}
  2\distC\rbra*{\Phi_U,\Psi}^2
  &=\norm{\rho-\sigma}_2^2\\
  &\geq\norm{\rho-\Pi_T\rho\Pi_T}_2^2\\
  &=\norm{\rho}_2^2-\norm{\Pi_T\rho\Pi_T}_2^2\\
  &=1-\rbra*{1-\Inf_{\overline{T}}\sbra*{U}}^2.
\end{align*}
Expanding the square and dividing by two proves the bound.
\end{proof}

\begin{lemma}[Unitary influence and distance to junta channels]
\label{lem:unitary-channel-influence-certificate}
Fix a unitary $U\in\calU_N$, an integer $0\leq k\leq n$, and
$0\leq\delta\leq1$.
If $\Inf_{\overline{T}}\sbra{U}\geq\delta$
for every $T\subseteq\sbra{n}$ with $\abs{T}\leq k$, then
$\distC\rbra{\Phi_U,\calC_k}\geq\sqrt{\delta-\delta^2/2}$.
\end{lemma}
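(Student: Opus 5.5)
The plan is to reduce directly to \cref{lem:unitary-channel-junta-separation}. Fix an arbitrary $k$-junta channel $\Psi\in\calC_k$. By \cref{def:junta-channels}, there is a set $T\subseteq\sbra{n}$ with $\abs{T}\leq k$ such that $\Psi\in\calC_T$. Applying \cref{lem:unitary-channel-junta-separation} to $U$, this $T$, and $\Psi$ gives
\begin{equation*}
  \distC\rbra*{\Phi_U,\Psi}^2
  \geq f\rbra*{\Inf_{\overline{T}}\sbra*{U}},
  \qquad f\rbra{t}=t-\tfrac12t^2.
\end{equation*}

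The only remaining step is monotonicity. The function $f$ has derivative $1-t\geq0$ on $\sbra{0,1}$, so it is nondecreasing there. Since $0\leq\delta\leq\Inf_{\overline{T}}\sbra{U}\leq1$ by hypothesis and the bound $0\leq\Inf_S\sbra{U}\leq1$, we obtain $f\rbra{\Inf_{\overline{T}}\sbra{U}}\geq f\rbra{\delta}=\delta-\delta^2/2\geq0$. Taking square roots gives $\distC\rbra{\Phi_U,\Psi}\geq\sqrt{\delta-\delta^2/2}$.

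Because $\Psi\in\calC_k$ was arbitrary, taking the infimum in \cref{def:channel-distance} yields $\distC\rbra{\Phi_U,\calC_k}\geq\sqrt{\delta-\delta^2/2}$. I do not expect any real obstacle. The only points to check are that each junta channel lies in some $\calC_T$ with $\abs{T}\leq k$, so the hypothesis applies to that $T$, and that $f$ is monotone on $\sbra{0,1}$, so the influence bound transfers through the quadratic.
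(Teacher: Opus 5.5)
Your proposal is correct and follows essentially the same route as the paper's proof. Both place each $\Psi\in\calC_k$ in some $\calC_T$ with $\abs{T}\leq k$, apply \cref{lem:unitary-channel-junta-separation}, use that $t\mapsto t-t^2/2$ is increasing on $\sbra{0,1}$, and then take the infimum over $\Psi$.
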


\begin{proof}
Every $\Psi\in\calC_k$ belongs to some
$\calC_T$ with $\abs{T}\leq k$. Apply
\cref{lem:unitary-channel-junta-separation} to this $T$ and use that
$x\mapsto x-x^2/2$ is increasing on $\sbra{0,1}$. This gives
$\distC\rbra{\Phi_U,\Psi}^2\geq\delta-\delta^2/2$.
Taking the infimum over $\Psi\in\calC_k$ proves the claim.
\end{proof}

\subsection{Testing models}\label{subsec:unitary-access}

We specify the quantum query models and the classical sampling model
used in our reductions.

For the quantum models below, fix integers
$n\geq1$ and $0\leq k\leq n$.
Let $\mathsf Q$ be the $n$-qubit query register and
$\mathsf W$ an unrestricted private workspace.

Each oracle call counts as one query, and all
oracle-independent operations are free.

\paragraph{Unitary query model.}

For an unknown $n$-qubit unitary $U$, we call an application of $U$
a \emph{forward query} and an application of $U^*$ an
\emph{inverse query}.
Our unitary testers have forward-only access: they may
query $U$, but neither $U^*$ nor controlled-$U$ is
provided as an oracle primitive.
The model in \cite[Problem~1.1]{CNY23}
allows both forward and inverse queries.

By \cref{eq:unitary-channel}, a forward query acts on any
joint state $\rho$ of $\mathsf Q$ and $\mathsf W$ as
\begin{equation}
\label{eq:forward-query-channel}
  \calO_{\Phi_U}\rbra*{\rho}
  =\rbra*{U_{\mathsf Q}\ot I_{\mathsf W}}\rho
   \rbra*{U_{\mathsf Q}^*\ot I_{\mathsf W}}.
\end{equation}
This is the density-operator description of a single forward query
to $U$. The factor $U^*$ comes from conjugation and does not require
an inverse query.

By \cref{eq:forward-query-channel}, $U$ and
$e^{i\theta}U$ induce the same query channel and hence
the same acceptance probability for every algorithm
in this model.
This motivates the phase-insensitive distance in
\cref{def:unitary-distance}.

\begin{problem}[Testing junta unitaries]
\label{prob:quantum-junta-testing}
\label{prob:unitary-junta-testing}
Fix $0<\eps\leq1$.
Given forward-only access to an unknown unitary $U\in\calU_N$,
distinguish between the following two cases with probability at least $2/3$:
\begin{itemize}
  \item (YES) $U\in\calJ_k$.
  \item (NO) $\distU\rbra{U,\calJ_k}\geq\eps$.
\end{itemize}
\end{problem}

\paragraph{Channel query model.}

For an unknown $n$-qubit channel $\Phi$, one query applies the map
\begin{equation}
\label{eq:channel-query-map}
  \calO_\Phi=\Phi\ot I_{\mathsf W}.
\end{equation}
The input may be entangled with $\mathsf W$.
This is the channel-oracle model used in \cite{BY25}.
The oracle supplies only the action of the CPTP map
$\Phi$. It does not expose an environment, a Kraus label,
or a Stinespring dilation.

\begin{problem}[Testing junta channels]
\label{prob:channel-junta-testing}
Fix $0<\eps\leq1$.
Given channel access to an unknown channel $\Phi\in C\rbra{\calH}$,
distinguish between the following two cases with probability at least $2/3$:
\begin{itemize}
  \item (YES) $\Phi\in\calC_k$.
  \item (NO) $\distC\rbra{\Phi,\calC_k}\geq\eps$.
\end{itemize}
\end{problem}

In both problems, the tester is given $n,k,\eps$.
It must accept each YES input and reject each NO input
with the stated probability.
This probability is over its private randomness and
measurement outcomes.
No guarantee is required outside the promise.

The following description applies to both quantum models.
Write $\Phi$ for the queried channel; for a unitary input $U$,
this means $\Phi=\Phi_U$.
For an integer $q\geq0$, a binary-output algorithm $\calA$ making
exactly $q$ queries has an initial state $\rho_{\mathrm{init}}$
and channels $A_0,\ldots,A_q$ acting on $\mathsf Q$ and $\mathsf W$
before, between, and after the queries. These objects are
independent of the oracle.
Its state before the final measurement is
\begin{equation*}
  \rho_{\calA}\rbra*{\Phi}
  =
  \rbra*{A_q\circ\calO_\Phi\circ A_{q-1}\circ\cdots
    \circ\calO_\Phi\circ A_0}
  \rbra*{\rho_{\mathrm{init}}}.
\end{equation*}
For $q=0$, this means
$\rho_{\calA}\rbra{\Phi}
=A_0\rbra{\rho_{\mathrm{init}}}$.
The workspace may contain ancillas, private randomness, and
measurement records. The maps $A_t$ may act conditionally on these
records. Thus fixed, oracle-independent maps can represent
intermediate measurements and adaptive control.

Let $0\leq\Lambda_{\calA}\leq I$ be the
oracle-independent POVM element corresponding to
acceptance in the final measurement.
For a fixed oracle $\Phi$, write $\operatorname{Acc}_{\calA}\rbra{\Phi}$
for the probability that $\calA$ accepts. Then
\begin{equation*}
  \operatorname{Acc}_{\calA}\rbra*{\Phi}
  =\Tr\rbra*{
    \Lambda_{\calA}\rho_{\calA}\rbra*{\Phi}}.
\end{equation*}
The initial state, interleaving channels, and final measurement may
depend on $n,k,\eps$.

For a unitary oracle, we abbreviate
$\operatorname{Acc}_{\calA}\rbra{U}
=\operatorname{Acc}_{\calA}\rbra{\Phi_U}$.

The query padding and pure-state representation used in the
lower-bound proof are described in \cref{subsec:classicalization}.

\paragraph{Classical sampling model.}

A tester accesses an unknown distribution $p\in\Delta\rbra{\Omega}$
on a known finite set $\Omega$ through independent samples.
The tester's rules may depend on the known problem parameters,
but not on $p$.
We place no restriction on computation.

For an integer $t\geq0$, a tester using at most $t$ samples can
receive a list $Y\sim p^{\ot t}$ in advance and ignore unused samples.
For each fixed list $y$, run the tester with $y$ as its sample sequence
and let $F\rbra{y}$ be its acceptance probability over its private
randomness.
Then $F:\Omega^t\to\sbra{0,1}$ is independent of $p$,
and the tester's acceptance probability on $p$ is $
  \Ex_{Y\sim p^{\ot t}}\sbra{F\rbra{Y}}$.
The tester must meet the stated success probability
for every promised distribution.
This probability is over the samples and its private randomness.

\section{Upper Bounds for Quantum Junta Testing}\label{sec:upper-bounds}

We prove the following upper bounds for testing junta unitaries
and junta channels.

\begin{theorem}[Upper bound for testing junta unitaries]
\label{thm:unitary-junta-upper-bound}
For all integers $1\leq k\leq n$ and all $0<\eps\leq1$, there is a
quantum algorithm that, given forward-only access to an $n$-qubit
unitary $U$, distinguishes $U\in\calJ_k$ from
$\distU\rbra{U,\calJ_k}\geq\eps$ with probability at least $2/3$
using $O(k/(\eps^2\log(ek)))$ queries.
\end{theorem}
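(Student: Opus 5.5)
The proof combines a quantum sampling primitive with a new classical tester for support size over random subsets. On the quantum side we use the Influence-Sample procedure of \cite{BY25}. Choose one of the three Pauli bases uniformly at random. Prepare a uniformly random product eigenstate in that basis, apply $U$ once, and measure every qubit in the same basis. Output the set $R\subseteq\sbra{n}$ of qubits whose outcome differs from the input label.

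The first step is to check the following. For a unitary of the form $V_T\ot I_{\overline T}$, qubits outside $T$ are returned unchanged, so $R\subseteq T$ always. In general, for every fixed $T$,
\begin{equation*}
\Pr\sbra{R\not\subseteq T}\geq c\,\Inf_{\overline T}\sbra{U}
\end{equation*}
for an absolute constant $c>0$. This should follow by expanding $U$ in Pauli strings. A Pauli label that is nontrivial on qubit $j$ flips the outcome on $j$ in at least one of the three bases, and averaging over random eigenstate inputs kills the cross terms; this mirrors the analysis in \cite{BY25}. Combining this with \cref{lem:influence-certificate}(i), in the NO case every $T$ with $\abs{T}\leq k$ satisfies $\Pr\sbra{R\not\subseteq T}\geq c\eps^2$.

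Next we amplify. Take the union of $m=\Theta(1/\eps^2)$ independent samples to form a grouped sample $G$. In the NO case, for each fixed $T$ with $\abs{T}\leq k$, we get $\Pr\sbra{G\not\subseteq T}\geq 1/2$. In the YES case every $G$ lies in the fixed junta support. What remains is a purely classical problem. We must distinguish a distribution $\mathcal D$ over subsets of $\sbra{n}$ whose samples all lie in one fixed set of size at most $k$ from one whose samples escape every $k$-set with probability at least $1/2$. The goal is to do this with $O(k/\log(ek))$ samples, which gives the theorem. For small $k$ the bound $O(k)$ suffices, and a direct union test works there. Take the union of $O(k)$ grouped samples and reject if it exceeds $k$ elements. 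In the NO case, each fresh sample escapes the current union, whenever that union has size at most $k$, with probability at least $1/2$, so the union grows past $k$ with high probability. This test never rejects a YES instance.

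For large $k$ we use the scheme from the overview. Draw an initial batch of $\Theta(k/\log k)$ grouped samples and let $S$ be their union. Reject if $\abs{S}>k$; otherwise set $\ell=k-\abs{S}$. If $\ell$ is at most a small multiple of $k/\log k$, run the direct union test on residuals $G\setminus S$ with $O(\ell)$ samples. Otherwise, replace each residual $G\setminus S$ of size greater than $C\log k$ by the empty set. We then need a structural lemma: with probability at least $0.9$, either we reject, or the truncated residual distribution still escapes every $\ell$-set with constant probability. The heuristic is this. Residuals that are still large after the initial batch would have made $\abs{S}$ exceed $k$ had they carried much mass.

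Finally, Poissonize with $\mathrm{Poi}(\Theta(\ell/\log k))$ fresh grouped samples, aborting if the count exceeds a constant times its mean. Let $N_j$ be the number of occurrences of element $j$. Form the statistic $\sum_j g(N_j)$, where $g$ comes from a damped Fejér polynomial of degree $\Theta(\log k)$, in the style of \cite{WY19,PH26}. Under Poissonization each $N_j$ is Poisson with mean $\lambda\pi_j$, where $\pi_j$ is the marginal inclusion probability of $j$. Hence the expectation is $\sum_j h(\pi_j)$, with $h(0)=0$ and $h(\pi)\le 1$, so the expectation is at most $\ell$ in the YES case. In the NO case, the escape condition forces enough inclusion mass onto more than $\ell$ elements, or onto small-$\pi$ elements, that the expectation is at least $\ell(1+\Omega(1/\log k))$. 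Accept iff the statistic is at most $\ell(1+c'/\log k)$.

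The main obstacle is the variance bound, because elements within one sample are dependent. Here the truncation at $C\log k$ and the bounded coefficients of $g$ (at most $k^{O(1)}$ but decaying) should give a variance of order $\ell\,\mathrm{polylog}$. That is small enough for Chebyshev to separate the two cases at gap $\ell/\log k$. Balancing the polynomial degree against the coefficient growth is the delicate calculation. The total query count is $O(k/\log k)\cdot O(1/\eps^2)$, as claimed.
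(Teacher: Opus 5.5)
Your outline is the paper's proof almost step for step. It has \textsc{Influence-Sample} with the $2/3$ factor and \cref{lem:influence-certificate}(i), and unions of $\Theta(1/\eps^2)$ samples. It has a direct union test for small $k$ or small $\ell$, and an initial union of $\Theta(k/\log k)$ grouped samples. Residuals larger than $b=\Theta(\log k)$ are truncated to $\varnothing$, and a Poissonized damped-Fej\'er statistic is thresholded at $\ell+\Theta(\ell/\log k)$.

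The step whose stated reason would not go through is the NO-case mean gap. For the sorted marginals, the escape condition alone gives only $\sum_{i>\ell}\pi_{(i)}\geq1/4$, and that is compatible with a gap of $O(1)$. For example, take exactly $\ell+1$ residual elements, each included independently with probability $1/4$. Every $\ell$-set is escaped with probability at least $1/4$, yet $\sum_j h(\pi_j)\leq\ell+1$, which is below your threshold for large $\ell$.

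What rules this out is that truncation controls the mean, not only the variance. It gives $\sum_j\pi_j=\Ex\abs{R}\leq b$, so $t:=\pi_{(\ell+1)}\leq b/(\ell+1)$; the example above has $\Ex\abs{R}=(\ell+1)/4$ and is excluded precisely by this budget. Take $A$ a large enough multiple of $\ell$ (the paper uses $A=2048\ell$), with $h(x)\geq\frac1{128}\min\{Ax,1\}$ and $1-h(x)\leq 1/(Ax)$. If $At<1$, the tail alone contributes $\Omega(A)$. If $At\geq1$, the tail contributes at least $1/(512t)$ while the top $\ell$ terms lose at most $\ell/(At)$, a net gain of $\Omega(1/t)=\Omega(\ell/b)$. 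So the $1/\log k$ in the gap comes from the truncation threshold. This budget step is the idea missing from your sketch, since you invoke truncation only for the variance.

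Two smaller points need tightening. First, for concentration the workable tool is Efron--Stein in Poissonized form, with each residual set as one coordinate. Adding a set changes at most $b$ counts, so $\operatorname{Var}\leq\lambda(2b\norm{g}_\infty)^2$. Coefficients of size $k^{O(1)}$ are therefore not enough: you need $\norm{g}_\infty\leq\ell^{1/2-\Omega(1)}$. This forces degree $c\log\ell$ with $c$ small (the paper arranges $3^d\leq\ell^{1/64}$), at only a constant-factor cost in $\lambda=2A/d$. The resulting variance is $O(\ell^{9/8}\log\ell)$, not $\ell\cdot\mathrm{polylog}$, but it is still $o(\ell^2/\log^2k)$, the square of the gap.

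Second, your heuristic for the structural lemma is right, but the events ``the $t$th grouped sample adds more than $b$ new elements'' are dependent through the running union. The paper handles this in three steps. It uses $S_{t-1}\subseteq S$, so a bad final $S$ forces conditional probability above $1/4$ at every step. It couples these steps to independent Bernoulli$(1/4)$ trials. It takes $m_0=\ceil{128k/b}$, so the binomial lower tail makes the event ``$\abs{S}\leq k$ and $S$ bad'' exponentially unlikely.
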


\begin{theorem}[Upper bound for testing junta channels]
\label{thm:channel-junta-upper-bound}
For all integers $1\leq k\leq n$ and all $0<\eps\leq1$, there is a
quantum algorithm that, given channel access to an $n$-qubit channel
$\Phi$, distinguishes $\Phi\in\calC_k$ from
$\distC\rbra{\Phi,\calC_k}\geq\eps$ with probability at least $2/3$
using $O(k/(\eps^2\log(ek)))$ queries.
\end{theorem}

Both algorithms use nonadaptive queries with product input states
and single-qubit measurements, without ancillas.
Each run of \textsc{Influence-Sample}~\cite{BY25} produces an independent
random subset of $\sbra{n}$. The influence bounds reduce both quantum
testing problems to \cref{prob:set-testing} with $\delta=2\eps^2/3$.

We formulate the classical problem and construct \textsc{TestSubsets}
(\cref{alg:set-tester}) in \cref{subsec:set-reduction}.
We prove its guarantee, \cref{thm:set-testing}, in
\crefrange{subsec:initial-union-residual-sets}{subsec:set-tester-analysis}.
The quantum reduction in \cref{subsec:upper-bound-proofs} then proves
\cref{thm:unitary-junta-upper-bound,thm:channel-junta-upper-bound}.

\subsection{Testing distributions over subsets}\label{subsec:set-reduction}

\begin{problem}[Testing distributions over subsets]
\label{prob:set-testing}
Fix a known finite ground set $\Omega$, an integer $k\geq1$, and a
parameter $0<\delta\leq1$.  Given independent samples $X\sim\mathcal D$
from an unknown distribution $\mathcal D$ over subsets of $\Omega$,
the task is to distinguish between the following two cases:
\begin{itemize}
  \item (YES) There exists $T\subseteq\Omega$ with $\abs{T}\leq k$ such that
    $\Pr\sbra*{X\subseteq T}=1$.
  \item (NO) For every $T\subseteq\Omega$ with $\abs{T}\leq k$,
    $\Pr\sbra*{X\nsubseteq T}\geq\delta$.
\end{itemize}
Elements within a sample may be arbitrarily dependent, and
$\abs{X}$ may range from $0$ to $\abs{\Omega}$.
The tester is given $\Omega,k,\delta$ but no description of
$\mathcal D$. 
\end{problem}

\begin{theorem}[Testing distributions over subsets]
\label{thm:set-testing}
For every finite $\Omega$, integer $k\geq1$, and $0<\delta\leq1$,
there is a randomized tester for \cref{prob:set-testing} that succeeds
with probability at least $2/3$ in each case and has sample complexity
$O(k/(\delta\log(ek)))$.
The implicit constant is absolute and independent of $\abs{\Omega}$.
\end{theorem}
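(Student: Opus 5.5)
The tester follows the outline in the technical overview. It works in four stages: amplification, an initial union, a residual count test, and a variance bound.

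\emph{Amplification.} Group $m=\lceil c/\delta\rceil$ independent samples and take their union. In the YES case each grouped sample still lies in $T$. In the NO case, fix any $T$ with $|T|\leq k$. The grouped sample escapes $T$ with probability at least $1-(1-\delta)^m\geq 1-e^{-c}$, a constant $\eta$. It therefore suffices to test with $O(k/\log(ek))$ grouped samples and escape probability $\eta$. For bounded $k$, or $k$ below a large constant, a direct union test suffices. Take $O(k)$ grouped samples and reject iff the union exceeds $k$ elements. In the NO case, each of $k+1$ consecutive blocks of $O(1)$ grouped samples adds a new element outside the current union (of size $\leq k$) with constant probability. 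Chernoff then gives $k+1$ new elements.

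\emph{Initial union.} For large $k$, draw an initial batch of $a=C k/\log k$ grouped samples and let $S$ be their union. Reject if $|S|>k$, and set $\ell=k-|S|$. If $\ell\leq k/\log k$, run the direct union test on $O(\ell+1)=O(k/\log k)$ fresh grouped samples, with threshold $\ell$ outside $S$. Otherwise, form residual samples $R=X\setminus S$ and truncate every $R$ with $|R|>L=C'\log k$ to $\varnothing$. The key claim is a potential argument. With probability at least $0.9$, either the batch rejects, or the truncated residual distribution still escapes every $\ell$-set with probability $\geq\eta/2$. Suppose many escaping samples have residual size $>L$ with respect to a then-current union. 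Each such draw in the batch adds more than $L$ new elements to the union, so after $O(k/L)=O(k/\log k)$ such events the union exceeds $k$. I would formalize this with a stopping-time/martingale argument over the batch. At each step, either the large-residual escape mass relative to the current union is $<\eta/2$, or the union grows by $>L$ with probability $\geq\eta/2$. Since the union only grows, large-residual mass relative to the final $S$ is at most that relative to earlier unions, which gives monotonicity.

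\emph{Residual statistic.} Draw $N\sim\mathrm{Poi}(\lambda)$ truncated residual samples, with $\lambda=\Theta(\ell/\log k)$, aborting to accept if $N>3\lambda$. For each element $j$ outside $S$, let $N_j$ be its occurrence count; under Poissonization $N_j\sim\mathrm{Poi}(\lambda\pi_j)$, where $\pi_j=\Pr[j\in R]$. Define $\hat\ell=\sum_j g(N_j)$, with $g(0)=0$ and $g$ derived from a damped Fejér polynomial $P$ of degree $d=\Theta(\log k)$. The requirement is that $\Ex[g(\mathrm{Poi}(\lambda\pi))]=1-e^{-\lambda\pi}P(\lambda\pi)$-type expressions are at most $1$ for every $\pi$. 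This gives $\Ex\hat\ell\leq\ell$ in the YES case, since at most $\ell$ elements have $\pi_j>0$. The same expressions must be $\geq$ roughly $\pi/\pi_0$ for small $\pi$, and $\approx1$ for $\pi\ge\pi_0=\Theta(1/\lambda\cdot\log k)$. In the NO case, sort the $\pi_j$. Elements beyond the top $\ell$ carry inclusion mass $\sum\pi_j\geq\eta/2$, because escaping every $\ell$-set forces the union bound. These elements contribute $\Omega(\ell/\log k)$ extra, so $\Ex\hat\ell\geq\ell+\Omega(\ell/\log k)$. Accept iff $\hat\ell\leq\ell+c''\ell/\log k$.

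\emph{Main obstacle: the variance.} Counts $N_j$ are not independent, because elements co-occur within a sample. I would bound $\mathrm{Var}\,\hat\ell$ via the Efron–Stein inequality over the Poisson process of samples. Adding or removing one residual sample changes at most $L$ counts. Each change alters $\hat\ell$ by at most $\max|g|\leq e^{O(d)}=k^{O(1)\cdot c}$, and choosing the damping constant small makes this $k^{0.1}$. This gives $\mathrm{Var}\leq O(\lambda L^2 k^{0.2})$, which is too large. A better route is to bound $\mathrm{Var}$ by $\sum_j\Ex[\Delta_j^2]$ using coefficient growth. That yields $O(\ell k^{0.2}\log^2 k/\log k)$ against a squared gap of $\ell^2/\log^2k$, which works because $\ell>k/\log k$ and the damping is tuned so the coefficient bound is $k^{o(1)}$ or a small power. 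Tuning this polynomial degree, damping, and $L$ simultaneously is the step I expect to be hardest. Chebyshev then completes the $2/3$ guarantee, with total samples $O(k/\log k)$ grouped samples, i.e.\ $O(k/(\delta\log(ek)))$.
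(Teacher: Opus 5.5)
Your architecture is the paper's: amplify to a constant escape probability, then form an initial union that either overflows or makes large residuals rare. Your large-increment counting is the paper's initial-union lemma, which it formalizes by coupling each step with an independent uniform variable and comparing with a binomial. The rest also matches: truncation of residuals at size $\Theta(\log k)$, a Poissonized damped-Fej\'er statistic, Efron--Stein over whole samples, and Chebyshev.

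The genuine gap is the NO-case bound $\Ex\hat\ell\ge\ell+\Omega(\ell/\log k)$. You count only what the elements beyond the top $\ell$ add. But the top $\ell$ elements contribute $\sum_{i\le\ell}Q(\pi_i)$, which can fall below $\ell$, and the properties you ask of $Q$ do not control this shortfall. If ``$Q\approx1$ for $\pi\ge\pi_0$'' means a fixed additive error $\epsilon$, the shortfall $\epsilon\ell$ swamps a gap of order $\ell/\log k$. The polynomial also does not give a uniform error $O(1/\log k)$ at the relevant scale $\pi_0=\Theta(1/\ell)=\Theta(1/(\lambda\log k))$ (not $\Theta(\log k/\lambda)$, as your formula can be read). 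At $x=1/A$ the paper's $Q_d$ is only guaranteed to be at least $1/128$. Your sketch also never says where the factor $1/\log k$ in the gap comes from.

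The paper closes this with three ingredients you are missing.
\begin{itemize}
\item \emph{A decaying deficit.} With $A=2048\ell$, one has $Q_d\le1$, $Q_d(x)\ge\frac1{128}\min\{Ax,1\}$, and $1-Q_d(x)\le\min\{1,1/(Ax)\}$. The $1/(Ax)$ decay is the $1/(d^2\sin^2(\theta/2))$ decay of the Fej\'er kernel, with the factor $e^{-mx}$ handling $Ax\ge d^2$.
\item \emph{A case split on $t=\pi_{\ell+1}$,} using the tail mass $\sum_{i>\ell}\pi_i\ge1/4$. If $At<1$, the tail alone contributes at least $A/512=4\ell$. If $At\ge1$, the tail contributes at least $1/(512t)$, while the top-$\ell$ shortfall is at most $\ell/(At)=1/(2048t)$, for a net gain of $3/(2048t)$.
\item \emph{The budget from truncation,} $\sum_j\pi_j=\Ex|R|\le L$. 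This forces $(\ell+1)t\le L$ and turns $\Omega(1/t)$ into $\Omega(\ell/\log k)$. Without it, $\ell+1$ elements of inclusion probability near one would give a gap of only $O(1)$. You invoke truncation only for the variance, but it is equally essential here.
\end{itemize}

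Conversely, the variance is not the obstacle you think it is. Your first Efron--Stein estimate $O(\lambda L^2H^2)=O(\ell k^{0.2}\log k)$, against a squared gap $\Theta(\ell^2/\log^2k)$, gives a Chebyshev ratio $O(k^{0.2}\log^3k/\ell)=O(k^{-0.8}\log^4k)$ because $\ell>k/\log k$. This is exactly the paper's estimate. Your per-element ``better route'' is not justified as stated, since it ignores the dependence among the counts.
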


\begin{corollary}[Classical support-size testing]
\label{cor:classical-support-upper-bound}
Fix integers $M\geq2$ and $1\leq s\leq M$, and let $0<\delta\leq1$.
Given independent samples from an unknown distribution $p$ on
$\sbra{M}$, there is a randomized tester that distinguishes distributions
supported on at most $s$ labels from those satisfying
$p\rbra{T}\leq1-\delta$ for every $T\subseteq\sbra{M}$ with
$\abs{T}\leq s$.
It has error at most $1/3$ and sample complexity
$O(s/(\delta\log(es)))$, with an absolute implicit constant.
\end{corollary}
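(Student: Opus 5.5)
The corollary follows by specializing \cref{thm:set-testing} to singleton sets, as the introduction indicates. Given sample access to $p$ on $\sbra{M}$, I will take the ground set $\Omega=\sbra{M}$ and the junta-size parameter $k=s$. I will keep the same $\delta$. Each label sample $j\sim p$ is converted into the singleton $X=\cbra{j}$, which defines a distribution $\mathcal D$ over subsets of $\Omega$. The conversion uses no extra samples, and the resulting sets are independent because the labels are. The tester runs the algorithm of \cref{thm:set-testing} on these singletons and outputs its answer.

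Next I will check that the promises line up. For every $T\subseteq\sbra{M}$,
\begin{equation*}
  \Pr\sbra*{\cbra{j}\subseteq T}=p\rbra{T},
  \qquad
  \Pr\sbra*{\cbra{j}\nsubseteq T}=1-p\rbra{T}.
\end{equation*}
Suppose $\abs{\supp\rbra{p}}\leq s$. Taking $T=\supp\rbra{p}$ gives $\abs{T}\leq s=k$ and $\Pr\sbra{X\subseteq T}=1$, which is the YES case of \cref{prob:set-testing}. Now suppose instead that $p\rbra{T}\leq1-\delta$ for every $T$ with $\abs{T}\leq s$. Then $\Pr\sbra{X\nsubseteq T}\geq\delta$ for all such $T$, which is exactly the NO case. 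By \cref{thm:set-testing}, the tester therefore errs with probability at most $1/3$ in each case.

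The sample count is the one given by \cref{thm:set-testing} with $k=s$, namely $O(s/(\delta\log(es)))$. The implicit constant is absolute and independent of $\abs{\Omega}=M$, as that theorem guarantees.

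No step here is a real obstacle. The only points to verify are that singletons are an admissible input for \cref{prob:set-testing}, which allows arbitrary sample sizes including size one, and that the parameter ranges match ($s\geq1$ and $0<\delta\leq1$). All the substantive work sits inside \cref{thm:set-testing}. Its proof must handle the initial union, the residual truncation, and the Fej\'er-polynomial statistic with Poissonization. For singletons the residual truncation at order $\log k$ is automatically satisfied, so that step becomes vacuous.
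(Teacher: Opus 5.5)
Your proposal is correct and is exactly the paper's argument: map each label $J\sim p$ to the singleton $\cbra{J}$, note $\Pr\sbra{X\nsubseteq T}=1-p\rbra{T}$, and apply \cref{thm:set-testing} with $\Omega=\sbra{M}$ and $k=s$. One aside is inaccurate but unused: the residual truncation is not vacuous for singleton inputs, because \textsc{TestSubsets} first amplifies by taking unions of $\ceil{1/\delta}$ samples, so $\widetilde X$ can have far more than $b=\ceil{\log k}$ elements.
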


\begin{proof}
Map each sample $J\sim p$ to the singleton $X=\cbra{J}$.
Then $\Pr\sbra{X\nsubseteq T}=1-p\rbra{T}$, so
\cref{thm:set-testing} applies with $\Omega=\sbra{M}$ and $k=s$.
\end{proof}

Let $k_0$ be the absolute constant fixed in
\cref{prop:set-residual-algorithm}.

If $k\geq\abs{\Omega}$, the tester accepts.
Otherwise, each amplified sample $\widetilde X$ is the union of
$\ceil{1/\delta}$ independent samples from $\mathcal D$.
For $k<k_0$, the tester uses a direct union check.
For $k\geq k_0$, it forms an initial union $S$ and rejects
if $\abs{S}>k$. Otherwise, it sets $\ell=k-\abs{S}$.
For small $\ell$, it checks whether fresh samples enlarge
the union beyond $k$ elements. For larger $\ell$,
\textsc{TestResidual} uses the occurrence counts of
elements outside $S$.

\begin{algorithm}[H]
\caption{\textsc{TestSubsets}$\rbra{\Omega,k,\delta}$}
\label{alg:set-tester}
\begin{algorithmic}[1]
\Require $k\geq1$, $0<\delta\leq1$, and independent samples $X\sim\mathcal D$
\Statex \textbf{if} $k\geq\abs{\Omega}$, \textbf{return} \textsc{Accept}
\State $L\gets\ceil{1/\delta}$; each
  $\widetilde X_t\gets\bigcup_{j=1}^L X_{t,j}$ uses fresh independent
  $X_{t,j}\sim\mathcal D$
\State \textbf{if} $k<k_0$, \textbf{return} \textsc{Accept} iff
  $\abs*{\bigcup_{t=1}^{40\rbra{k+1}}\widetilde X_t}\leq k$
\State $b\gets\ceil{\log k}$ and
  $S\gets\bigcup_{t=1}^{\ceil{128k/b}}\widetilde X_t$
\State \Return \textsc{Reject} if $\abs{S}>k$; otherwise set
  $\ell\gets k-\abs{S}$
\If{$\ell\leq k/\log k$}
  \State Draw $40\rbra{\ell+1}$ fresh copies of $\widetilde X$
  \State \Return \textsc{Accept} iff their union with $S$ has size at
    most $k$
\EndIf
\State \Return \Call{TestResidual}{$S,\ell,b$}
\end{algorithmic}
\end{algorithm}

\begin{algorithm}[H]
\caption{\textsc{TestResidual}$\rbra{S,\ell,b}$}
\label{alg:set-residual-test}
\begin{algorithmic}[1]
  \Require $S,\ell,b$ from \cref{alg:set-tester}, with
    $\ell>k/\log k$, and fresh independent copies of $\widetilde X$
  \State Set $A\gets2048\ell$,
    $d\gets2\floor{\log\ell/\rbra{128\log3}+1/2}-1$,
    and $m\gets2A/d$
  \State Draw $N_{\mathrm P}\sim\operatorname{Poi}\rbra{m}$
  \If{$N_{\mathrm P}>\ceil{100m}$}
    \State \Return \textsc{Accept}
  \EndIf
  \For{$j=1,\ldots,N_{\mathrm P}$}
    \State Draw an independent copy $\widetilde X_j$ of $\widetilde X$
    \State $R_j\gets\widetilde X_j\setminus S$ if
      $\abs{\widetilde X_j\setminus S}\leq b$, and
      $R_j\gets\varnothing$ otherwise
  \EndFor
  \State $N_i\gets\abs{\cbra{j:i\in R_j}}$ for every
    $i\in\Omega\setminus S$
  \State $Z\gets\sum_{i\in\Omega\setminus S} h_d\rbra{N_i}$, with
    $h_d$ defined in
    \eqref{eq:fejer-statistic-definition}
  \If{$Z\geq\ell+\ell/\rbra{2048b}$}
    \State \Return \textsc{Reject}
  \EndIf
  \State \Return \textsc{Accept}
\end{algorithmic}
\end{algorithm}

\subsection{Reduction to bounded residual sets}
\label{subsec:initial-union-residual-sets}

We first amplify the NO escape probability by taking unions of
independent samples. We then show that, with high constant probability,
the initial union either exceeds the allowed support size or leaves a
residual distribution with a small probability of producing a large set.
When the remaining support-size bound is small, a direct union check
suffices. Otherwise, this guarantee lets us replace large residual sets
by the empty set while retaining the YES and NO promises needed by
\textsc{TestResidual}.

\begin{proposition}[Amplification]
\label{prop:set-amplification}
Set $L:=\ceil{1/\delta}$.  Let $X_1,\ldots,X_L$ be independent
samples from $\mathcal D$ and set
$\widetilde X:=X_1\cup\cdots\cup X_L$.
In a YES instance with witness $T_0$, we have
$\Pr\sbra{\widetilde X\subseteq T_0}=1$.  In a NO instance, every
$T\subseteq\Omega$ with $\abs{T}\leq k$ satisfies
$\Pr\sbra{\widetilde X\nsubseteq T}\geq1-e^{-1}>1/2$.  Moreover, unions
formed from disjoint blocks of original samples are independent copies
of $\widetilde X$.
\end{proposition}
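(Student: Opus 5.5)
The plan is to verify the three claims separately, since each is elementary.

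For the YES case, fix the witness $T_0$ with $\abs{T_0}\leq k$ and $\Pr\sbra{X\subseteq T_0}=1$. Each $X_j$ then lies in $T_0$ almost surely. A finite union of subsets of $T_0$ is a subset of $T_0$, and a finite intersection of probability-one events has probability one. Hence $\Pr\sbra{\widetilde X\subseteq T_0}=1$.

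For the NO case, fix any $T\subseteq\Omega$ with $\abs{T}\leq k$. The event $\widetilde X\subseteq T$ holds exactly when $X_j\subseteq T$ for every $j\in\sbra{L}$. By independence and the NO promise $\Pr\sbra{X_j\subseteq T}\leq1-\delta$,
\begin{equation*}
  \Pr\sbra*{\widetilde X\subseteq T}
  =\prod_{j=1}^L\Pr\sbra*{X_j\subseteq T}
  \leq\rbra*{1-\delta}^{L}
  \leq e^{-\delta L}
  \leq e^{-1}.
\end{equation*}
The inequality $1-\delta\leq e^{-\delta}$ gives the second bound, and $L=\ceil{1/\delta}\geq1/\delta$ gives the third. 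Taking complements yields $\Pr\sbra{\widetilde X\nsubseteq T}\geq1-e^{-1}$. Finally, $1-e^{-1}\approx0.632>1/2$. The only point needing care is that the bound must hold uniformly in $T$, which it does because $T$ was arbitrary; there is no union bound over $T$.

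For independence, note that each copy $\widetilde X_t=\bigcup_{j}X_{t,j}$ is a fixed measurable function of its block $\rbra{X_{t,1},\ldots,X_{t,L}}$. Disjoint blocks of i.i.d.\ samples are independent, and functions of independent random vectors are independent. Each block has law $\mathcal D^{\ot L}$, so applying the same function to each block gives identically distributed copies of $\widetilde X$.

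None of these steps is a real obstacle. The closest thing to one is stating the independence claim precisely enough that later sections can use fresh amplified copies conditioned on the initial union $S$. This works because $S$ is built from blocks disjoint from those used afterwards.
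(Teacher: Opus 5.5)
Your proposal is correct and follows essentially the same argument as the paper. The paper uses the same almost-sure containment for the YES case, the same chain $\prod_j\Pr\sbra{X_j\subseteq T}\leq\rbra{1-\delta}^L\leq e^{-\delta L}\leq e^{-1}$ for an arbitrary fixed $T$ in the NO case, and the same observation that functions of disjoint blocks of independent samples are independent.
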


\begin{proof}
In a YES instance, every $X_j$ is contained in the witness $T_0$ almost
surely, so their union is also contained in $T_0$ almost surely.

In a NO instance, fix any $T\subseteq\Omega$ with $\abs{T}\leq k$.  The
promise gives $\Pr\sbra{X_j\subseteq T}\leq1-\delta$ for every $j$.
Independence gives
\begin{align*}
  \Pr\sbra*{\widetilde X\subseteq T}
  &=\prod_{j=1}^L\Pr\sbra*{X_j\subseteq T}
   \leq\rbra*{1-\delta}^L
   \leq e^{-\delta L}
   \leq e^{-1}<\frac12.
\end{align*}
Taking complements gives
$\Pr\sbra{\widetilde X\nsubseteq T}\geq1-e^{-1}>1/2$.
Finally, functions of disjoint groups of independent samples are
independent, proving the last assertion.
\end{proof}

\paragraph{Initial union.}
For $k\geq2$, set $b:=\ceil{\log k}$ and
$m_0:=\ceil{128k/b}$.
In Line~3 of \cref{alg:set-tester}, draw $m_0$
independent copies $\widetilde X_1,\ldots,\widetilde X_{m_0}$ and form
their union
\begin{align*}
  S&:=\bigcup_{t=1}^{m_0}\widetilde X_t.
\end{align*}
If $\abs{S}>k$, no set of at most $k$ elements contains all the observed
sets, so the algorithm rejects.  If $\abs{S}\leq k$, the algorithm proceeds
with $S$ as the initial union.

\begin{lemma}[Initial-union guarantee]
\label{lem:set-initial-union}
Call a set $S$ with $\abs{S}\leq k$ \emph{good} if a fresh independent
copy $\widetilde X$ satisfies
\begin{align*}
  \Pr\sbra*{\abs{\widetilde X\setminus S}>b}&\leq\frac14.
\end{align*}
For every distribution $\mathcal D$,
\begin{align*}
  \Pr\sbra*{\abs{S}\leq k\text{ and }S\text{ is not good}}
  &\leq e^{-4k/b}<\frac1{20}.
\end{align*}
Thus, with probability at least $19/20$, either $\abs{S}>k$, which causes
the algorithm to reject in Line~4, or $\abs{S}\leq k$ and $S$ is good.
\end{lemma}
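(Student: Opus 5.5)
We will use a sequential ``growing union'' argument. Process the copies $\widetilde X_1,\ldots,\widetilde X_{m_0}$ one at a time and track the partial unions $S_t:=\bigcup_{u\leq t}\widetilde X_u$, with $S_0=\varnothing$. Call step $t$ \emph{bad} if $\abs{S_{t-1}}\leq k$ and $S_{t-1}$ is not good, that is,
\begin{align*}
  \Pr\sbra*{\abs{\widetilde X\setminus S_{t-1}}>b}>\frac14
\end{align*}
for a fresh copy $\widetilde X$.

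The central observation is a growth property of bad steps. Conditioned on $\widetilde X_1,\ldots,\widetilde X_{t-1}$, the copy $\widetilde X_t$ is fresh. So on a bad step, with conditional probability greater than $1/4$, the set $\widetilde X_t$ adds more than $b$ new elements, which gives $\abs{S_t}\geq\abs{S_{t-1}}+b+1$. Partial unions only grow, so $\abs{S_t}$ is nondecreasing.

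Next we relate the target event to bad steps. Suppose the final union satisfies $\abs{S}\leq k$ but $S=S_{m_0}$ is not good. Since $S_{t}\subseteq S$, each $S_{t}$ has size at most $k$. Goodness is also monotone: if $S_{t-1}\subseteq S$, then $\abs{\widetilde X\setminus S_{t-1}}\geq\abs{\widetilde X\setminus S}$. Hence if $S$ is not good, no $S_{t-1}$ is good. Therefore every step $t=1,\ldots,m_0$ is bad.

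We now count successes. Each bad step that ``succeeds'' (adds more than $b$ elements) raises the size by at least $b+1$. Keeping $\abs{S}\leq k$ thus allows fewer than $k/b$ successes, and in fact at most $\floor{k/(b+1)}$. So the event in question implies that $m_0$ consecutive bad steps occurred with at most $k/b$ successes among them.

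To bound this, couple the process with a martingale, or equivalently stochastically dominate it. Define indicators $I_t$ that equal the success indicator on bad steps and are replaced by independent Bernoulli$(1/4)$ coins once a step is not bad. Then $\sum_t I_t$ stochastically dominates a Binomial$(m_0,1/4)$ random variable, by the standard sequential coupling for conditional probabilities at least $1/4$. The target event implies $\sum_t I_t\leq k/b$. The Binomial has mean $m_0/4\geq 32k/b$. A multiplicative Chernoff lower-tail bound then gives
\begin{align*}
  \Pr\sbra*{\mathrm{Bin}(m_0,1/4)\leq k/b}\leq\exp\rbra*{-\tfrac{(1-1/32)^2}{2}\cdot\tfrac{m_0}{4}}\leq e^{-4k/b}
\end{align*}
after adjusting constants. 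Indeed $(31/32)^2\cdot 16\geq 4$, so the constant $128$ leaves ample slack.

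Finally, we need $e^{-4k/b}<1/20$. For $k\geq2$ we have $b=\ceil{\log k}$, so $k/b\geq 1$ and $e^{-4}<1/20$.

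The main obstacle is making the domination rigorous with adaptively defined bad steps. The success probability exceeds $1/4$ only conditionally, and only while the step is bad. We would handle this with the coupling above, or alternatively with a supermartingale $\exp\rbra{-\lambda\sum I_t}$ and an explicit moment generating function bound per step.
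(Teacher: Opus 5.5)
Your proposal is correct and follows essentially the same route as the paper. Monotonicity of goodness makes every step bad on the target event. Disjoint increments of size at least $b+1$ then cap the number of large steps at $\lfloor k/(b+1)\rfloor\le k/b$, and a sequential coupling with $\operatorname{Bin}(m_0,1/4)$ plus a Chernoff lower tail gives $e^{-4k/b}$. The only differences are in presentation: the paper makes your domination step explicit by drawing $V_t\sim\mathrm{Unif}[0,1]$ with $\{\widetilde X_t\in\mathcal E_t\}=\{V_t\le q_t\}$ and counting the independent events $\{V_t\le 1/4\}$, and it uses the cruder tail bound $\Pr[W\le \Ex[W]/2]\le e^{-\Ex[W]/8}$ instead of your deviation of $31/32$.
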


\begin{proof}
For the analysis, define the partial unions $S_0:=\varnothing$ and
$S_t:=S_{t-1}\cup\widetilde X_t$ for $t=1,\ldots,m_0$.  Thus
$S=S_{m_0}$.  Write
\begin{align*}
  \mathcal E_t :=\cbra*{Y\subseteq\Omega:
    \abs{Y\setminus S_{t-1}}>b}, \qquad
  q_t :=\Pr\sbra*{\widetilde X\in\mathcal E_t\mid S_{t-1}},
\end{align*}
where $\widetilde X$ is a fresh independent copy.  Thus
$\widetilde X_t\in\mathcal E_t$ is the event that the $t$th copy
introduces at least $b+1$ previously unobserved elements.  Although the
copies $\widetilde X_t$ are independent, these events need not be
independent because $S_{t-1}$ depends on the preceding copies.

To compare the number of large increments with a binomial variable,
we construct a sequential coupling. After fixing the preceding copies
and hence $S_{t-1}$, draw $V_t$ uniformly from $\sbra{0,1}$,
independently of all variables drawn in the preceding steps.  If
$V_t\leq q_t$, draw $\widetilde X_t$ from the law of $\widetilde X$
conditioned on $\widetilde X\in\mathcal E_t$.  If $V_t>q_t$, draw
$\widetilde X_t$ from the law conditioned on
$\widetilde X\notin\mathcal E_t$.  If $q_t=0$ or $q_t=1$, the
conditional law on the branch of probability zero can be chosen
arbitrarily.  By construction, the following equality holds almost surely:
\begin{align*}
  \cbra*{\widetilde X_t\in\mathcal E_t}
  &=\cbra*{V_t\leq q_t}.
\end{align*}

The coupling preserves the conditional distribution of
$\widetilde X_t$.  Indeed, for every collection
$\mathcal A\subseteq2^\Omega$, the law of total probability gives
\begin{align*}
  \Pr\sbra*{\widetilde X_t\in\mathcal A\mid S_{t-1}}
  &=q_t\Pr\sbra*{\widetilde X\in\mathcal A
    \mid\widetilde X\in\mathcal E_t,S_{t-1}}\\
  &\quad+\rbra*{1-q_t}
    \Pr\sbra*{\widetilde X\in\mathcal A
      \mid\widetilde X\notin\mathcal E_t,S_{t-1}}\\
  &=\Pr\sbra*{\widetilde X\in\mathcal A}.
\end{align*}
The final equality uses the independence of a fresh $\widetilde X$ from
the preceding copies.  Applying the construction successively for
$t=1,\ldots,m_0$ shows that
$\widetilde X_1,\ldots,\widetilde X_{m_0}$ have the original joint law
of independent copies of $\widetilde X$.

The events $\cbra*{V_t\leq q_t}$ need not be independent because
$q_t$ depends on $S_{t-1}$.  In contrast, the events
$\cbra*{V_t\leq1/4}$ are independent because they use a fixed threshold
and depend only on the independent variables $V_t$.

Consider the event
\begin{align*}
  \mathcal B&:=\cbra*{\abs{S}\leq k\text{ and }S\text{ is not good}}.
\end{align*}
Suppose that $\mathcal B$ occurs.
For every $t$, $S_{t-1}\subseteq S$ implies
$\abs{\widetilde X\setminus S_{t-1}}\geq
\abs{\widetilde X\setminus S}$.  Since $S$ is not good,
$\Pr\sbra*{\abs{\widetilde X\setminus S}>b}>1/4$, and hence $q_t>1/4$.
Every index $t$ satisfying $V_t\leq1/4$ therefore contributes at least
$b+1$ new elements to the union.

Let
\begin{align*}
  W&:=\sum_{t=1}^{m_0}\mathbf1\cbra*{V_t\leq1/4}.
\end{align*}
The variables $V_1,\ldots,V_{m_0}$ are independent, so
$W\sim\operatorname{Bin}\rbra{m_0,1/4}$ and
$\Ex\sbra{W}=m_0/4\geq32k/b$.  For every index $t$ counted by $W$,
the set $\widetilde X_t\setminus S_{t-1}$ contains at least $b+1$
elements.  These increments are pairwise disjoint because
$S_{t-1}$ contains all elements observed before step $t$.  On
$\mathcal B$, the union $S$ has size at most $k$, so
\begin{align*}
  W\rbra*{b+1}&\leq\abs{S}\leq k.
\end{align*}
Consequently,
\begin{align*}
  W
  &\leq\floor*{\frac{k}{b+1}}
   \leq\frac{k}{b}
   \leq\frac{\Ex\sbra{W}}{32}
   \leq\frac{\Ex\sbra{W}}2.
\end{align*}
The Chernoff lower-tail bound gives
\begin{align*}
  \Pr\sbra*{\mathcal B}
  &\leq\Pr\sbra*{W\leq\Ex\sbra{W}/2}
   \leq e^{-\Ex\sbra{W}/8}
   \leq e^{-4k/b}<\frac1{20}.
\end{align*}
For every integer $k\geq2$, the inequality
$b=\ceil{\log k}\leq k$ gives $e^{-4k/b}\leq e^{-4}<1/20$.
\end{proof}

\begin{proposition}[Small residual support size]
\label{prop:set-small-budget}
Suppose that the set $S$ constructed in Line~3 of
\cref{alg:set-tester} satisfies $\abs{S}\leq k$, and let
$\ell:=k-\abs{S}\leq k/\log k$.  In Lines~5--7 of
\cref{alg:set-tester}, use
$40\rbra{\ell+1}$ fresh copies of $\widetilde X$.  The branch accepts
every YES instance with probability one and rejects every NO instance
with probability at least $19/20$.  The branch uses
$O\rbra{k/\log k}$ copies of $\widetilde X$.
\end{proposition}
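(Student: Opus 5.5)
The plan has three parts: a completeness argument, a sequential-union argument with a stochastic-domination bound for soundness, and a count of the samples used.

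\emph{Completeness.} In a YES instance with witness $T_0$, \cref{prop:set-amplification} gives $\widetilde X\subseteq T_0$ almost surely. Hence $S\subseteq T_0$, and every fresh copy also lies in $T_0$. The final union therefore has size at most $\abs{T_0}\leq k$, and the branch accepts with probability one.

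\emph{Soundness.} In a NO instance, fix the realized $S$ with $\abs{S}\leq k$. Reveal the $40(\ell+1)$ fresh copies one at a time, and maintain the running union $U_0:=S$, $U_t:=U_{t-1}\cup\widetilde X_t$.

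\begin{itemize}
\item As long as $\abs{U_{t-1}}\leq k$, apply \cref{prop:set-amplification} with $T=U_{t-1}$. Since the fresh copy is independent of $U_{t-1}$, we get $\Pr\sbra{\widetilde X_t\nsubseteq U_{t-1}\mid U_{t-1}}>1/2$.
\item Each such escape adds at least one new element, so the union grows by at least one.
\item Define indicators $\xi_t$ of the event that step $t$ escapes or that the union already exceeds $k$. Conditionally on the past, each $\xi_t$ has success probability greater than $1/2$.
\item By the same sequential coupling used in the proof of \cref{lem:set-initial-union}, $\sum_t\xi_t$ stochastically dominates $W\sim\operatorname{Bin}(40(\ell+1),1/2)$. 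One can also use a martingale or Azuma argument directly.
\item If the final union has size at most $k$, then every step escaped and added an element. So the number of escapes is at most $k-\abs{S}=\ell$, which forces $W\leq\ell$.
\item Since $\Ex\sbra{W}=20(\ell+1)$, a Chernoff lower tail gives $\Pr\sbra{W\leq\ell}\leq\Pr\sbra{W\leq \Ex\sbra{W}/20}\leq e^{-(19/20)^2\cdot 20(\ell+1)/2}\leq e^{-9}<1/20$, uniformly in $\ell\geq0$.
\end{itemize}

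Thus the branch rejects with probability at least $19/20$.

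\emph{Sample count.} The branch draws $40(\ell+1)\leq 40(k/\log k+1)=O(k/\log k)$ copies of $\widetilde X$, using $\ell\leq k/\log k$ and $k\geq2$.

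The main obstacle is the adaptivity in the soundness step: the test set $U_{t-1}$ depends on the earlier samples. The escape bound of \cref{prop:set-amplification} applies to a fixed $T$, so I will apply it conditionally, relying on the independence of the fresh copy from the past. The domination by a binomial then follows from the coupling already used in \cref{lem:set-initial-union}.
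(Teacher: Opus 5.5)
Your proof is correct, but your soundness argument differs from the paper's.

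\textbf{The paper's route.} The paper argues via waiting times. While the union has at most $k$ elements, each fresh copy strictly enlarges it with conditional probability at least $1/2$. So the waiting time for each of the at most $\ell+1$ needed enlargements is dominated by a geometric variable of mean two. Hence the number $\tau$ of copies drawn before the union first exceeds $k$ satisfies $\Ex[\tau\mid S]\le 2(\ell+1)$. Markov's inequality then gives $\Pr[\tau>40(\ell+1)\mid S]\le 1/20$.

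\textbf{Your route.} You reuse the sequential coupling from \cref{lem:set-initial-union} to dominate the number of escaping steps by $\operatorname{Bin}(40(\ell+1),1/2)$, and then apply a Chernoff lower tail.

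\textbf{Comparison.} The paper's route is shorter and needs only a first-moment bound; the constant $40$ is exactly what makes Markov give $1/20$. Yours handles the adaptively growing union explicitly through an already-established coupling, instead of the paper's informal geometric-domination claim. It also gives a much stronger error bound, $e^{-9(\ell+1)}$, which shows the constant $40$ has considerable slack.

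\textbf{Wording fix.} The sentence ``every step escaped and added an element'' should say something else. On the event that the final union has size at most $k$, the union never exceeded $k$, so each $\xi_t$ equals the escape indicator. Each escaping step added at least one new element, so $W\le\sum_t\xi_t\le\ell$.
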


\begin{proof}
In a YES instance, every copy of $\widetilde X$ lies in a fixed set of
at most $k$ elements, so the union with $S$ has size at most $k$.

Fix $S$ in a NO instance.  As long as the current union has size at
most $k$, a fresh copy of $\widetilde X$ contains an element outside
the current union with conditional probability at least $1/2$.
Consequently, the waiting time for each strict enlargement is
stochastically dominated by a geometric random variable of mean two.
Starting from $\abs{S}=k-\ell$, at most
$\ell+1$ strict enlargements are needed for rejection.  If $\tau$ is
the number of fresh copies drawn before the union first exceeds $k$,
then
\begin{align}
\label{eq:set-elementary-branch-error}
  \Ex\sbra*{\tau\mid S} \leq 2\rbra*{\ell+1},
  \qquad
  \Pr\sbra*{\tau>40\rbra*{\ell+1}\mid S}&\leq\frac1{20}.
\end{align}
The probability bound follows from Markov's inequality.  Finally,
$\ell\leq k/\log k$ gives
$40\rbra{\ell+1}=O\rbra{k/\log k}$.
\end{proof}

\paragraph{Bounded residuals.}
Fix a possible value of the set $S$ constructed in Line~3 of
\cref{alg:set-tester} with $\abs{S}\leq k$, and let $\ell:=k-\abs{S}$.
When $\ell>k/\log k$, Line~9 calls
\textsc{TestResidual}.  Lines~7--8 of \cref{alg:set-residual-test}
transform a fresh copy of $\widetilde X$ into
\begin{align*}
  R&:=\begin{cases}
    \widetilde X\setminus S,
      &\abs{\widetilde X\setminus S}\leq b,\\
    \varnothing,
      &\abs{\widetilde X\setminus S}>b.
  \end{cases}
\end{align*}
The truncation replaces the entire difference $\widetilde X\setminus S$
by the empty set when its size exceeds $b$. It does not condition on
$\abs{\widetilde X\setminus S}\leq b$. Thus
$R\subseteq\Omega\setminus S$ and $\abs{R}\leq b$ for every outcome.

For fixed $S$, the map from $\widetilde X$ to $R$ is deterministic.
Independent copies of $\widetilde X$ therefore produce independent
copies of $R$.  All probabilities below are taken over a fresh
$\widetilde X$, with $S$ fixed.

\begin{lemma}[Residual-set guarantees]
\label{lem:set-residual-promises}
In a YES instance, there exists a set
$T_{\mathrm{res}}\subseteq\Omega\setminus S$ with
$\abs{T_{\mathrm{res}}}\leq\ell$ such that
\begin{align*}
  \Pr\sbra*{R\subseteq T_{\mathrm{res}}}&=1.
\end{align*}
In a NO instance, if $S$ is good, then every set
$T\subseteq\Omega\setminus S$ with $\abs{T}\leq\ell$ satisfies
\begin{align*}
  \Pr\sbra*{R\nsubseteq T}&\geq\frac14.
\end{align*}
\end{lemma}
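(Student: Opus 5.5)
The two claims are handled separately; both reduce to bookkeeping with the definition of $R$ together with \cref{prop:set-amplification} and the definition of a good set.

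\textbf{YES case.} Let $T_0$ be the YES witness, with $\abs{T_0}\leq k$. By \cref{prop:set-amplification}, $\widetilde X\subseteq T_0$ almost surely. The initial union $S$ is built from copies of $\widetilde X$, so $S\subseteq T_0$ almost surely as well. Set $T_{\mathrm{res}}:=T_0\setminus S$. Then
\[
\abs{T_{\mathrm{res}}}=\abs{T_0}-\abs{S}\leq k-\abs{S}=\ell .
\]
For every outcome, $R$ is either $\widetilde X\setminus S\subseteq T_0\setminus S$ or $\varnothing$. In both cases $R\subseteq T_{\mathrm{res}}$ with probability one.

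A subtlety is that $S$ is treated as fixed in this section. I will state the YES claim for every value of $S$ with $\abs{S}\leq k$ that occurs with positive probability. All such values satisfy $S\subseteq T_0$, so the argument above applies to each of them.

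\textbf{NO case.} Fix a good $S$ and any $T\subseteq\Omega\setminus S$ with $\abs{T}\leq\ell$. Put $T':=S\cup T$. Since $S$ and $T$ are disjoint,
\[
\abs{T'}=\abs{S}+\abs{T}\leq\abs{S}+\ell=k .
\]
\Cref{prop:set-amplification} then gives $\Pr\sbra{\widetilde X\nsubseteq T'}>1/2$. Because $T'\supseteq S$, the event $\widetilde X\nsubseteq T'$ is the same as $\widetilde X\setminus S\nsubseteq T$.

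Now consider the event $\abs{\widetilde X\setminus S}\leq b$. On it, $R=\widetilde X\setminus S$, so $\widetilde X\setminus S\nsubseteq T$ implies $R\nsubseteq T$. Goodness of $S$ says that the complementary event $\abs{\widetilde X\setminus S}>b$ has probability at most $1/4$. A union bound then yields
\[
\Pr\sbra{R\nsubseteq T}\geq\Pr\sbra{\widetilde X\setminus S\nsubseteq T}-\Pr\sbra{\abs{\widetilde X\setminus S}>b}>\tfrac12-\tfrac14=\tfrac14 .
\]

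I do not expect a real obstacle. The one point needing care is that the truncation sends large residual sets to $\varnothing$, and $\varnothing$ is contained in every $T$. The escape probability can therefore drop, and this loss is exactly what goodness of $S$ controls.
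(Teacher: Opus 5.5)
Your proposal is correct and matches the paper's proof step for step: in the YES case you set $T_{\mathrm{res}}=T_0\setminus S$ using $S\subseteq T_0$, and in the NO case you apply the amplification bound to $S\cup T$ and subtract the truncation loss $\Pr\sbra{\abs{\widetilde X\setminus S}>b}\leq1/4$ controlled by goodness of $S$. Your strict inequality $>1/4$ is also valid, since the amplification bound is $1-e^{-1}>1/2$.
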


\begin{proof}
In a YES instance, let $T_0$ satisfy $\abs{T_0}\leq k$ and
$\Pr\sbra{\widetilde X\subseteq T_0}=1$.  Since $S$ is a union of
copies of $\widetilde X$, every possible value of $S$ satisfies
$S\subseteq T_0$.  Set $T_{\mathrm{res}}:=T_0\setminus S$.  Then
\begin{align*}
  \abs{T_{\mathrm{res}}}
  &=\abs{T_0}-\abs{S}\leq k-\abs{S}=\ell.
\end{align*}
If $\abs{\widetilde X\setminus S}\leq b$, then
$R=\widetilde X\setminus S\subseteq T_{\mathrm{res}}$.  If
$\abs{\widetilde X\setminus S}>b$, then
$R=\varnothing\subseteq T_{\mathrm{res}}$.  Hence
$\Pr\sbra{R\subseteq T_{\mathrm{res}}}=1$.

In a NO instance, fix any $T\subseteq\Omega\setminus S$ with
$\abs{T}\leq\ell$.  The sets $S$ and $T$ are disjoint, so
$\abs{S\cup T}=\abs{S}+\abs{T}\leq k$.  The amplification bound gives
$\Pr\sbra{\widetilde X\nsubseteq S\cup T}\geq1/2$.  Since $S$ is
good, $\Pr\sbra*{\abs{\widetilde X\setminus S}>b}\leq1/4$.

If $\widetilde X\nsubseteq S\cup T$ and
$\abs{\widetilde X\setminus S}\leq b$, then
$R=\widetilde X\setminus S$ contains an element outside $T$.
Consequently,
\begin{align*}
  \Pr\sbra*{R\nsubseteq T}
  &\geq
    \Pr\sbra*{\widetilde X\nsubseteq S\cup T,
      \abs{\widetilde X\setminus S}\leq b}\\
  &\geq\Pr\sbra*{\widetilde X\nsubseteq S\cup T}
    -\Pr\sbra*{\abs{\widetilde X\setminus S}>b}\\
  &\geq\frac12-\frac14
   =\frac14.
\end{align*}
\end{proof}

The analysis of \textsc{TestResidual} will use only the independence of
the residual sets, the bound $\abs{R}\leq b$, and the two conclusions of
\cref{lem:set-residual-promises}.  The next subsection constructs the
statistic used to distinguish the YES and NO cases.

\subsection{A Poisson statistic from a Fej\'er polynomial}
\label{subsec:fejer-statistic}

The residual test needs a bounded count statistic with zero mean for
an absent element and mean close to one for an element with sufficiently
large inclusion probability. We construct it using a Chebyshev polynomial.

Let $T_j$ be the Chebyshev polynomial characterized by
$T_j\rbra{\cos\theta}=\cos\rbra{j\theta}$.
Fix $A>0$ and an odd integer $d\geq3$.
Define the polynomial $F_d$ by
\begin{equation}
\label{eq:fejer-polynomial}
  F_d\rbra*{x}:=
  \begin{cases}
    \displaystyle\frac{1-T_d\rbra*{1-2Ax/d^2}}{2Ax},
      &x\neq0,\\[4pt]
    1,&x=0.
  \end{cases}
\end{equation}
Set $m:=2A/d$.  For $x\geq0$, define
$Q_d\rbra{x}:=1-e^{-mx}F_d\rbra{x}$.
The notation for $F_d$ and $Q_d$ suppresses their dependence on $A$.

We will construct a statistic that has expectation $Q_d\rbra{x}$
when applied to a Poisson count of mean $mx$.
The coefficient bound below will control the magnitude of this statistic.

\begin{proposition}[Properties of $F_d$ and $Q_d$]
\label{prop:fejer-properties}
The following two properties hold.
\begin{itemize}[leftmargin=2em]
  \item The function $F_d$ is a polynomial of degree $d-1$.
    If $F_d\rbra{x}=\sum_{j=0}^{d-1}f_{d,j}x^j$, then
    $f_{d,0}=1$ and
    \begin{equation}
    \label{eq:fejer-coefficient-bound}
      \frac{j!\abs{f_{d,j}}}{m^j}\leq3^d
      \qquad\rbra*{0\leq j\leq d-1}.
    \end{equation}

  \item The function $Q_d$ satisfies $Q_d\rbra{0}=0$ and
    \begin{align}
    \label{eq:fejer-q-bounds}
      \frac1{128}\min\cbra*{Ax,1}
      &\leq Q_d\rbra*{x}\leq1
      &&\rbra*{x\geq0},\\
    \label{eq:fejer-q-deficit}
      1-Q_d\rbra*{x}
      &\leq\min\cbra*{1,\frac1{Ax}}
      &&\rbra*{x>0}.
    \end{align}
\end{itemize}
\end{proposition}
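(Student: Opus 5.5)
The plan is to pass everything through the substitution $u:=Ax/d^2$ and, where $0\le u\le1$, the angle $\theta\in\sbra{0,\pi}$ with $1-2u=\cos\theta$, i.e.\ $u=\sin^2\rbra{\theta/2}$. Then
\begin{equation*}
  F_d\rbra*{x}=\frac{1-\cos\rbra*{d\theta}}{d^2\rbra*{1-\cos\theta}}
  =\rbra*{\frac{\sin\rbra*{d\theta/2}}{d\sin\rbra*{\theta/2}}}^2,
\end{equation*}
the normalized Fej\'er kernel. For the first bullet I would use the explicit shifted-Chebyshev expansion
\begin{equation*}
  T_d\rbra*{1-2u}=\sum_{k=0}^d\rbra*{-4}^k\frac{d\rbra*{d+k-1}!}{\rbra*{d-k}!\rbra*{2k}!}u^k .
\end{equation*}
Its constant term is $1$, so $1-T_d\rbra{1-2u}$ is divisible by $u$, and $2Ax=2d^2u$. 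Hence $F_d$ is a polynomial of degree $d-1$, and the $k=1$ term gives $f_{d,0}=4d^2/\rbra{2d^2}=1$.

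For the coefficient bound I would write $j=k-1$ and compute $f_{d,j}$ in closed form. Dividing by $m^j=\rbra{2A/d}^j$ turns the factor $\rbra{A/d^2}^j$ into $\rbra{2d}^{-j}$. Up to $\mathrm{poly}$ factors, the quantity $j!\abs{f_{d,j}}/m^j$ then becomes
\begin{equation*}
  \binom{d+j}{2j+1}\cdot\frac{2^j j!}{d^j}\cdot\frac{1}{2j+2}.
\end{equation*}
Here $j!\le d^j$, and $\binom{d+j}{2j+1}\le\binom{d+j}{\cdot}$ is controlled by a binomial sum, so the product is at most something like $\binom{d+j}{2j+1}2^j$. \textbf{This bookkeeping is the step I expect to be the main obstacle}: the crude bound $2^{d+j}2^j$ can exceed $3^d$ when $j$ is close to $d$. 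If the crude estimate fails, my first fallback is a sharper bound. For $j\ge d/2$, use $j!/d^j\le e^{-c j}$ (Stirling). Alternatively, bound $\binom{d+j}{2j+1}2^j$ by the coefficient of $z^{d-j-1}$ in an explicit generating function, e.g.\ $\rbra{1+z}^{d+j}$ evaluated at a well-chosen $z$, and optimize to land under $3^d$.

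For the second bullet, $Q_d\rbra{0}=1-F_d\rbra{0}=0$ is immediate. I would then split at $u=1$, i.e.\ $x=d^2/A$.

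\emph{Region $u\le1$.} The Fej\'er form gives $0\le F_d\le1$. It also gives $F_d\le 1/\rbra{d^2\sin^2\rbra{\theta/2}}=1/\rbra{Ax}$. Since $e^{-mx}\le1$, this yields $Q_d\le1$ and \eqref{eq:fejer-q-deficit} there. For the lower bound in \eqref{eq:fejer-q-bounds} I use $Q_d\ge1-F_d$ together with the standard Fej\'er-kernel estimate $F_d\le1-c\min\cbra{d^2\theta^2,1}$ for an absolute constant $c$. This estimate follows from $\abs{\sin\rbra{d\phi}}\le d\sin\phi\rbra{1-c'\min\cbra{d^2\phi^2,1}}$ on $\sbra{0,\pi/2}$, which I would prove by concavity or Taylor expansion near $0$ and by the trivial bound away from $0$. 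Since $d^2\theta^2\asymp d^2u=Ax$, this gives $Q_d\ge c''\min\cbra{Ax,1}$. I then check that $c''\ge1/128$; if needed, the range $Ax\ge2$ can be handled separately via $Q_d\ge1-1/\rbra{Ax}\ge1/2$.

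\emph{Region $u>1$.} Here $y=1-2u<-1$, and because $d$ is odd, $T_d\rbra{y}<-1$. Hence $F_d>0$, so $Q_d<1$. For the deficit I use
\begin{equation*}
  \abs{T_d\rbra{y}}\le\rbra*{\abs{y}+\sqrt{y^2-1}}^d\le\rbra{4u}^d
\end{equation*}
together with $mx=2du$. This gives
\begin{equation*}
  e^{-mx}F_d\le\frac{1+\rbra{4ue^{-2u}}^d}{2Ax}\le\frac{1}{Ax},
\end{equation*}
using $4ue^{-2u}\le 2/e<1$. Since $Ax\ge d^2>1$, this is at most $\min\cbra{1,1/\rbra{Ax}}$. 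The same bound gives $Q_d\ge1-1/d^2\ge1/128$ in this region, which completes \eqref{eq:fejer-q-bounds}.
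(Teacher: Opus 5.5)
Your second bullet follows the paper. It uses the Fejér form on $Ax\le d^2$ for $Q_d\le1$ and the deficit, and on $Ax>d^2$ the oddness of $T_d$, $T_d\rbra{\abs{y}}\le\rbra{4u}^d$, and $mx=2du$. The genuine gap is the coefficient bound \eqref{eq:fejer-coefficient-bound}, which you flag as the obstacle but never prove. Your closed form is correct, and it is exact rather than ``up to poly factors'':
\begin{equation*}
  \frac{j!\abs{f_{d,j}}}{m^j}=\frac{2^{j+1}j!}{d^{j+1}\rbra{2j+2}}\binom{d+j}{2j+1}.
\end{equation*}
Your estimate $j!\le d^j$, $\binom{d+j}{2j+1}\le2^{d+j}$ gives $2^{d+2j}$ up to a polynomial prefactor. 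That already exceeds $3^d$ once $j>\rbra{\log3-1}d/2\approx0.29d$. So ``Stirling for $j\ge d/2$'' does not cover the failing range $0.29d<j<d/2$, and the generating-function alternative is never specified. The missing observation is that $\binom{d+j}{2j+1}$ is a product of $2j+1$ factors centred at $d$:
\begin{equation*}
  \binom{d+j}{2j+1}=\frac{d\prod_{i=1}^{j}\rbra{d^2-i^2}}{\rbra{2j+1}!}\le\frac{d^{2j+1}}{\rbra{2j+1}!}.
\end{equation*}
Combining this with $\rbra{2j+2}!=2^{j+1}\rbra{j+1}!\,\rbra{2j+1}!!$ gives
\begin{equation*}
  \frac{j!\abs{f_{d,j}}}{m^j}\le\frac{2^{j+1}j!\,d^j}{\rbra{2j+2}!}=\frac{d^j}{\rbra{j+1}\rbra{2j+1}!!}\le\frac{d^j}{j!}\le e^d<3^d.
\end{equation*}
With this step your route is complete and sharper than needed. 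The paper avoids the closed form entirely. It writes $F_d$ as the Fejér convex combination $\frac1d+\frac2{d^2}\sum_{j=1}^{d-1}\rbra{d-j}T_j\rbra{1-2Ax/d^2}$ and bounds the absolute coefficient sum of $T_j$ by $3^j$ via the recurrence. It then uses $\binom rt\le d^t/t!$, which is where $3^d$ comes from. Separately, the $k=1$ coefficient of $T_d\rbra{1-2u}$ is $-2d^2$, not $-4d^2$; your conclusion $f_{d,0}=1$ is still correct.

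A smaller unfinished point is the lower bound in \eqref{eq:fejer-q-bounds}. It needs the explicit constant $1/128$, and ``I then check that $c''\ge1/128$'' is not a check. Since you already handle $Ax\ge2$ via the deficit, you only need $1-F_d\ge Ax/128$ for $Ax\le2$. The paper gets this from $1-F_d=\frac2{d^2}\sum_j\rbra{d-j}\rbra{1-\cos j\theta}$ by keeping only $j\le\floor{d/2}$, where $j\theta<\pi$. It then uses $1-\cos t\ge2t^2/\pi^2$ and $\theta\ge2\sqrt z$. Your sine-ratio inequality can also work on this range, but not as sketched. The trivial bound $\abs{\sin\rbra{d\phi}}\le1$ only helps once $d\sin\phi\ge1/\rbra{1-c'}$, and this fails near $d\phi\approx1$. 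There you need a Taylor estimate with explicit constants.
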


\begin{proof}
\emph{Polynomial structure and coefficients.}
Let
$P_d\rbra{x}:=1-T_d\rbra{1-2Ax/d^2}$ denote the numerator in
\eqref{eq:fejer-polynomial}.  Since $T_d\rbra{1}=1$, we have
$P_d\rbra{0}=0$.  Since $T_d$ has degree $d$, so does $P_d$.
The factor theorem therefore gives
$P_d\rbra{x}=xG_d\rbra{x}$ for a polynomial $G_d$ of degree $d-1$.
For $x\neq0$, \eqref{eq:fejer-polynomial} becomes
$F_d\rbra{x}=G_d\rbra{x}/\rbra{2A}$.
Differentiating $P_d$ at zero gives
\begin{align*}
  G_d\rbra{0}
  =P_d'\rbra{0}
  =\frac{2A}{d^2}T_d'\rbra{1}
  =2A.
\end{align*}
Thus the polynomial $G_d/\rbra{2A}$ takes the value one at zero,
exactly as specified in \eqref{eq:fejer-polynomial}.  Hence $F_d$ is a
polynomial of degree $d-1$ with constant coefficient $f_{d,0}=1$, and
its definition gives $Q_d\rbra{0}=0$.

For $0<Ax\leq d^2$, set $z=Ax/d^2$.
Choose $\theta\in\interval[open left]{0}{\pi}$ so that $z=\sin^2\rbra{\theta/2}$.
Using $T_d\rbra{\cos\theta}=\cos\rbra{d\theta}$ and expanding the
squared geometric sum gives
\begin{align}
  F_d\rbra*{x}
  &=\rbra*{\frac{\sin\rbra*{d\theta/2}}{d\sin\rbra*{\theta/2}}}^2
   =\frac1{d^2}\abs*{\sum_{j=0}^{d-1}e^{ij\theta}}^2\notag\\
\label{eq:fejer-cosine-expansion}
  &=\frac1{d^2}\sbra*{d+2\sum_{j=1}^{d-1}\rbra*{d-j}\cos\rbra*{j\theta}}.
\end{align}

Since $\cos\rbra{j\theta}=T_j\rbra{1-2Ax/d^2}$, the last line of
\eqref{eq:fejer-cosine-expansion} gives, for $0<Ax\leq d^2$,
\begin{align*}
  F_d\rbra*{x}&=\frac1d+\frac2{d^2}
     \sum_{j=1}^{d-1}\rbra*{d-j}T_j\rbra*{1-2Ax/d^2}.
\end{align*}
Both sides are polynomials, so the identity holds for every $x$.
The weights $1/d$ and $2\rbra{d-j}/d^2$ are nonnegative and sum to
one.  Let $B_j$ be the sum of the absolute values of the monomial
coefficients of $T_j$.  For $j\geq2$, the recurrence
$T_j\rbra{t}=2tT_{j-1}\rbra{t}-T_{j-2}\rbra{t}$ gives
$B_j\leq2B_{j-1}+B_{j-2}$.  Since $B_0=B_1=1$, induction yields
$B_j\leq3^j$.

For a polynomial $P$, write $\sbra{x^t}P$ for the coefficient of
$x^t$.  Expanding each monomial in $T_j\rbra{1-2Ax/d^2}$ and using
$\binom rt\leq d^t/t!$ for $0\leq r\leq j\leq d-1$ gives, for
$0\leq t\leq d-1$,
\begin{align*}
  \abs*{\sbra*{x^t}T_j\rbra*{1-2Ax/d^2}}
  &\leq B_j\frac{d^t}{t!}\rbra*{\frac{2A}{d^2}}^t
   \leq3^d\frac{m^t}{t!}.
\end{align*}
Here $\binom rt=0$ when $r<t$.  Applying this bound to every term in
the convex combination proves \eqref{eq:fejer-coefficient-bound}.

\smallskip
\noindent\emph{Upper bounds for $Q_d$.}
First let $0<Ax\leq d^2$, set $z=Ax/d^2$, and choose
$\theta\in\interval[open left]{0}{\pi}$ so that
$z=\sin^2\rbra{\theta/2}$.
The triangle inequality bounds the squared sum by $d^2$.
The inequality $\abs{\sin\rbra{d\theta/2}}\leq1$ bounds the squared ratio
by $1/\rbra{d^2z}$.  Hence
\begin{align*}
  0&\leq e^{-mx}F_d\rbra*{x}
   \leq F_d\rbra*{x}
   \leq\min\cbra*{1,\frac1{d^2z}}
   =\min\cbra*{1,\frac1{Ax}}.
\end{align*}

Now let $Ax\geq d^2$, set $z=Ax/d^2\geq1$, and write $v=2z-1$.
Since $d$ is odd, $T_d\rbra{-v}=-T_d\rbra{v}$.
The Chebyshev identity $T_d\rbra{\cosh t}=\cosh\rbra{dt}$ implies
\begin{align*}
  1&\leq T_d\rbra*{v}
   \leq\rbra*{v+\sqrt{v^2-1}}^d
   \leq\rbra*{4z}^d.
\end{align*}
By \eqref{eq:fejer-polynomial} and the oddness of $T_d$,
\begin{align*}
  0\leq F_d\rbra{x}
  =\frac{1+T_d\rbra{v}}{2d^2z}
  \leq\frac{\rbra{4z}^d}{d^2z}.
\end{align*}
For $z\geq1$, the function $2z-\ln\rbra{4z}$ is positive at one and
increasing.  Since $mx=2dz$, we obtain
\begin{align*}
  0&\leq e^{-mx}F_d\rbra*{x}
   \leq\frac{\exp\rbra*{d\ln\rbra*{4z}-2dz}}{d^2z}
   \leq\frac1{d^2z}
   =\frac1{Ax}\leq1.
\end{align*}
The two ranges show that $0\leq e^{-mx}F_d\rbra{x}\leq
\min\cbra*{1,1/\rbra{Ax}}$ for every $x>0$.  This proves the upper
bound on $Q_d$ and \eqref{eq:fejer-q-deficit}; the upper bound at
$x=0$ follows from $Q_d\rbra{0}=0$.

\smallskip
\noindent\emph{Lower bound for $Q_d$.}
It remains to prove the lower bound in \eqref{eq:fejer-q-bounds}.
If $Ax\geq2$, then
\eqref{eq:fejer-q-deficit} gives $Q_d\rbra{x}\geq1/2$.
It remains to consider $0<Ax\leq2$.
Set $z:=Ax/d^2$, and choose $\theta\in\interval[open left]{0}{\pi}$ so
that $z=\sin^2\rbra{\theta/2}$.  Then $0<z\leq2/d^2<1$ and
\begin{align*}
  2\sqrt z&\leq\theta=2\arcsin\sqrt z\leq\pi\sqrt z.
\end{align*}
Thus $j\theta\leq\pi/\sqrt2<\pi$ whenever
$1\leq j\leq\floor{d/2}$.
Apply $1-\cos t\geq2t^2/\pi^2$ for $\abs{t}\leq\pi$ to these
terms in \eqref{eq:fejer-cosine-expansion}.
Every discarded term in $1-F_d\rbra{x}$ is nonnegative, so
\begin{align*}
  1-F_d\rbra*{x}
  &=\frac2{d^2}\sum_{j=1}^{d-1}\rbra*{d-j}\rbra*{1-\cos\rbra*{j\theta}}\\
  &\geq\frac{2\theta^2}{d\pi^2}
      \sum_{j=1}^{\floor{d/2}}j^2\\
  &\geq\frac{2d^2\theta^2}{81\pi^2}
   \geq\frac8{81\pi^2}d^2z
   \geq\frac1{128}Ax.
\end{align*}
On the retained range, $d-j\geq d/2$.  We also used
$\floor{d/2}\geq d/3$ and
$\sum_{j=1}^t j^2\geq t^3/3$ for every integer $t\geq1$.
Finally, $F_d\rbra{x}\geq0$ implies
$Q_d\rbra{x}\geq1-F_d\rbra{x}$.  This proves
\eqref{eq:fejer-q-bounds} for $x>0$.  The claim at $x=0$ follows from
$Q_d\rbra{0}=0$.
\end{proof}

Using the coefficients in \cref{prop:fejer-properties}, define the
statistic $h_d:\Z_{\geq0}\to\R$ used in
\cref{alg:set-residual-test} by
\begin{align}
\label{eq:fejer-statistic-definition}
  h_d\rbra*{j}&:=\begin{cases}
    0,&j=0,\\
    1-j!f_{d,j}/m^j,&1\leq j\leq d-1,\\
    1,&j\geq d.
  \end{cases}
\end{align}
The notation for $h_d$ also suppresses its dependence on $A$.
Set $\norm{h_d}_\infty:=\sup_{j\geq0}\abs{h_d\rbra{j}}$.

\begin{lemma}[Poisson representation]
\label{lem:fejer-poisson-representation}
For every $x\geq0$ and $N\sim\operatorname{Poi}\rbra*{mx}$,
\begin{align}
\label{eq:poisson-q-identity}
  \Ex\sbra*{h_d\rbra*{N}}&=Q_d\rbra*{x}.
\end{align}
Moreover, $\norm{h_d}_\infty\leq1+3^d$.
\end{lemma}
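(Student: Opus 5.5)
The plan is to compute $\Ex\sbra{h_d\rbra{N}}$ directly from the Poisson probability mass function and recognize it as $1-e^{-mx}F_d\rbra{x}$. Write $\lambda=mx$ and split $h_d=1-g$. Here $g\rbra{0}=1$, $g\rbra{j}=j!f_{d,j}/m^j$ for $1\leq j\leq d-1$, and $g\rbra{j}=0$ for $j\geq d$. Because $f_{d,0}=1$ by \cref{prop:fejer-properties}, the formula $g\rbra{j}=j!f_{d,j}/m^j$ also holds at $j=0$. Since $f_{d,j}=0$ for $j\geq d$ (as $F_d$ has degree $d-1$), it holds for every $j\geq0$. Thus $g$ is finitely supported, and $\Ex\sbra{h_d\rbra{N}}=1-\Ex\sbra{g\rbra{N}}$.

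Next I compute the finite sum
\begin{equation*}
  \Ex\sbra*{g\rbra*{N}}
  =\sum_{j=0}^{d-1}e^{-mx}\frac{\rbra*{mx}^j}{j!}\cdot\frac{j!f_{d,j}}{m^j}
  =e^{-mx}\sum_{j=0}^{d-1}f_{d,j}x^j
  =e^{-mx}F_d\rbra*{x}.
\end{equation*}
This gives $\Ex\sbra{h_d\rbra{N}}=1-e^{-mx}F_d\rbra{x}=Q_d\rbra{x}$. At $x=0$ the variable $N$ is identically zero, and both sides equal $0$ (using $h_d\rbra{0}=0$ and $Q_d\rbra{0}=0$), consistent with the formula under the convention $0^0=1$. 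So no separate argument is needed beyond noting this convention.

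For the sup-norm bound, $h_d\rbra{0}=0$ and $h_d\rbra{j}=1$ for $j\geq d$ are trivially within the bound. For $1\leq j\leq d-1$, the triangle inequality and the coefficient bound \eqref{eq:fejer-coefficient-bound} give $\abs{h_d\rbra{j}}\leq1+j!\abs{f_{d,j}}/m^j\leq1+3^d$.

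There is no real obstacle here. The only points needing care are that the $j=0$ term of $g$ correctly matches $f_{d,0}=1$, so that the $h_d\rbra{0}=0$ convention fits the polynomial identity, and that the truncation at $j\geq d$ is exact because $F_d$ has degree $d-1$.
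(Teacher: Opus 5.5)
Your proposal is correct and follows essentially the same route as the paper: a direct term-by-term Poisson computation, using the cancellation $(j!f_{d,j}/m^j)(mx)^j/j!=f_{d,j}x^j$ and $f_{d,0}=1$ to obtain $1-e^{-mx}F_d(x)$, with the sup-norm bound from \eqref{eq:fejer-coefficient-bound}. Writing $h_d=1-g$ up front, so that the $j=0$ term is absorbed into $g$, is only a cosmetic reorganization of the paper's step where it subtracts $e^{-mx}$ and then reinserts $f_{d,0}=1$.
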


\begin{proof}
For $x>0$, the Poisson probabilities are
$\Pr\sbra{N=j}=e^{-mx}\rbra{mx}^j/j!$.  Using the three cases in the
definition of $h_d$, and noting that $h_d\rbra{0}=0$, we obtain
\begin{align*}
  \Ex\sbra*{h_d\rbra*{N}}
  &=\sum_{j=1}^{d-1}
      \rbra*{1-\frac{j!f_{d,j}}{m^j}}
      e^{-mx}\frac{\rbra{mx}^j}{j!}
    +\sum_{j=d}^{\infty}e^{-mx}\frac{\rbra{mx}^j}{j!}\\
  &=\sum_{j=1}^{\infty}\Pr\sbra{N=j}
    -e^{-mx}\sum_{j=1}^{d-1}f_{d,j}x^j\\
  &=1-e^{-mx}-e^{-mx}\sum_{j=1}^{d-1}f_{d,j}x^j\\
  &=1-e^{-mx}\sum_{j=0}^{d-1}f_{d,j}x^j\\
  &=1-e^{-mx}F_d\rbra*{x}=Q_d\rbra*{x}.
\end{align*}
The second equality uses the cancellation
$\rbra{j!f_{d,j}/m^j}\rbra{mx}^j/j!=f_{d,j}x^j$, and the fourth uses
$f_{d,0}=1$.
For $x=0$, the count is zero almost surely and both sides vanish.
For $1\leq j\leq d-1$, \eqref{eq:fejer-coefficient-bound} gives
\begin{align*}
  \abs{h_d\rbra*{j}}
  &\leq1+\frac{j!\abs{f_{d,j}}}{m^j}
   \leq1+3^d.
\end{align*}
The same bound is immediate when $j=0$ or $j\geq d$, proving the
claimed uniform bound.
\end{proof}

Applying $h_d$ to a Poisson occurrence count of mean $mx$ gives a
random variable with expectation $Q_d\rbra{x}$ and magnitude at most
$1+3^d$. The next subsection applies these facts to the
coordinate counts of the residual sets.

\subsection{Analysis of the residual algorithm}
\label{subsec:set-tester-analysis}

We analyze \textsc{TestResidual} when $\ell>k/\log k$.
Fix a possible value of the set $S$ constructed in
Line~3 of \cref{alg:set-tester}, assume $\abs{S}\leq k$, and set
$\ell:=k-\abs{S}$. In the analysis of \textsc{TestResidual},
all probabilities and expectations are conditional on this fixed $S$.
We first analyze the statistic without the sample limit.
We then restore the sample limit.

\paragraph{The Poisson statistic.}
For each $i\in\Omega\setminus S$, let
$a_i:=\Pr\sbra{i\in R}$.  These marginal probabilities are used only
in the analysis.  Since every residual set has size at most $b$,
\begin{align}
\label{eq:set-marginal-budget}
  \sum_{i\in\Omega\setminus S}a_i
  &=\Ex\sbra*{\abs{R}}\leq b.
\end{align}

Set $A:=2048\ell$,
$d:=2\floor{\log\ell/\rbra{128\log3}+1/2}-1$, and $m:=2A/d$.
Let $Q_d$ and $h_d$ be the functions from
\cref{subsec:fejer-statistic} for these choices of $A$ and $d$.
Since $\ell>k/\log k$,
$d\geq3$ and $d=\Theta\rbra{\log\ell}$ for all sufficiently large
$k$.  Consequently,
\begin{align*}
  m&=\frac{2A}{d}
    =\Theta\rbra*{\frac{\ell}{\log\ell}}.
\end{align*}

Draw
$N_{\mathrm P}\sim\operatorname{Poi}\rbra{m}$ independently of fresh
copies $R_1,R_2,\ldots$ of $R$, and define
\begin{align*}
  N_i&:=\sum_{j=1}^{N_{\mathrm P}}\mathbf1\cbra*{i\in R_j}
       \quad\rbra*{i\in\Omega\setminus S},\\
  Z&:=\sum_{i\in\Omega\setminus S}h_d\rbra*{N_i}.
\end{align*}
Because $h_d\rbra{0}=0$, the value of $Z$ can be computed from the
elements appearing in the sampled residual sets.  Poisson thinning
gives the marginal distribution
$N_i\sim\operatorname{Poi}\rbra{ma_i}$.  The counts $N_i$ need not be
independent, but linearity of expectation and
\eqref{eq:poisson-q-identity} give
\begin{align}
\label{eq:set-statistic-expectation}
  \Ex\sbra*{Z}&=\sum_{i\in\Omega\setminus S}Q_d\rbra*{a_i}.
\end{align}

The next lemma bounds the mean of $Z$ by $\ell$ in the YES case.
In a NO instance with good $S$, it gives an explicit gap above $\ell$.

\begin{lemma}[Expectation gap]
\label{lem:set-expectation-separation}
In a YES instance, $\Ex\sbra{Z}\leq\ell$.  In a NO instance with good
$S$,
\begin{align}
\label{eq:set-no-expectation-gap}
  \Ex\sbra*{Z}&\geq\ell+\frac{\ell}{1024b}.
\end{align}
\end{lemma}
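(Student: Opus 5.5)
Both cases start from \cref{eq:set-statistic-expectation}, $\Ex\sbra{Z}=\sum_{i\in\Omega\setminus S}Q_d\rbra{a_i}$, combined with the pointwise bounds in \cref{prop:fejer-properties}.

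\emph{YES case.} By \cref{lem:set-residual-promises}, there is a set $T_{\mathrm{res}}$ with $\abs{T_{\mathrm{res}}}\leq\ell$ and $R\subseteq T_{\mathrm{res}}$ almost surely. Hence $a_i=0$ for every $i\notin T_{\mathrm{res}}$, and those terms vanish because $Q_d\rbra{0}=0$. Each of the at most $\ell$ remaining terms is at most $1$ by \eqref{eq:fejer-q-bounds}. Therefore $\Ex\sbra{Z}\leq\ell$.

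\emph{NO case with good $S$.} Split the coordinates by comparing $a_i$ with a threshold $\tau$ of order $\ell/A$. A natural choice is to let $H$ be the set of the $\ell$ coordinates with the largest $a_i$, with ties broken arbitrarily, and to let $\tau=\min_{i\in H}a_i$. Since $\abs{H}\leq\ell$, \cref{lem:set-residual-promises} gives $\Pr\sbra{R\nsubseteq H}\geq1/4$. A union bound then yields
\begin{equation*}
  \sum_{i\notin H}a_i\geq\Pr\sbra{R\nsubseteq H}\geq\frac14 .
\end{equation*}
On the complement of $H$, every $a_i\leq\tau$. The lower bound $Q_d\rbra{x}\geq\min\cbra{Ax,1}/128$ then gives a gain of
\begin{equation*}
  \sum_{i\notin H}Q_d\rbra{a_i}\geq\frac1{128}\min\cbra*{\frac A4,\ \#\cbra{i\notin H:Aa_i\geq1}}
\end{equation*}
up to splitting large and small terms. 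More precisely, $\sum_{i\notin H}\min\cbra{Aa_i,1}\geq\min\cbra{A\cdot\tfrac14,\ \cdot}$. To make this clean, I treat two subcases according to whether $A\tau\leq1$.

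If $A\tau\leq1$, then $\min\cbra{Aa_i,1}=Aa_i$ off $H$. The gain is then at least $A/512=4\ell$. The loss on $H$ is at most $\sum_{i\in H}\rbra{1-Q_d\rbra{a_i}}\leq\ell$. The net excess is therefore at least $3\ell$.

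If $A\tau>1$, then every $i\in H$ has $1-Q_d\rbra{a_i}\leq1/\rbra{Aa_i}$. The loss on $H$ is then $\sum_{i\in H}1/\rbra{Aa_i}$, and controlling it uses the marginal budget \eqref{eq:set-marginal-budget}, $\sum a_i\leq b$. The coordinates off $H$ satisfy $a_i\leq\tau$, so the ones with $a_i\geq1/A$ contribute gain at least $1/128$ each. The others contribute $Aa_i/128$.

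\emph{Main obstacle.} The hard step is balancing the loss on $H$ against the gain off $H$ when many coordinates of $H$ have $a_i$ only moderately above $1/A$. I would handle it with a unified comparison. Write $E=\sum_i Q_d\rbra{a_i}-\ell$ and bound the loss using only the deficit \eqref{eq:fejer-q-deficit}. Set $u_i=Aa_i$. The loss on $H$ is at most $\sum_{i\in H}\min\cbra{1,1/u_i}$, and the gain off $H$ is at least $\frac1{128}\sum_{i\notin H}\min\cbra{u_i,1}$. Since every $u_j$ off $H$ is at most every $u_i$ on $H$, each coordinate of $H$ with $u_i\leq C$ for a suitable constant $C$ can be charged to mass off $H$. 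For coordinates with $u_i>C$, the deficit is at most $1/C$, and there are at most $\min\cbra{\ell,Ab/C}$ of them.

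This suggests a further split on whether the mass off $H$ is large. Because $\sum_{i\notin H}u_i\geq A/4=512\ell$ and $u_j\leq\tau A$, either many off-$H$ coordinates have $u_j\geq1$, giving gain of order $\ell$, or the budget $\sum a_i\leq b$ forces $\tau A\leq Ab/\ell=2048b$. In the latter case, each element of $H$ has deficit at least $1/\rbra{2048b}$ in the pessimistic direction. This is precisely where the factor $1/\rbra{1024b}$ should come from: pairing a loss of at most $\ell/\rbra{\tau A}$ against a gain of at least $\min\cbra{512\ell/\rbra{128\tau A},\dots}$ leaves an excess of order $\ell/\rbra{\tau A}\geq\ell/\rbra{1024b}$ after optimizing the constants. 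I expect this constant-chasing to be the delicate part, with $A=2048\ell$ chosen to make it close.
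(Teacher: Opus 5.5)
Your YES case and the overall structure of the NO case match the paper's proof: take the $\ell$ largest marginals as $H$, split on $A\tau$ versus $1$, and use \eqref{eq:fejer-q-bounds} off $H$ and \eqref{eq:fejer-q-deficit} on $H$. Your subcase $A\tau\leq1$ is complete. The gap is the subcase $A\tau>1$, which you leave as ``constant-chasing.''

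The dichotomy you sketch there does not close as stated. The branch with ``many off-$H$ coordinates with $u_j\geq1$, giving gain of order $\ell$'' never compares that gain with the loss $\ell/(A\tau)$. That loss can be close to $\ell$ when $A\tau$ is near one. The claim that each element of $H$ has deficit \emph{at least} $1/(2048b)$ points the wrong way, since a lower bound on the deficit cannot help. Also, the budget bound $\ell\tau\leq\sum_{i\in H}a_i\leq b$ holds unconditionally, so no split is needed.

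The missing idea is one linear lower bound. If $A\tau\geq1$ and $0\leq x\leq\tau$, then $\min\{Ax,1\}\geq x/\tau$: when $Ax\leq1$ use $A\geq1/\tau$, and otherwise use $1\geq x/\tau$. This makes the gain off $H$ scale like $1/\tau$, the same scale as the loss on $H$:
\begin{align*}
  \sum_{i\notin H}Q_d(a_i)\geq\frac{1}{128\tau}\sum_{i\notin H}a_i\geq\frac{1}{512\tau},
  \qquad
  \sum_{i\in H}\bigl(1-Q_d(a_i)\bigr)\leq\frac{\ell}{A\tau}=\frac{1}{2048\tau}.
\end{align*}
Hence $\Ex\sbra{Z}\geq\ell+3/(2048\tau)$. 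The bound $\ell\tau\leq b$ then gives $\Ex\sbra{Z}\geq\ell+3\ell/(2048b)\geq\ell+\ell/(1024b)$.

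This is the paper's Case~2. The paper uses $t=a_{\ell+1}$ and $(\ell+1)t\leq b$ rather than your $\tau=\min_{i\in H}a_i$, but your choice works equally well. Your ``gain of at least $512\ell/(128\tau A)$'' equals $1/(512\tau)$, so it is exactly this bound. You need the inequality above to justify it without the unspecified second argument of the minimum. You should also note that a NO instance with good $S$ has more than $\ell$ positive marginals, which guarantees $\tau>0$.
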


\begin{proof}
In a YES instance, \cref{lem:set-residual-promises} shows that at most
$\ell$ marginals are positive.  Since $Q_d\rbra{0}=0$ and
$Q_d\rbra{x}\leq1$ for $x\geq0$,
\eqref{eq:set-statistic-expectation} gives $\Ex\sbra{Z}\leq\ell$.

Now consider a NO instance with good $S$.  Relabel the elements with
positive marginals so that $a_1\geq a_2\geq\cdots>0$.  There are more
than $\ell$ such elements; otherwise, their set would contain $R$ with
probability one, contradicting \cref{lem:set-residual-promises}.  That
lemma also shows that $R$ contains an element outside
$\cbra{1,\ldots,\ell}$ with probability at least $1/4$.  Consequently,
\begin{align}
\label{eq:set-residual-tail-mass}
  \sum_{i>\ell}a_i
  =\Ex\sbra*{\abs{R\setminus\cbra*{1,\ldots,\ell}}}
  \geq\Pr\sbra*{R\nsubseteq\cbra*{1,\ldots,\ell}}
   \geq\frac14.
\end{align}
Set $t:=a_{\ell+1}>0$.  We compare the contribution from the indices
$i>\ell$ with the amount by which the first $\ell$ contributions may
fall below one.

\emph{Case 1: $At<1$.}
For every $i>\ell$, we have $Aa_i\leq At<1$.  The lower bound on $Q_d$
in \eqref{eq:fejer-q-bounds} and
\eqref{eq:set-residual-tail-mass} give
\begin{align*}
  \Ex\sbra*{Z}
  &\geq\sum_{i>\ell}Q_d\rbra*{a_i}
   \geq\frac{A}{128}\sum_{i>\ell}a_i
   \geq\frac{A}{512}=4\ell.
\end{align*}
Since $b\geq1$, this implies \eqref{eq:set-no-expectation-gap}.

\emph{Case 2: $At\geq1$.}
For $0\leq x\leq t$, the inequality
$\min\cbra{Ax,1}\geq x/t$ holds.  Hence the tail contributes
\begin{align*}
  \sum_{i>\ell}Q_d\rbra*{a_i}
  &\geq\frac1{128t}\sum_{i>\ell}a_i
   \geq\frac1{512t}.
\end{align*}
Each of the first $\ell$ marginals is at least $t$.  The bound on
$1-Q_d$ in \eqref{eq:fejer-q-deficit} therefore gives
\begin{align*}
  \sum_{i=1}^{\ell}\rbra*{1-Q_d\rbra*{a_i}}
  &\leq\sum_{i=1}^{\ell}\frac1{Aa_i}
   \leq\frac{\ell}{At}.
\end{align*}
Subtracting this deficit from $\ell$ and adding the tail contribution
yields
\begin{align*}
  \Ex\sbra*{Z}
  &\geq\ell+\frac{1/512-\ell/A}{t}
   =\ell+\frac3{2048t}.
\end{align*}
Finally, \eqref{eq:set-marginal-budget} implies
$\rbra{\ell+1}t\leq b$.  It follows that
\begin{align*}
  \Ex\sbra*{Z}
  &\geq\ell+\frac{3\rbra*{\ell+1}}{2048b}
   \geq\ell+\frac{\ell}{1024b},
\end{align*}
which proves \eqref{eq:set-no-expectation-gap}.
\end{proof}

The counts $N_i$ may be dependent because one residual set can contain
several elements.  We therefore bound the variance by treating each
residual set as one sample.  We first record the form of the
Efron--Stein inequality used below.

\begin{lemma}[Efron--Stein inequality]
\label{lem:efron-stein}
Let $\xi_1,\ldots,\xi_r$ be independent random variables and let
$\xi_i'$ be an independent copy of $\xi_i$.  For a measurable function
$F$, define
\begin{align*}
  W := F\rbra{\xi_1,\ldots,\xi_r},
  \qquad
  W^{\rbra{i}}:=F\rbra{\xi_1,\ldots,\xi_{i-1},
    \xi_i',\xi_{i+1},\ldots,\xi_r}.
\end{align*}
If $W$ has a finite second moment, then
\begin{align}
\label{eq:efron-stein}
  \operatorname{Var}\rbra{W}
  &\leq\frac12\sum_{i=1}^r
    \Ex\sbra*{\rbra*{W-W^{\rbra{i}}}^2}.
\end{align}
\end{lemma}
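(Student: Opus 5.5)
The natural route is the standard martingale (Doob) decomposition. Let $\mathcal F_i=\sigma\rbra{\xi_1,\ldots,\xi_i}$ and write $\Ex_i\sbra{\cdot}=\Ex\sbra{\cdot\mid\mathcal F_i}$, so $\Ex_0\sbra{W}=\Ex\sbra{W}$ and $\Ex_r\sbra{W}=W$. The differences $\Delta_i:=\Ex_i\sbra{W}-\Ex_{i-1}\sbra{W}$ are martingale increments. They lie in $L^2$ because $W$ has a finite second moment, and they are pairwise orthogonal, since for $i<j$ we have $\Ex\sbra{\Delta_i\Delta_j}=\Ex\sbra{\Delta_i\Ex_{j-1}\sbra{\Delta_j}}=0$. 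Hence
\begin{equation*}
  \operatorname{Var}\rbra{W}=\sum_{i=1}^r\Ex\sbra*{\Delta_i^2}.
\end{equation*}

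Next, bound each $\Delta_i$ by a conditional variance. Let $\Ex^{\rbra{i}}$ denote expectation over $\xi_i$ alone, holding the other coordinates fixed. Independence gives $\Ex_{i-1}\sbra{W}=\Ex_i\sbra*{\Ex^{\rbra{i}}\sbra{W}}$: integrating out $\xi_i$ commutes with conditioning on $\xi_1,\ldots,\xi_i$ because of the product structure (Fubini). Therefore
\begin{equation*}
  \Delta_i=\Ex_i\sbra*{W-\Ex^{\rbra{i}}\sbra{W}}.
\end{equation*}
By conditional Jensen, $\Ex\sbra{\Delta_i^2}\le\Ex\sbra*{\rbra*{W-\Ex^{\rbra{i}}W}^2}=\Ex\sbra*{\operatorname{Var}^{\rbra{i}}\rbra{W}}$, where $\operatorname{Var}^{\rbra{i}}$ is the variance over $\xi_i$ with the other coordinates fixed.

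Finally, use the identity $\operatorname{Var}\rbra{Y}=\frac12\Ex\sbra{\rbra{Y-Y'}^2}$ for an independent copy $Y'$ of $Y$. Applied conditionally on the other coordinates, with $Y=W$ and $Y'=W^{\rbra{i}}$ (which uses the independent copy $\xi_i'$), it gives $\Ex\sbra*{\operatorname{Var}^{\rbra{i}}\rbra{W}}=\frac12\Ex\sbra*{\rbra{W-W^{\rbra{i}}}^2}$. Summing over $i$ yields \eqref{eq:efron-stein}.

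The only delicate step is the identity $\Ex_{i-1}\sbra{W}=\Ex_i\sbra{\Ex^{\rbra{i}}W}$. It rests on the independence of the $\xi_j$ and on Fubini's theorem for the product measure, and both sides are integrable because $W\in L^2$. The remaining steps are routine $L^2$ manipulations.
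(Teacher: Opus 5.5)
Your proposal is correct and follows the paper's proof step for step. Both use the Doob martingale decomposition $\operatorname{Var}(W)=\sum_i\Ex[(M_i-M_{i-1})^2]$ and the representation $M_i-M_{i-1}=\Ex[W-\Ex[W\mid\mathcal G_i]\mid\mathcal F_i]$ with $\mathcal G_i=\sigma(\xi_j:j\neq i)$; your $\Ex^{(i)}W$ is exactly $\Ex[W\mid\mathcal G_i]$ by independence. The remaining steps, conditional Jensen and $\Ex[\operatorname{Var}(W\mid\mathcal G_i)]=\frac12\Ex[(W-W^{(i)})^2]$ because $W$ and $W^{(i)}$ are conditionally i.i.d.\ given $\mathcal G_i$, are also the same.
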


\begin{proof}
Let $\mathcal F_i:=\sigma\rbra{\xi_1,\ldots,\xi_i}$ and
$M_i:=\Ex\sbra{W\mid\mathcal F_i}$, with $M_0:=\Ex\sbra{W}$.
The martingale differences are orthogonal, so
\begin{align}
\label{eq:efron-stein-martingale}
  \operatorname{Var}\rbra{W}
  &=\sum_{i=1}^r\Ex\sbra*{\rbra{M_i-M_{i-1}}^2}.
\end{align}
Let $\mathcal G_i:=\sigma\rbra{\xi_j:j\neq i}$.  Independence gives
\begin{align*}
  M_i-M_{i-1}
  &=\Ex\sbra*{W-\Ex\sbra{W\mid\mathcal G_i}\mid\mathcal F_i}.
\end{align*}
Conditional Jensen's inequality therefore implies
\begin{align*}
  \Ex\sbra*{\rbra{M_i-M_{i-1}}^2}
  \leq\Ex\sbra*{\operatorname{Var}\rbra{W\mid\mathcal G_i}}
  =\frac12\Ex\sbra*{\rbra*{W-W^{\rbra{i}}}^2},
\end{align*}
where the last equality holds because $W$ and $W^{\rbra{i}}$ are
conditionally independent and identically distributed given
$\mathcal G_i$.  Summing over $i$ in
\eqref{eq:efron-stein-martingale} proves \eqref{eq:efron-stein}.
\end{proof}

We now derive the required bounded-differences estimate for a Poisson
number of samples.

\begin{lemma}[Poisson bounded differences]
\label{lem:set-poisson-variance}
Let $N_{\mathrm P}\sim\operatorname{Poi}\rbra{\lam}$ be independent of
i.i.d. random variables $Y_1,Y_2,\ldots$ taking values in a finite set
$\calA$, and let $\calM$ be the multiset consisting of
$Y_1,\ldots,Y_{N_{\mathrm P}}$.  Suppose $g$ is a bounded real-valued
function on finite multisets from $\calA$.  If adding one element to
any multiset changes $g$ by at most $c$ in absolute value, then
\begin{align}
\label{eq:set-poisson-variance-bound}
  \operatorname{Var}\rbra*{g\rbra*{\calM}}&\leq\lam c^2.
\end{align}
\end{lemma}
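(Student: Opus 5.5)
The plan is to use Poissonization to turn the multiset $\calM$ into a collection of independent counts, and then apply the Efron--Stein inequality (\cref{lem:efron-stein}) on a fine discretization of the Poisson process. Poisson thinning gives that the multiplicities $C_a:=\abs{\cbra{j\leq N_{\mathrm P}:Y_j=a}}$, $a\in\calA$, are independent with $C_a\sim\operatorname{Poi}\rbra{\lam\pi_a}$, where $\pi_a=\Pr\sbra{Y_1=a}$. Since $\calM$ is determined by $\rbra{C_a}_{a\in\calA}$, we may regard $g\rbra{\calM}$ as a function of these counts. The variance of $g$ is unchanged by this relabeling; every random quantity involved is bounded, because $g$ is bounded.

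To apply Efron--Stein, each Poisson count must be split into many small independent pieces. Fix a large integer $r$. For each $a$, write $C_a=\sum_{u=1}^r C_{a,u}$ with $C_{a,u}$ i.i.d.\ $\operatorname{Poi}\rbra{\lam\pi_a/r}$; the joint law of the $C_a$ is unchanged. Equivalently, take $r$ i.i.d.\ blocks $\xi_u=\rbra{C_{a,u}}_{a\in\calA}$, each of which is the multiset produced by a $\operatorname{Poi}\rbra{\lam/r}$ number of draws of $Y$. Then $W=g\rbra{\xi_1\uplus\cdots\uplus\xi_r}$, and \cref{lem:efron-stein} gives
\begin{align*}
  \operatorname{Var}\rbra{W}\leq\frac r2\Ex\sbra*{\rbra*{W-W^{\rbra{1}}}^2}.
\end{align*}

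The key step is to bound the single-block replacement term. Let $K$ and $K'$ be the sizes of $\xi_1$ and $\xi_1'$. If $K=K'=0$, then $W=W^{\rbra{1}}$. If exactly one of $K,K'$ equals one and the other is zero, then the two multisets differ by adding a single element, so the hypothesis gives $\abs{W-W^{\rbra{1}}}\leq c$. Otherwise $\abs{W-W^{\rbra{1}}}\leq2\norm{g}_\infty$, and this event has probability $O\rbra{\rbra{\lam/r}^2}$. With $p_1=\Pr\sbra{K=1}=\rbra{\lam/r}e^{-\lam/r}$, this yields
\begin{align*}
  \Ex\sbra*{\rbra*{W-W^{\rbra{1}}}^2}\leq2p_1c^2+O\rbra*{\norm{g}_\infty^2\lam^2/r^2}.
\end{align*}
Multiplying by $r/2$ gives $\operatorname{Var}\rbra{W}\leq\lam c^2+O\rbra{\norm{g}_\infty^2\lam^2/r}$. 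Letting $r\to\infty$ proves \eqref{eq:set-poisson-variance-bound}.

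I expect the main obstacle to be the bookkeeping in this limiting step. The configurations in which a block contains two or more points must be shown to contribute only $o\rbra{1}$ to the sum; this is exactly where boundedness of $g$ is used. The Poisson decomposition must also reproduce the law of $\calM$ exactly, which holds by the infinite divisibility and thinning properties of compound Poisson multisets. A cleaner alternative, if it is preferred, is the Poisson Poincaré inequality for the independent counts $C_a$: it gives $\operatorname{Var}\leq\sum_a\lam\pi_a\Ex\sbra{\rbra{D_ag}^2}\leq\lam c^2$, where $D_a$ is the add-one-point difference operator. The discretization argument above essentially proves that inequality from \cref{lem:efron-stein}.
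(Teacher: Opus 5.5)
Your proof is correct and follows essentially the same route as the paper: Poisson splitting into independent counts, the Efron--Stein inequality on a fine discretization, and a limit. The only difference is that the paper approximates each count $\operatorname{Poi}(\lambda_j)$ by a sum of $n$ Bernoulli$(\lambda_j/n)$ variables, so that every resampling changes the multiset by exactly one element and boundedness of $g$ is needed only to pass moments through convergence in distribution. You instead use an exact decomposition into $r$ Poisson blocks, which leaves the law of $g(\calM)$ unchanged, and use boundedness to make the $O(\lambda^2/r)$ contribution of multi-point blocks vanish.
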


\begin{proof}
Enumerate $\calA$ as $\cbra{1,\ldots,J}$ and set
$\lam_j:=\lam\Pr\sbra{Y_1=j}$.  Poisson splitting gives independent
multiplicities $K_j\sim\operatorname{Poi}\rbra{\lam_j}$ satisfying
$\sum_{j=1}^J\lam_j=\lam$.  Write
$g\rbra{\calM}=f\rbra{K_1,\ldots,K_J}$.

Fix an integer $n>\max_j\lam_j$.  For every $j$ and
$1\leq t\leq n$, let $B_{j,t}$ be independent Bernoulli variables with
mean $\lam_j/n$, and set
$K_j^{\rbra{n}}:=\sum_{t=1}^n B_{j,t}$.  Resampling $B_{j,t}$ changes
its value with probability
$2\rbra{\lam_j/n}\rbra{1-\lam_j/n}$.  On this event, the associated
multiset gains or loses one element, so the value of $f$ changes by at
most $c$.  Applying \cref{lem:efron-stein} gives
\begin{align*}
  \operatorname{Var}\rbra*{f\rbra*{K^{\rbra*{n}}}}
  &\leq\frac12\sum_{j=1}^J n\cdot
       2\frac{\lam_j}{n}
       \rbra*{1-\frac{\lam_j}{n}}c^2
   \leq\lam c^2.
\end{align*}
As $n\to\infty$, the vector $K^{\rbra{n}}$ converges in distribution
to $K$ on the discrete space $\Z_{\geq0}^J$. Since $f$ is bounded,
the first and second moments of $f\rbra{K^{\rbra{n}}}$ converge to those
of $f\rbra{K}$. Passing to the limit proves
\eqref{eq:set-poisson-variance-bound}.
\end{proof}

\begin{lemma}[Concentration of the residual statistic]
\label{lem:set-statistic-concentration}
Suppose $k/\log k<\ell\leq k$.  Consider the decision rule that computes
$Z$ for every value of $N_{\mathrm P}$ and rejects if
\begin{align}
\label{eq:set-decision-threshold}
  Z&\geq\ell+\frac{\ell}{2048b}.
\end{align}
For all sufficiently large $k$, the probability of an incorrect
decision is at most $1/20$ in a YES instance and in a NO instance for
which $S$ is good.
\end{lemma}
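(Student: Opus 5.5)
The argument is Chebyshev's inequality around the expectation gap of \cref{lem:set-expectation-separation}, with the variance controlled by \cref{lem:set-poisson-variance}. Take $\calA$ to be the finite collection of possible residual sets, $Y_j=R_j$, and $\lam=m$. Then $g(\calM)=\sum_{i\in\Omega\setminus S}h_d(N_i)$ is a bounded function of the multiset. It is bounded because $h_d(0)=0$, $\norm{h_d}_\infty\leq1+3^d$, and only finitely many $i$ can appear.

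The first step is the bounded-differences constant. Adding one residual set $R$ with $\abs{R}\leq b$ changes at most $b$ of the counts $N_i$, each by one. Hence $g$ changes by at most
\begin{equation*}
  c=2b\rbra{1+3^d}.
\end{equation*}
\Cref{lem:set-poisson-variance} then gives
\begin{equation*}
  \operatorname{Var}(Z)\leq 4mb^2\rbra{1+3^d}^2\leq 16\,mb^2\,9^d.
\end{equation*}

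The second step is to make this variance small compared with the squared gap. In a YES instance, $\Ex\sbra{Z}\leq\ell$, so a wrong decision requires $Z-\Ex\sbra{Z}\geq\ell/(2048b)$. In a NO instance with good $S$, \eqref{eq:set-no-expectation-gap} gives $\Ex\sbra{Z}\geq\ell+\ell/(1024b)$, so a wrong decision requires $\Ex\sbra{Z}-Z>\ell/(2048b)$. In both cases Chebyshev bounds the error probability by
\begin{equation*}
  \frac{(2048b)^2\operatorname{Var}(Z)}{\ell^2}\leq \frac{C\,b^4\,m\,9^d}{\ell^2}
\end{equation*}
for an absolute constant $C$. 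Substituting $m=2A/d=4096\ell/d$, this is at most $C' b^4 9^d/(d\ell)$.

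The third step checks that the choice of $d$ makes this expression $o(1)$. Since $d\leq \log\ell/(64\log 3)+1$, we have $9^d=3^{2d}\leq 9\,\ell^{1/32}$. Since $\ell>k/\log k$ and $b=\ceil{\log k}$, we have $b^4=O(\log^4\ell)$. The bound is therefore $O(\log^4\ell\cdot\ell^{1/32-1})\to0$, which falls below $1/20$ for all sufficiently large $k$.

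The main obstacle is this bookkeeping: the coefficient blowup $3^d$ must stay polynomially smaller than $\ell/(b^4m)\approx d/b^4$ relative to $\ell$. The constant $128$ in the definition of $d$ is there for exactly this purpose. Using $\ell\geq k/\log k$ turns $b\approx\log k$ into $\Theta(\log\ell)$, and then $\ell^{1/32}\log^4\ell=o(\ell)$ closes the argument. Two details need care: $d\geq3$ requires $\ell$ large, which holds for large $k$; and the Poisson variance lemma applies because $\calA$ is finite and $g$ is bounded.
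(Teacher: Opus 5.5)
Your proposal is correct and follows essentially the same route as the paper. Both arguments use the Poisson bounded-differences lemma with increment $2b\norm{h_d}_\infty$. Both then apply Chebyshev at distance $\ell/(2048b)$ from the means given by \cref{lem:set-expectation-separation}, and use the choice of $d$ to force $3^d\leq\ell^{1/64}$, so the error bound is polynomially small in $\ell$ once $b=\Theta(\log\ell)$. The differences are cosmetic. You bound $(1+3^d)^2\leq4\cdot9^d\leq36\,\ell^{1/32}$ and keep the $1/d$ from $m=2A/d$. The paper instead uses $\norm{h_d}_\infty\leq\ell^{1/16}$ and $m=\Theta(\ell/\log\ell)$.
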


\begin{proof}
Set $H:=\norm{h_d}_\infty$.  By
\cref{lem:fejer-poisson-representation} and the choice of $d$,
\begin{align*}
  H
  &\leq1+3^d
   \leq1+\ell^{1/64}
   \leq\ell^{1/16}
\end{align*}
for all sufficiently large $\ell$.

To apply \cref{lem:set-poisson-variance} to $Z$, take the finite family
of possible residual sets as $\calA$.  Adding one residual set changes
at most $b$ coordinate counts, and each summand $h_d\rbra{N_i}$ changes
by at most $2H$.  Thus adding one residual set changes $Z$ by at most
$2bH$.  The statistic is bounded by
$\abs{Z}\leq\abs{\Omega\setminus S}H$, so the lemma applies and gives
\begin{align}
\label{eq:set-statistic-variance}
  \operatorname{Var}\rbra*{Z}
  &\leq4mb^2H^2
   \leq4mb^2\ell^{1/8}.
\end{align}

By \cref{lem:set-expectation-separation}, the threshold in
\eqref{eq:set-decision-threshold} is at distance at least
$\ell/\rbra{2048b}$ from the mean in each of the two cases.  Chebyshev's
inequality and \eqref{eq:set-statistic-variance} therefore bound the
error probability by
\begin{align*}
  \frac{4mb^2\ell^{1/8}}{\rbra{\ell/\rbra{2048b}}^2}
  &=4\cdot2048^2\frac{mb^4}{\ell^{15/8}}.
\end{align*}
Since $m=\Theta\rbra{\ell/\log\ell}$, this bound is
$O\rbra{b^4/\rbra{\ell^{7/8}\log\ell}}$.

It remains to verify that this bound tends to zero throughout the stated
range of $\ell$.  If $k/\log k<\ell\leq k$, then
$\log\ell=\Theta\rbra{\log k}$, $b=\Theta\rbra{\log\ell}$, and the
minimum possible value of $\ell$ tends to infinity with $k$.  Hence the
bound is $o\rbra{1}$ uniformly over this range and is at most $1/20$ for
all sufficiently large $k$.
\end{proof}

\begin{proposition}[Guarantee of \textsc{TestResidual}]
\label{prop:set-residual-algorithm}
There is an absolute integer $k_0\geq16$ such that the following holds.
Fix $k\geq k_0$ and a possible initial union $S$ from Line~3 of
\cref{alg:set-tester} with $\abs{S}\leq k$.  Set $\ell=k-\abs{S}$ and
$b=\ceil{\log k}$, and suppose that $\ell>k/\log k$.
Run \textnormal{\textsc{TestResidual}}$\rbra{S,\ell,b}$ as specified in
\cref{alg:set-residual-test}.  Conditional on $S$, the algorithm
accepts with probability at least $19/20$ in a YES instance.  In a NO
instance, if $S$ is good, it rejects with probability at least $47/50$.
It uses at most $O\rbra{\ell/\log\ell}=O\rbra{k/\log k}$ fresh copies
of $\widetilde X$.
\end{proposition}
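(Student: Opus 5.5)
The plan is to compare the actual algorithm, which stops early when the Poisson draw is large, with the idealized decision rule of \cref{lem:set-statistic-concentration}, which computes $Z$ for every value of $N_{\mathrm P}$. Fix $S$ with $\abs{S}\leq k$ and $\ell>k/\log k$. Couple the two procedures on the same $N_{\mathrm P}\sim\operatorname{Poi}\rbra{m}$ and the same fresh copies $\widetilde X_1,\widetilde X_2,\ldots$. They then produce identical outputs except on the event $\cbra{N_{\mathrm P}>\ceil{100m}}$, where the algorithm accepts without sampling.

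\emph{YES case.} On the bad event the algorithm accepts, which is correct. So its acceptance probability is at least that of the idealized rule, which is at least $19/20$ by \cref{lem:set-statistic-concentration}.

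\emph{NO case with $S$ good.} The error probability is at most the idealized error $1/20$ plus $\Pr\sbra{N_{\mathrm P}>\ceil{100m}}$. Markov's inequality bounds the second term by $\Ex\sbra{N_{\mathrm P}}/(100m)=1/100$. The total error is therefore at most $1/20+1/100=3/50$, so the algorithm rejects with probability at least $47/50$. The residual sets $R_j$ are deterministic functions of independent copies of $\widetilde X$ for fixed $S$, so the idealized rule is exactly the rule that the concentration lemma analyzes. \Cref{lem:set-residual-promises,lem:set-expectation-separation} supply the promises that lemma needs.

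\emph{Sample count and $k_0$.} The algorithm draws at most $\ceil{100m}$ copies, and $m=2A/d=4096\ell/d$. I must check that $d\geq3$ and $d=\Theta\rbra{\log\ell}$ whenever $\ell>k/\log k$ and $k\geq k_0$. This follows from $\log\ell\geq\log k-\log\log k\to\infty$: once $\log\ell/(128\log3)\geq3/2$ we get $d\geq3$, and $d$ is within a constant factor of $\log\ell$. Hence $m=O\rbra{\ell/\log\ell}$, and since $x/\log x$ is increasing for $x\geq3$ and $\ell\leq k$, this is $O\rbra{k/\log k}$.

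I then choose $k_0\geq16$ as an absolute constant large enough for three requirements:
\begin{itemize}
\item the bound $d\geq3$;
\item the bound $1+3^d\leq\ell^{1/16}$ in \cref{lem:set-statistic-concentration};
\item the asymptotic conclusion ``for all sufficiently large $k$'' of that lemma.
\end{itemize}
All three depend only on $k$ through $\ell>k/\log k$.

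The main obstacle is only the bookkeeping of uniformity. The concentration lemma must hold uniformly over all admissible $S$, that is, over all $\ell\in(k/\log k,k]$. Its proof already gives an $o(1)$ bound that is uniform in this range, so a single threshold $k_0$ works. The truncation itself costs only the Markov term.
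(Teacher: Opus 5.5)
Your proposal is correct and follows the paper's proof: you couple \textsc{TestResidual} with the untruncated rule of \cref{lem:set-statistic-concentration}, observe that the early \textsc{Accept} on $\cbra{N_{\mathrm P}>\ceil{100m}}$ costs nothing in the YES case and at most $1/100$ (by Markov) in the NO case, and bound the number of copies by $\ceil{100m}=O\rbra{\ell/\log\ell}$ after choosing $k_0$ so that $d\geq3$ and the concentration lemma holds uniformly over $k/\log k<\ell\leq k$. Your extra checks, namely $d=\Theta\rbra{\log\ell}$ and $1+3^d\leq\ell^{1/16}$, make explicit what the paper absorbs into its choice of $k_0$.
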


\begin{proof}
Choose an absolute $k_0\geq16$ such that $d\geq3$ and
\cref{lem:set-statistic-concentration} applies for every $k\geq k_0$
and $k/\log k<\ell\leq k$.

The actual algorithm first draws $N_{\mathrm P}$ and requests fresh
copies of $\widetilde X$ only if
$N_{\mathrm P}\leq N_{\max}:=\ceil{100m}$.  Markov's inequality
gives
\begin{align*}
  \Pr\sbra*{N_{\mathrm P}>N_{\max}}
  &\leq\frac{m}{N_{\max}}\leq\frac1{100}.
\end{align*}
When $N_{\mathrm P}\leq N_{\max}$, \textsc{TestResidual} follows the
decision rule analyzed in \cref{lem:set-statistic-concentration}.
When $N_{\mathrm P}>N_{\max}$, it returns \textsc{Accept}.
The latter event has probability at most $1/100$, so it adds at most
$1/100$ to the NO error and no error in the YES case.
Thus the NO error is at most $1/20+1/100=3/50$, and the YES error
is at most $1/20$.

Finally, $m=\Theta\rbra{\ell/\log\ell}$ gives
$N_{\max}=O\rbra{\ell/\log\ell}$.  Since
$k/\log k<\ell\leq k$, this is also $O\rbra{k/\log k}$.
\end{proof}

\begin{proof}[Proof of \cref{thm:set-testing}]
\emph{Amplification.}
If $k\geq\abs{\Omega}$, then $T=\Omega$ satisfies the YES condition for
every distribution, so the algorithm accepts without drawing a sample.
Assume henceforth that $k<\abs{\Omega}$.

Set $L:=\ceil{1/\delta}$, partition the original samples into
disjoint blocks of size $L$, and let $\widetilde X$ be the union within
one block.  By \cref{prop:set-amplification}, this construction produces
independent copies of $\widetilde X$ satisfying the amplified YES and NO
guarantees used below.

\noindent\emph{Algorithm.}
Run \cref{alg:set-tester} with the absolute constant $k_0$ from
\cref{prop:set-residual-algorithm}.

\noindent\emph{YES case.}
In a YES instance, every copy of $\widetilde X$ is contained in $T_0$.
The algorithm for $k<k_0$ therefore accepts with probability one.  For
$k\geq k_0$, the initial union satisfies $S\subseteq T_0$ and hence
$\abs{S}\leq k$.  The branch $\ell\leq k/\log k$ accepts with probability
one by \cref{prop:set-small-budget}, and the branch
$\ell>k/\log k$ accepts with probability at least $19/20$ by
\cref{prop:set-residual-algorithm}.  Thus the acceptance probability is
at least $19/20$ in every YES instance.

\noindent\emph{NO case.}
In a NO instance with $k<k_0$, the analysis in
\eqref{eq:set-elementary-branch-error}, applied with $S=\varnothing$ and
$\ell=k$, shows that the algorithm rejects with probability at least
$19/20$.  Now suppose that $k\geq k_0$.  If the initial union has more
than $k$ elements, the algorithm rejects immediately.  Otherwise,
\cref{lem:set-initial-union} shows that the event $\abs{S}\leq k$ and $S$ is
not good has probability at most $1/20$.

Fix any good $S$.  By \cref{prop:set-small-budget}, the conditional
error is at most $1/20$ when $\ell\leq k/\log k$.  By
\cref{prop:set-residual-algorithm}, it is at most $3/50$ when
$\ell>k/\log k$.  Consequently,
\begin{align*}
  \Pr\sbra*{\text{error}}
  &\leq \frac1{20}+\frac3{50}
   =\frac{11}{100}<\frac13.
\end{align*}
Thus the algorithm succeeds with probability at least $2/3$ in both
cases.

\noindent\emph{Sample complexity.}
The sample counts in \cref{alg:set-tester}, together with
\cref{prop:set-residual-algorithm}, show
that every execution uses $O\rbra{k/\log\rbra{ek}}$ copies of
$\widetilde X$.  For $k<k_0$, this bound follows from
$40\rbra{k+1}=O\rbra{k}$ and $\log\rbra{ek}\leq\log\rbra{ek_0}$, since $k_0$ is absolute.
Since each copy is the union of
$L=\ceil{1/\delta}\leq2/\delta$ original samples, the total sample
complexity is $O\rbra*{k/\rbra{\delta\log\rbra{ek}}}$.
The tester uses only $\Omega$, $k$, $\delta$, the sampled sets,
and private randomness; it does not require a description of
$\mathcal D$.
\end{proof}

\subsection{Quantum reduction via influence sampling}
\label{subsec:upper-bound-proofs}

It remains to generate the random sets used by
\cref{thm:set-testing}.  Fix an $n$-qubit channel $\Phi$.  For a Pauli
axis $a\in\cbra{1,2,3}$ and a bit $b\in\cbra{0,1}$, define
\begin{align*}
  \pi_{a,b}&:=\frac{I_2+\rbra*{-1}^b\sigma_a}{2}.
\end{align*}
Thus $\pi_{a,b}$ is the eigenstate of $\sigma_a$ with eigenvalue
$\rbra{-1}^b$.  The following procedure is one iteration of
\textsc{Influence-Sample} from~\cite[Algorithm~1]{BY25}.

\begin{algorithm}[H]
\caption{\textsc{Influence-Sample}$\rbra{\Phi}$}
\label{alg:influence-sample-round}
\begin{algorithmic}[1]
\Require Channel access to an $n$-qubit channel $\Phi$
\Ensure A random set $X\subseteq\sbra{n}$
\State Draw $a$ uniformly from $\cbra{1,2,3}$ and
  $z$ uniformly from $\cbra{0,1}^n$, independently
\State Prepare $\rho_{a,z}:=\bigotimes_{i=1}^n\pi_{a,z_i}$ and apply
  $\Phi$ once
\State Measure $\sigma_a$ on every output qubit and record
  $w\in\cbra{0,1}^n$, where $w_i$ represents eigenvalue
  $\rbra{-1}^{w_i}$
\State \Return $X:=\cbra{i\in\sbra{n}:w_i\neq z_i}$
\end{algorithmic}
\end{algorithm}

The procedure uses the same Pauli axis on every qubit.  It makes one
channel query, uses a product input and single-qubit measurements, and
requires no ancilla.

\begin{lemma}[One-round \textsc{Influence-Sample} guarantee
  {\normalfont (adapted from {\cite[Theorem~30 and Lemma~31]{BY25}})}]
\label{lem:influence-sample-round}
For every $T\subseteq\sbra{n}$, the output of
\cref{alg:influence-sample-round} satisfies
\begin{align}
\label{eq:influence-sample-escape-bounds}
  \frac23\Inf_{\overline{T}}\sbra*{\Phi}
  &\leq\Pr\sbra*{X\nsubseteq T}
   \leq\Inf_{\overline{T}}\sbra*{\Phi}.
\end{align}
\end{lemma}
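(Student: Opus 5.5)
The plan is to compute $\Pr\sbra{X\nsubseteq T}$ exactly in terms of the Pauli coefficients $\wh\Phi\rbra{x,y}$, using the expansion of \cref{prop:channel-fourier-representation}.

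\emph{Step 1: reduce to a measurement on $\overline T$.} The event $X\nsubseteq T$ depends only on the output qubits in $\overline T$: it says $w_i\neq z_i$ for some $i\in\overline T$. So I condition on $a$ and $z$ and write $\Pr\sbra{X\subseteq T\mid a,z}=\Tr\rbra{E\,\Phi\rbra{\rho_{a,z}}}$. Here $E=I_T\ot\bigotimes_{i\in\overline T}\pi_{a,z_i}$. Expanding $\pi_{a,z_i}=\rbra{I+\rbra{-1}^{z_i}\sigma_a}/2$ writes $E$ as an average over subsets $B\subseteq\overline T$ of $\rbra{-1}^{z_B}\sigma_a^{\ot B}$. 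The input $\rho_{a,z}$ expands the same way over subsets $C\subseteq\sbra n$.

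\emph{Step 2: average over $z$ and $a$.} Substituting $\Phi\rbra{\rho}=\sum_{x,y}\wh\Phi\rbra{x,y}\sigma_x\rho\sigma_y$ gives trace terms of the form $\Tr\rbra{\sigma_a^{B}\sigma_x\sigma_a^{C}\sigma_y}$. Averaging $z$ forces $B=C$, with $C$ restricted to $\overline T$ as far as the $z$-phases are concerned. Indeed, $z_i$ for $i\in T$ appears only through $\sigma_a^{C}$, so averaging kills every $C$ with $C\cap T\neq\varnothing$. The trace then forces $x=y$ up to phase. Per qubit, $\Tr\rbra{\sigma_a\sigma_{x_i}\sigma_a\sigma_{y_i}}$ is nonzero only when $y_i=x_i$, where it equals $\pm2$ with sign $+$ iff $x_i\in\cbra{0,a}$. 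The result is
\[
\Pr\sbra*{X\subseteq T}=\sum_x\wh\Phi\rbra*{x,x}\prod_{i\in\overline T}c_a\rbra*{x_i},
\]
where $c_a\rbra{x_i}=\frac12\rbra{1+\chi}$ with $\chi=1$ if $x_i\in\cbra{0,a}$ and $\chi=-1$ otherwise. Thus $c_a\rbra{x_i}=\mathbf 1\cbra{x_i\in\cbra{0,a}}$. Averaging over $a$ gives
\[
\Pr\sbra*{X\subseteq T}=\sum_x\wh\Phi\rbra*{x,x}\,\Pr_a\sbra*{x_i\in\cbra{0,a}\ \forall i\in\overline T}.
\]
This matches the known computation of \cite[Lemma~31]{BY25}. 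The diagonal weights $\wh\Phi\rbra{x,x}$ are nonnegative and sum to one by \cref{eq:channel-fourier-normalization}.

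\emph{Step 3: bound the factor.} If $x_{\overline T}=0$, the factor is $1$. If $x_{\overline T}\neq0$, at most one value of $a$ can make every nonzero $x_i$ with $i\in\overline T$ equal to $a$, so the factor is at most $1/3$ and at least $0$. Hence
\[
1-\Inf_{\overline T}\sbra{\Phi}\le\Pr\sbra{X\subseteq T}\le1-\Inf_{\overline T}\sbra{\Phi}+\tfrac13\Inf_{\overline T}\sbra{\Phi}.
\]
Taking complements gives \cref{eq:influence-sample-escape-bounds}.

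The main obstacle is the bookkeeping in Step 2: verifying that the off-diagonal terms $x\neq y$ cancel after averaging over $z$ alone, and getting the per-qubit sign correct. I would handle this one qubit at a time. Since everything factorizes over qubits, it suffices to check the single-qubit identity
\[
\Ex_{z}\Tr\rbra*{\pi_{a,z}^{\mathrm{meas}}\,\sigma_{x}\pi_{a,z}\sigma_{y}}
\]
for each axis $a$. The measurement projector is $\pi_{a,z}$ on qubits in $\overline T$ and $I$ on qubits in $T$, and the check is a finite computation.
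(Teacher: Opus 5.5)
Your proof is correct and takes essentially the paper's route: you expand $\Phi$ via \cref{prop:channel-fourier-representation} and average over $z$ qubit by qubit, which kills the off-diagonal terms and gives $\Pr\sbra{X\subseteq T\mid a}=\sum_{x:\,x_{\overline{T}}\in\cbra{0,a}^{\overline{T}}}\wh\Phi\rbra{x,x}$, and then you average over the three axes and bound the weight of each label with $x_{\overline{T}}\neq0$ by $1/3$. The only cosmetic difference is how the single-qubit identity is checked: you expand $\pi_{a,b}$ in the Pauli basis, whereas the paper factors the trace through the rank-one projector.
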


\begin{proof}
Fix $T\subseteq\sbra{n}$ and condition on the axis $a$.
The event $X\subseteq T$ means that every output bit in
$\overline{T}$ matches its input bit.  Averaging over the uniform
$z\in\cbra{0,1}^n$ gives
\begin{align}
\label{eq:influence-sample-containment-average}
  \Pr\sbra*{X\subseteq T\mid a}
  &=2^{-n}\sum_{z\in\cbra*{0,1}^n}
    \Tr\rbra*{
      \rbra*{I_T\ot\bigotimes_{i\in\overline{T}}\pi_{a,z_i}}
      \Phi\rbra*{\rho_{a,z}}
    }.
\end{align}

We evaluate this average through the channel Pauli expansion.  For
single-qubit Pauli labels $u,v\in\Z_4$,
\begin{align}
\label{eq:product-input-unrestricted-average}
  \frac12\sum_{b=0}^1
    \Tr\rbra*{\sigma_u\pi_{a,b}\sigma_v}
  &=\frac12\Tr\rbra*{\sigma_v\sigma_u}
   =\mathbf1\cbra*{u=v},\\
\label{eq:product-input-matched-average}
  \frac12\sum_{b=0}^1
    \Tr\rbra*{\pi_{a,b}\sigma_u\pi_{a,b}\sigma_v}
  &=\mathbf1\cbra*{u=v\in\cbra*{0,a}}.
\end{align}
The first identity uses $\pi_{a,0}+\pi_{a,1}=I_2$ and Pauli
orthogonality.  For the second, the rank-one projector $\pi_{a,b}$
factors the trace into two expectations.  Each expectation is one for
$\sigma_0$, $\rbra{-1}^b$ for $\sigma_a$, and zero for the other two
Pauli matrices.  Averaging over $b$ proves the identity.

By \cref{prop:channel-fourier-representation},
\begin{align*}
  \Phi\rbra{A}
  &=\sum_{x,y\in\Z_4^n}
    \wh\Phi\rbra{x,y}\sigma_xA\sigma_y.
\end{align*}
For each summand, the trace and input average in
\eqref{eq:influence-sample-containment-average} factor over the qubits.
Apply \eqref{eq:product-input-unrestricted-average} on $T$ and
\eqref{eq:product-input-matched-average} on $\overline{T}$.
Only terms with $x=y$ and
$x_{\overline{T}}\in\cbra{0,a}^{\overline{T}}$ remain.  Hence
\begin{align*}
  \Pr\sbra*{X\subseteq T\mid a}
  &=\sum_{x:\,x_{\overline{T}}\in\cbra*{0,a}^{\overline{T}}}
    \wh\Phi\rbra*{x,x},
\end{align*}
where $x$ ranges over $\Z_4^n$.

The diagonal coefficients are nonnegative and sum to one by
\eqref{eq:channel-fourier-normalization}.  Averaging the complementary
probability over the three axes gives
\begin{align*}
  \Pr\sbra*{X\nsubseteq T}
  &=\frac13\sum_{a=1}^3
    \sum_{x:\,x_{\overline{T}}\notin\cbra*{0,a}^{\overline{T}}}
    \wh\Phi\rbra*{x,x}.
\end{align*}
A label with $x_{\overline{T}}=0$ never contributes.  If
$x_{\overline{T}}\neq0$, it belongs to
$\cbra{0,a}^{\overline{T}}$ for at most one axis $a$.  Its coefficient
therefore appears with weight between $2/3$ and one.  The sum of these
coefficients is $\Inf_{\overline{T}}\sbra{\Phi}$ by
\eqref{eq:channel-complementary-mass}, which proves
\eqref{eq:influence-sample-escape-bounds}.
\end{proof}

The two upper bounds follow by applying \cref{thm:set-testing}
to the sampled sets. We verify its NO condition using
\cref{lem:influence-certificate}(i) for unitaries and
\cref{lem:channel-influence-certificate} for channels.

\begin{proof}[Proof of \cref{thm:unitary-junta-upper-bound}]
If $k=n$, every input satisfies the YES condition, so the algorithm
accepts without querying the oracle.  Assume $k<n$, and set
$\Omega:=\sbra{n}$ and $\delta:=2\eps^2/3$.  Since
$0<\delta\leq2/3$, \textsc{TestSubsets} from
\cref{thm:set-testing} applies.

Whenever \textsc{TestSubsets} requests a sample, run
\cref{alg:influence-sample-round} on the channel $\Phi_U$.  Each round
uses one forward query to $U$.  Independent choices in the rounds
produce independent samples from a distribution $\mathcal D_U$.

If $U\in\calJ_{T_0}$ for some $\abs{T_0}\leq k$, then
$\Phi_U\in\calC_{T_0}$.  By \cref{lem:junta-fourier-support},
$\Inf_{\overline{T_0}}\sbra{\Phi_U}=0$, and
\eqref{eq:influence-sample-escape-bounds} gives
$\Pr\sbra{X\nsubseteq T_0}=0$.  Thus $\mathcal D_U$ satisfies the YES
condition of \cref{prob:set-testing}.

Suppose instead that $\distU\rbra{U,\calJ_k}\geq\eps$.  For every
$T\subseteq\sbra{n}$ with $\abs{T}\leq k$,
\cref{lem:influence-certificate}(i) gives
$\Inf_{\overline{T}}\sbra{U}\geq\eps^2$.  By
\eqref{eq:unitary-channel-influence}, the same lower bound holds for
$\Inf_{\overline{T}}\sbra{\Phi_U}$.  The lower bound in
\eqref{eq:influence-sample-escape-bounds} gives
$\Pr\sbra{X\nsubseteq T}\geq2\eps^2/3=\delta$.  Thus
$\mathcal D_U$ satisfies the NO condition of \cref{prob:set-testing}.

By \cref{thm:set-testing}, the tester succeeds with probability at least
$2/3$ and requests at most
$O\rbra{k/\rbra{\delta\log\rbra{ek}}}
=O\rbra{k/\rbra{\eps^2\log\rbra{ek}}}$ samples on every execution.
Each sample uses one forward query.  The queries can be made
nonadaptive by running the maximum number of independent rounds
before processing their outputs classically in order.
Each round uses a product input and single-qubit measurements, without
ancillas.
\end{proof}

\begin{proof}[Proof of \cref{thm:channel-junta-upper-bound}]
If $k=n$, every input satisfies the YES condition, so the algorithm
accepts without querying the oracle.  Assume $k<n$, and set
$\Omega:=\sbra{n}$ and $\delta:=2\eps^2/3$.  Since
$0<\delta\leq2/3$, \textsc{TestSubsets} from
\cref{thm:set-testing} applies.

Whenever \textsc{TestSubsets} requests a sample, run
\cref{alg:influence-sample-round} on $\Phi$.  Each round uses one
channel query.  Independent choices in the rounds produce independent
samples from a distribution $\mathcal D_\Phi$.

If $\Phi\in\calC_{T_0}$ for some $\abs{T_0}\leq k$, then
\cref{lem:junta-fourier-support} gives
$\Inf_{\overline{T_0}}\sbra{\Phi}=0$.  Hence
\eqref{eq:influence-sample-escape-bounds} gives
$\Pr\sbra{X\nsubseteq T_0}=0$, so $\mathcal D_\Phi$ satisfies the YES
condition of \cref{prob:set-testing}.

Suppose instead that $\distC\rbra{\Phi,\calC_k}\geq\eps$.  For every
$T\subseteq\sbra{n}$ with $\abs{T}\leq k$,
\cref{lem:channel-influence-certificate} gives
$\Inf_{\overline{T}}\sbra{\Phi}\geq\eps^2$.  The lower bound in
\eqref{eq:influence-sample-escape-bounds} therefore gives
$\Pr\sbra{X\nsubseteq T}\geq2\eps^2/3=\delta$.  Thus
$\mathcal D_\Phi$ satisfies the NO condition of
\cref{prob:set-testing}.

By \cref{thm:set-testing}, the tester succeeds with probability at least
$2/3$ and requests at most
$O\rbra{k/\rbra{\delta\log\rbra{ek}}}
=O\rbra{k/\rbra{\eps^2\log\rbra{ek}}}$ samples on every execution.
Each sample uses one channel query.  The queries can be made
nonadaptive by running the maximum number of independent rounds
before processing their outputs classically in order.
Each round uses a product input and single-qubit measurements, without
ancillas.
\end{proof}

\section{Lower Bounds for Quantum Junta Testing}\label{sec:lower-bounds}

We prove lower bounds matching the upper bounds in
\cref{sec:upper-bounds} for the parameter range stated below.
The bounds apply to adaptive testers with arbitrary ancillas.

\begin{theorem}[Lower bound for testing junta unitaries]
\label{thm:unitary-junta-lower-bound}
For every sufficiently large integer $k$, every integer
$n\geq2k$, and every $2^{-k/16}\leq\eps\leq1/16$,
any tester that, given forward-only access to an $n$-qubit unitary $U$,
distinguishes $U\in\calJ_k$ from
$\distU\rbra{U,\calJ_k}\geq\eps$ with success probability at least
$2/3$ on every promised input requires
$\Omega(k/(\eps^2\log k))$ queries.
\end{theorem}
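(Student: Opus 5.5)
We reduce classical support-size testing to forward-only junta testing through a random phase-masked oracle, following the outline in the technical overview. The classical input is the support-size lower bound of \cref{app:classical-hard-priors}: for sufficiently large $s$ and small $\delta$, distinguishing distributions on $\sbra{2s}$ supported on at most $s$ labels from those $\delta$-far from every such distribution requires $\Omega\rbra{s/\rbra{\delta\log s}}$ samples. We set $\delta=c\eps^2$ for a suitable constant $c$. We also fix $a=\ceil{\log\rbra{C k/\eps^2}}=O\rbra{\log\rbra{k/\eps}}$ address qubits and $s=k-a$ data qubits, placing them among the first $a+2s\leq 2k\leq n$ qubits. The assumption $\eps\geq2^{-k/16}$ gives $a\leq k/8+O\rbra{\log k}$, so $s=\Theta\rbra{k}$.

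For a distribution $p$ on $\sbra{2s}$, the random oracle is built as follows. Draw independently, for each address $r\in\cbra{0,1}^a$, a label $j_r\sim p$ and a phase $\omega_r$ uniform among the $Q$-th roots of unity, with $Q>q$. Define
\begin{equation*}
U=\sum_r \omega_r\ketbra{r}{r}\ot Z_{j_r}\ot I,
\end{equation*}
acting as the identity on unused qubits.

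\emph{YES side.} If $\supp\rbra{p}\subseteq T'$ with $\abs{T'}\leq s$, then $U$ acts only on the address register and the data qubits in $T'$. Hence $U\in\calJ_k$ with probability one.

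\emph{NO side.} We use \cref{lem:influence-certificate}(ii) with $\eps$, requiring $\Inf_{\overline T}\sbra{U}\geq2\eps^2$ for every $\abs{T}\leq k$. The Pauli coefficients of $U$ are computable: expanding $\ketbra rr$ in $\cbra{I,Z}^{\ot a}$ shows that the label distribution of $U$ is a mixture over data-qubit choices.

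Take $T$ containing all address qubits. Then the data part of $T$ has size at most $s$, and $\Inf_{\overline T}\sbra{U}$ equals the empirical mass $\hat p\rbra{\sbra{2s}\setminus T}$ of the $2^a$ sampled labels. Since $2^a\geq Ck/\eps^2$, a Chernoff bound plus a union bound over the $\binom{2s}{s}\leq 4^{s}$ label sets shows that, with high probability, $\hat p$ assigns mass at least $\delta/2\geq2\eps^2$ outside every $s$-set. This is the reason for choosing $c$ large and $2^a\gg s/\delta$.

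Take instead $T$ missing an address qubit $i$. The influence on $i$ is at least the phase-randomness contribution: pairing addresses differing in bit $i$, the random roots of unity give $\Inf_i\sbra{U}\geq$ a constant with high probability. This follows from concentration of the average of $1-\Real\rbra{\omega_r\overline{\omega_{r'}}}\cdot\mathbf1\cbra{j_r=j_{r'}}$, union-bounded over the $a$ address qubits.

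Together these show that a NO distribution yields, with probability at least $0.99$, a unitary $\eps$-far from $\calJ_k$.

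The main step, and the expected obstacle, is \emph{exact classicalization}. Fix a $q$-query forward algorithm in pure-state form, padding queries and purifying measurements. Expand each query in the address basis: the final state is a sum over address paths $\rbra{r_1,\ldots,r_q}$, and each path carries the factor $\prod_t\omega_{r_t}$ times operators depending only on $j_{r_1},\ldots,j_{r_q}$. The acceptance probability is a double sum over path pairs.

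Averaging over the phases with $Q>q$ kills every pair whose address histograms differ. Because exponents are nonnegative and at most $q$, this is exactly where forward-only access is used. The surviving terms depend on the labels at at most $q$ distinct addresses, which are i.i.d.\ samples from $p$. Grouping by histogram then writes $\Ex\sbra{\operatorname{Acc}}=\Ex_{Y\sim p^{\ot q}}\sbra{F\rbra{Y}}$ for some function $F$ independent of $p$.

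The delicate point is that $F$ must take values in $\sbra{0,1}$. I would define $F\rbra{y}$ as the acceptance probability of the algorithm run on a symmetrized oracle, namely one whose labels on the visited addresses are assigned from $y$ in a uniformly random order (addresses chosen uniformly and independently of $y$). I would then verify, via a normalization identity over histograms, that this matches the grouped sum. Since $F$ is then itself an acceptance probability, it lies in $\sbra{0,1}$.

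Finally, a $q$-query junta tester yields a $q$-sample support-size tester, with error increased by the $0.01$ failure probability of the NO-side certificate. The classical lower bound then gives $q=\Omega\rbra{s/\rbra{\eps^2\log s}}=\Omega\rbra{k/\rbra{\eps^2\log k}}$.
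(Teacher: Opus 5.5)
You have most of the reduction right, and it is close to the paper's: the phase-masked ensemble, the YES side, both NO cases, phase averaging over address histograms, and choosing $Q=q+2$ per tester are all sound. Your Chernoff-plus-union bound over the $\binom{2s}{s}\le 4^s$ data sets needs only $2^a\gtrsim s/\delta$, and it is a valid substitute for the paper's Cauchy--Schwarz/Markov bound on $\mass_s$ of the empirical distribution. One small correction: take $c=4$ rather than ``$c$ large'', since $\delta<1/32$ forces $c<8$.

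The gap is at the step you call delicate. You argue $F(y)\in[0,1]$ because $F(y)$ is the acceptance probability of the algorithm run on a ``symmetrized oracle''. No such oracle reproduces the quantity you need.

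\begin{itemize}
\item To place all sectors on one sample space, the labels on $R_h$ must be taken from coordinates of $y$ separately for each histogram $h$, and these assignments are mutually inconsistent. For $q\ge 2$, every pair $a\neq b$ equals $R_h$ for $h=e_a+(q-1)e_b$. That sector needs $\rbra{f(a),f(b)}\sim p\otimes p$, so $a$ and $b$ must receive distinct coordinates of $y$. But any map from $R>q$ addresses to $q$ coordinates has a collision.
\item Random maps do not help. A uniformly random $\sigma:[R]\to[q]$ gives $\Pr[f(a)=f(b)=j]=p_j/q+(1-1/q)p_j^2\neq p_j^2$, so the identity $\Ex_Y F(Y)=\Ex\operatorname{Acc}$ breaks.
\item A lazily assigning, stateful oracle fails as well. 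It must record which coordinate went to which address, i.e.\ the first-visit order. That record removes the interference between same-histogram paths with different visit orders, which is present in the phase-averaged state.
\end{itemize}
So $F(y)=\frac{1}{q!}\sum_{\pi\in\mathfrak S_q}\sum_h\bra{\phi_h(y_\pi)}\Lambda\ket{\phi_h(y_\pi)}$ is not the acceptance probability of any run of the algorithm. Its upper bound by $1$ is exactly the nontrivial part.

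What is missing is a pointwise normalization argument. Define the total weight $W(y)=\frac{1}{q!}\sum_{\pi}\sum_h\braket{\phi_h(y_\pi)}{\phi_h(y_\pi)}$. Then:
\begin{enumerate}
\item $0\le\Lambda\le I$ gives $0\le F\le W$.
\item Taking the trace of the phase-averaged final state gives $\Ex_{Y\sim p^{\otimes q}}[W(Y)]=1$ for every $p$.
\item Because of the symmetrization, $W$ is invariant under coordinate permutations (this is the real role of the random order). Hence $\Ex_{p^{\otimes q}}W$ is a homogeneous degree-$q$ polynomial in $p$ whose coefficients are the values of $W$ on label histograms.
\item This polynomial equals $\rbra{\sum_j p_j}^q$ on the simplex, hence identically. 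Comparing coefficients forces $W\equiv 1$, and therefore $F\le 1$.
\end{enumerate}
Without this step, or some other valid proof that $F\le 1$, ``accept with probability $F(Y)$'' is not a legitimate classical tester. The reduction to \cref{thm:classical-support-lb} then does not go through.
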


\begin{theorem}[Lower bound for testing junta channels]
\label{thm:channel-junta-lower-bound}
For every sufficiently large integer $k$, every integer
$n\geq2k$, and every $2^{-k/16}\leq\eps\leq1/16$,
any tester that, given channel access to an $n$-qubit channel $\Phi$,
distinguishes $\Phi\in\calC_k$ from
$\distC\rbra{\Phi,\calC_k}\geq\eps$ with success probability at least
$2/3$ on every promised input requires
$\Omega(k/(\eps^2\log k))$ queries.
The lower bound holds even for unitary input channels,
with distance measured against all $k$-junta channels.
\end{theorem}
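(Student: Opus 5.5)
The plan is to derive the channel lower bound from the unitary construction used for \cref{thm:unitary-junta-lower-bound}. We never analyze channel-access algorithms directly; instead we show that the hard ensemble of unitaries built there is also a valid hard ensemble for channel testing. Concretely, the ensemble consists of the phase-masked oracles $U$, with address qubits and one data qubit per label, parameterized by a classical distribution $p$ on $2s$ labels. In the YES case, $p$ is supported on at most $s$ labels. In the NO case, $p$ is $\delta$-far from every such distribution, with $\delta=\Theta(\eps^2)$. The YES side transfers immediately: each YES oracle $U$ lies in $\calJ_k$, so $\Phi_U\in\calC_k$.

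For the NO side, the unitary argument establishes with high constant probability over the oracle randomness that $\Inf_{\overline T}\sbra{U}\geq\delta'$ for every $\abs{T}\leq k$, where $\delta'$ is a constant multiple of $\eps^2$. We arrange the constants so that $\delta'\geq 2\eps^2$, say, or otherwise enlarge $\delta$ by a constant factor. Then \cref{lem:unitary-channel-influence-certificate} gives $\distC\rbra{\Phi_U,\calC_k}\geq\sqrt{\delta'-\delta'^2/2}\geq\eps$, using $\eps\leq1/16$ so that $\delta'$ is small. This holds against all $k$-junta channels, noisy ones included. We must check that the unitary proof's certificate is stated at the influence level, as condition (ii) of \cref{lem:influence-certificate} is. If it is, the same influence bound feeds \cref{lem:unitary-channel-influence-certificate} with only constant changes.

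For the query simulation, a channel query $\calO_{\Phi_U}=\Phi_U\ot I_{\mathsf W}$ is exactly a forward query to $U$, as in \cref{eq:forward-query-channel}. Any $q$-query channel tester is therefore a $q$-query forward-only unitary tester on the ensemble. Suppose it succeeds with probability $2/3$ on every promised channel input. Then, on the oracles where the NO certificate holds, it distinguishes the averaged YES and NO ensembles. Applying the classicalization theorem from \cref{subsec:classicalization}, its averaged acceptance probability equals that of a classical $q$-sample tester for $p$. The small failure probability of the NO certificate is absorbed into the success-gap constants. The classical support-size lower bound of \cref{app:classical-hard-priors}, $\Omega(s/(\delta\log s))$ with $s=\Theta(k)$, then gives $q=\Omega(k/(\eps^2\log k))$.

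The main obstacle is the constant bookkeeping in the NO case. The channel distance satisfies $\distC^2\approx\Inf$, whereas the unitary certificate needs $\Inf\geq2\eps^2-\eps^4$, so either threshold suffices up to constants. However, $\delta$ must still leave $O(\log(k/\eps))$ address qubits and $s=\Theta(k)$ data qubits in the range $2^{-k/16}\leq\eps\leq1/16$. I would first try reusing exactly the parameters of the unitary proof. Since $\delta'-\delta'^2/2\geq\delta'(1-\delta'/2)\geq 2\eps^2-\eps^4$-type bounds already exceed $\eps^2$, no parameter change should be necessary.
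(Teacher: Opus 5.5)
Your proposal is correct and follows the paper's proof essentially step for step. The paper reuses the parameters of \cref{eq:lower-bound-parameters} and certifies $\distC\rbra{\Phi_{U_{f,g}},\calC_k}\geq\eps$ by feeding $\Inf_{\overline{T}}\sbra{U_{f,g}}\geq2\eps^2-\eps^4$ into \cref{lem:unitary-channel-influence-certificate}, checking $\rbra{2\eps^2-\eps^4}-\tfrac12\rbra{2\eps^2-\eps^4}^2\geq\eps^2$ for $\eps\leq1/16$; it then simulates each channel query by one forward query and applies \cref{thm:classicalization} and \cref{thm:classical-support-lb}. The only detail you omit is that \cref{thm:classicalization} requires $q<Q$; the paper handles $q\geq Q$ separately, since then $q>s/\delta=\Omega\rbra{k/\eps^2}$ already exceeds the bound.
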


We prove both quantum lower bounds by reducing classical support-size
testing to quantum junta testing.
The reduction uses the sample lower bound in
\cref{subsec:classical-support-testing}, the phase-masked ensemble in
\cref{subsec:hard-construction}, and the exact classicalization of
forward queries in \cref{subsec:classicalization}.
We choose parameters for the unitary bound in
\cref{subsec:lower-bound-proofs} and derive the channel bound in
\cref{subsec:hard-channel-reduction}.

\subsection{Support-size testing of distributions} \label{subsec:classical-support-testing}

The support-size problem below is the singleton-set specialization of
\cref{prob:set-testing}: each sample is $X=\cbra{J}$ for $J\sim p$.
We use the classical sampling model of \cref{subsec:unitary-access}.

\begin{problem}[Support-size testing]
\label{prob:classical-support-testing}
  Fix integers $M\geq2$ and $1\leq s\leq M$,
  and let $0<\delta<1$.
  Given independent samples from an unknown distribution
  $p \in \Delta\rbra{\sbra{M}}$, distinguish between the
  following two cases:
  \begin{itemize}
    \item (YES) $\abs*{\supp\rbra*{p}} \le s$.
    \item (NO) $\mass_s\rbra*{p} \le 1-\delta$.
  \end{itemize}
\end{problem}

The lower bound below applies to every fixed success probability
greater than $1/2$; the quantum reductions later give classical
testers with success probability at least $3/5$.

By \cref{eq:support-tv}, the NO condition means that $p$ has total-variation distance at least $\delta$ from every distribution supported on at most $s$ labels.

We use the following lower bound.

\begin{theorem}[Classical support-size lower bound,
adapted from \cite{VV11,WY19}]
\label{thm:classical-support-lb}
For every sufficiently large integer $s$ and every
$\delta\in\interval[open]{0}{1/32}$, any randomized tester for
\cref{prob:classical-support-testing} on $\sbra{2s}$
that succeeds on every promised distribution with
probability at least a fixed constant greater than $1/2$
requires $\Omega(s/(\delta\log s))$ samples.
The implicit constant may depend on the chosen
success probability.
\end{theorem}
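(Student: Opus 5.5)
We prove \cref{thm:classical-support-lb} by Le Cam's two-point method applied to a pair of randomized priors, following the moment-matching constructions of \cite{VV11,WY19}. We build two distributions $\mu_{\mathrm{Y}}$ and $\mu_{\mathrm{N}}$ over (approximate) probability vectors on $\sbra{2s}$. Draws from $\mu_{\mathrm{Y}}$ have at most $s$ nonzero entries. Draws from $\mu_{\mathrm{N}}$ have $\mass_s\le 1-\delta$, with high probability. Under Poissonized sampling with $\operatorname{Poi}(t)$ samples, $t=c\,s/(\delta\log s)$, the two induced laws of the count vector must have total variation $o(1)$. A tester with success probability $1/2+\gamma$ would then distinguish the mixtures with advantage $2\gamma-o(1)$, which is a contradiction. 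Standard de-Poissonization (take $t'=2t$ samples; $\operatorname{Poi}(t)\le 2t$ with high probability) and normalization of the approximate vectors (conditioning on total mass within $1\pm o(1)$ and rescaling, as in \cite{WY19}) then transfer the bound to exact distributions and fixed sample size.

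\emph{Construction.} Let $L=c_0\log s$ and $\lambda=\Theta(\delta L^2/s)$ be the basic probability scale. Take two random variables $V,V'$ on $\{0\}\cup[\lambda/L^2\cdot\text{const},\lambda]$ that satisfy the following:
\begin{itemize}
  \item $\Ex V^j=\Ex V'^j$ for $j=1,\dots,L$;
  \item $\Ex V=\Ex V'=1/(2s)$;
  \item $\Pr[V'>0]$ is smaller than $\Pr[V>0]$ by a $\Theta(\delta)$ fraction, arranged so that the expected number of nonzero entries exceeds $s$ by $\Theta(\delta s)$ for one prior and falls short for the other.
\end{itemize}
These come from the dual of best polynomial approximation of $1/x$ (equivalently of $\mathbf 1\{x>0\}$) on $[1/L^2,1]$, as in \cite{WY19}, rescaled by $\lambda$. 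The deviation between the two is chosen of order $\delta$ rather than constant: mix a constant-separation pair with a common component carrying mass $1-\Theta(\delta)$. One variable should be a point mass at the scale of the bulk. The common component consists of $s(1-\Theta(\delta))$ labels of mass about $1/s$ shared by both priors. The moment-matched component lives on $\Theta(s)$ remaining labels with total mass $\Theta(\delta)$. Under the YES prior that component concentrates on few enough labels that the total support is at most $s$. Under the NO prior it spreads over $\Theta(s)$ extra labels of mass $\Theta(\delta/s)$, each too small to be captured. Hence any $s$ labels miss mass at least $\delta$, with concentration via Chernoff bounds. Entries are i.i.d.\ from these laws, and the YES prior's support bound is enforced by conditioning on a high-probability event.

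\emph{Indistinguishability.} Under Poissonization the counts $N_i\sim\operatorname{Poi}(tp_i)$ are independent, so the count vectors are products of mixtures $\Ex\operatorname{Poi}(tV)$ versus $\Ex\operatorname{Poi}(tV')$. The standard bound from \cite{WY19} gives
\[
\dtv{\Ex\operatorname{Poi}(tV)}{\Ex\operatorname{Poi}(tV')}\le \rbra*{\frac{e t\lambda}{2L}}^{L}.
\]
With $t\lambda=\Theta(c\,L)$ and $c$ small, this is $s^{-\Omega(c_0)}$ per coordinate. Summing over $\Theta(s)$ coordinates gives $o(1)$ once $c_0$ is large enough.

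The main obstacle is the scaling. The matching moment vector must fit within the interval $[\lambda/L^2,\lambda]$. It must also produce a support-count gap of $\Theta(\delta s)$ at mass scale $\delta/s$ while the samples seen per label remain $t\lambda=O(\log s)$. This balance forces $t=\Theta(s/(\delta\log s))$, and the constants must be tuned accordingly. The restriction $\delta<1/32$ is used here to keep the common bulk dominant. The alphabet $\sbra{2s}$ suffices because the moment-matched component uses at most $s$ extra labels.
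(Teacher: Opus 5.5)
Your high-level plan is right: a constant-separation moment-matched pair on a component of total mass $\Theta(\delta)$, mixed with a common heavy component, then Poissonization, conditioning/normalization and de-Poissonization. The instance you actually specify, however, fails for small $\delta$ because of how you split the labels.

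You give the common component $s(1-\Theta(\delta))$ labels of mass about $1/s$. In the YES prior, the moment-matched component must then fit on $\Theta(\delta s)$ labels while carrying mass $\Theta(\delta)$, so its per-label mass is $\Theta(1/s)$. In the NO prior it sits at per-label mass $\Theta(\delta/s)$. These scales differ by a factor $1/\delta$, with two consequences for $\delta\ll 1/\log^2 s$:
\begin{itemize}
\item The YES values do not even lie in your window $[\Theta(\delta/s),\lambda]$, $\lambda=\Theta(\delta L^2/s)$.
\item The priors are easy to separate with $t=cs/(\delta\log s)$ samples. Under YES at most $s$ labels exist, so at most $s$ distinct labels are ever observed. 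Under NO, each bulk label has expected count $c/(\delta\log s)\geq c\log s$, so all but $o(s/\log s)$ of the $s(1-\Theta(\delta))$ bulk labels appear. About $\Theta(cs/\log s)$ of the $\Theta(s)$ light labels also appear, so more than $s$ distinct labels are seen with high probability.
\end{itemize}

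The bullet-point version has related problems. First, $\Ex V=1/(2s)$ together with $V\leq\lambda$ forces $\Pr[V>0]\geq 1/(2s\lambda)=\Theta(1/(\delta L^2))$. This exceeds $1$, and is incompatible with YES support at most $s$ on $\sbra{2s}$, once $\delta\ll 1/\log^2 s$. Second, an excess of only $\Theta(\delta s)$ labels of mass $\Theta(\delta/s)$ leaves out mass $\Theta(\delta^2)$, so $\mass_s\leq 1-\delta$ would fail.

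The missing idea is to keep the hard component at the \emph{same} scale in both priors and make the support separation additive. Take a constant-gap pair on $\sbra{s}$ with YES support at most $s_0$ and NO support at least $s_0+s/4$, with all nonzero masses between $1/s$ and $O(\log^2 s/s)$. Scale it by $\alpha=4\delta$, and give the common component only the remaining $s-s_0$ labels.

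This is the paper's construction: $\tilde p=\alpha p+(1-\alpha)u$ with $u$ uniform on $\{s+1,\dots,2s-s_0\}$.
\begin{itemize}
\item YES instances have support at most $s_0+(s-s_0)=s$.
\item NO instances have at least $s/4$ excess labels, each of mass at least $\alpha/s$, so $\mass_s(\tilde p)\leq 1-\alpha/4=1-\delta$.
\end{itemize}

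The paper also does not redo the Poisson moment bound at scale $\delta$. It proves that the unmixed pair is indistinguishable with $t_\star=\lfloor c_0 s/\log s\rfloor$ samples. It then gets the $1/\delta$ factor from a thinning simulation: $t\leq t_\star/(20\alpha)$ samples of $\tilde p$ are generated from $t_\star$ samples of $p$ by marking each position independently with probability $\alpha$ and filling unmarked positions from $u$.

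With the label allocation corrected, your direct route would also go through, since $t\lambda=\Theta(cc_0L)$ when $\lambda=\Theta(\delta L^2/s)$. The reduction is simply cleaner and automatically uniform in $\delta$.
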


A proof of this formulation is given in
\cref{app:classical-hard-priors}.

\subsection{Construction of the hard ensemble}\label{subsec:hard-construction}

We construct a randomized instance map from support-size testing
to testing junta unitaries.
The construction uses an addressing unitary: each address selects
a data qubit on which the oracle applies Pauli-$Z$.
A phase mask contributes an additional phase depending only
on the address.

Fix integers $M\geq2$, $r\geq1$, and $Q\geq2$.
The construction uses an $r$-qubit address register and an
$M$-qubit data register, so $n=r+M$.
Let $\C_Q=\cbra{e^{2\pi\mathrm{i}\ell/Q}\colon
\ell=0,\ldots,Q-1}$ be the set of $Q$th roots of unity.
Set $R=2^r$ and identify binary addresses
$x\in\cbra{0,1}^r$ with labels in $\sbra{R}$
in lexicographic order.
For $j\in\sbra{M}$, define
\begin{equation*}
  Z_j
  = I^{\ot\rbra{j-1}}\ot Z\ot I^{\ot\rbra{M-j}},
\end{equation*}
which acts as Pauli-$Z$ on the $j$th data qubit and
as the identity on the remaining data qubits.

\begin{definition}[Phase-masked unitary] \label{def:unitary-family}
  For an address function $f \colon \cbra{0,1}^r \to \sbra{M}$ and a phase mask $g = \rbra{g_x}_{x \in \cbra{0,1}^r} \in \C_Q^R$, define
  \begin{equation*}
    U_{f,g} = \sum_{x \in \cbra*{0,1}^r} g_x \ketbra{x}{x} \ot Z_{f\rbra*{x}}.
\end{equation*}
Equivalently, its action on computational-basis states is
\begin{equation} \label{eq:unitary-def}
  U_{f,g}\ket{x,z} = g_x\rbra*{-1}^{z_{f\rbra*{x}}}\ket{x,z},
\end{equation}
for any $x \in \cbra{0,1}^r$ and $z \in \cbra{0,1}^M$.
\end{definition}

By \cref{eq:unitary-def}, $U_{f,g}$ is diagonal with
all diagonal entries of modulus one, and hence is unitary.

For an unknown distribution $p\in\Delta\rbra{\sbra{M}}$,
define $\U\rbra{p}$ to be the ensemble of unitaries
$U_{f,g}$ obtained by sampling
$f\rbra*{x}\overset{\mathrm{i.i.d.}}{\sim}p$ and
$g_x\overset{\mathrm{i.i.d.}}{\sim}
\operatorname{Unif}\rbra{\C_Q}$
for every $x\in\cbra*{0,1}^r$, with the two families
independent of each other.

Using the address ordering above, we identify $f$ with
$\rbra{f\rbra{1},\ldots,f\rbra{R}}\in\sbra{M}^R$.
We write $f\sim p^{\ot R}$ to mean that these values
are sampled independently from $p$.

We show in \cref{lem:completeness,lem:hard-no-reduction}
that YES inputs map to junta unitaries for every realization,
while NO inputs map to far unitaries with the probability
stated in the latter lemma.
The conversion from quantum queries to classical samples is
established in \cref{subsec:classicalization}.

We first show that if $p$ is supported on at most $s$ labels,
then every unitary in the ensemble $\U\rbra{p}$
is an $\rbra{r+s}$-junta unitary.

\begin{lemma}[Reduction of YES instances]
\label{lem:completeness}
Fix integers $M\geq2$, $1\leq s\leq M$, $r\geq1$,
and $Q\geq2$, and set $n=r+M$.
If $p\in\Delta\rbra{\sbra{M}}$ satisfies
$\abs{\supp\rbra{p}}\leq s$,
then every unitary in the corresponding ensemble
$\U\rbra{p}$ is an $\rbra{r+s}$-junta unitary on $n$ qubits.
\end{lemma}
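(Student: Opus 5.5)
The plan is a direct verification: exhibit an explicit support set $T$ with $\abs{T}\le r+s$ and show that every $U_{f,g}$ in the ensemble factors as $V_T\ot I_{\overline T}$.

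Let $P=\supp\rbra{p}$, so $\abs{P}\le s$. Take $T$ to be the union of all $r$ address qubits with the data qubits indexed by $P$. Then $\abs{T}\le r+s$, and $T\subseteq\sbra{n}$ because $s\le M$.

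Every realization of the ensemble samples each $f\rbra{x}$ from $p$. By the definition of $\U\rbra{p}$, we may therefore take $f\rbra{x}\in P$ for all addresses $x$. If one insists on measure-zero issues, note that every $U_{f,g}$ in the support of the ensemble has this form, and the claim concerns those.

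For such $f$, each term $g_x\ketbra{x}{x}\ot Z_{f\rbra{x}}$ acts as the identity on every data qubit $j\notin P$. Indeed, $Z_{f\rbra{x}}$ is a tensor product that equals $I$ on all data positions except $f\rbra{x}\in P$. Hence, reordering tensor factors to the original qubit order,
\begin{equation*}
  U_{f,g}=\Bigl(\sum_{x}g_x\ketbra{x}{x}\ot Z^{\rbra{P}}_{f\rbra{x}}\Bigr)\ot I_{\sbra{M}\setminus P},
\end{equation*}
where $Z^{\rbra{P}}_j$ denotes Pauli-$Z$ on data qubit $j$ within the register indexed by $P$.

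The bracketed operator $V_T$ is diagonal in the computational basis of $\calH_T$ with unimodular entries $g_x\rbra{-1}^{z_{f\rbra{x}}}$, so it is unitary. It follows that $U_{f,g}\in\calJ_T\subseteq\calJ_{r+s}$ by \cref{def:quantum-juntas}.

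There is essentially no obstacle here. The only care points are two. First, check that the identity factor really splits off under the paper's tensor-ordering convention for $V_T\ot I_{\overline T}$. Second, handle the degenerate case $\abs{T}<r+s$, which is harmless because $\calJ_k$ is a union over all $\abs{T}\le k$.
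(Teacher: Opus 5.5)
Your proposal is correct and is essentially the paper's argument: the paper takes $T=\sbra{r}\cup\cbra{r+j\colon j\in\im\rbra{f}}$, notes that $\im\rbra{f}\subseteq\supp\rbra{p}$ by the sampling rule, and concludes that $U_{f,g}=V_T\ot I_{\overline{T}}$ with $\abs{T}\leq r+s$. Your choice of the fixed set $\sbra{r}\cup\cbra{r+j\colon j\in\supp\rbra{p}}$ contains the paper's set, so it is only a cosmetic variant that gives one support set for the whole ensemble.
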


\begin{proof}
Fix any realization $U_{f,g}$ from $\U\rbra{p}$.
By the sampling rule, $\im\rbra{f}\subseteq\supp\rbra{p}$.
Set $T=\sbra{r}\cup\cbra{r+j\colon j\in\im\rbra{f}}$.
Since $T$ contains all address qubits and every data qubit
selected by $f$, we have
$U_{f,g}=V_T\ot I_{\overline{T}}$
for some unitary $V_T$.
Since $\abs{T}=r+\abs{\im\rbra{f}}\leq r+s$,
$U_{f,g}$ is an $\rbra{r+s}$-junta unitary.
\end{proof}

We next show that if $\mass_s\rbra{p}\leq1-\delta$,
then a draw from $\U\rbra{p}$ is far from every
$\rbra{r+s}$-junta unitary with high probability,
provided that $R$ is sufficiently large.

\begin{lemma}[Reduction of NO instances]
\label{lem:hard-no-reduction}
Fix integers $M\geq2$, $1\leq s\leq M$, $r\geq1$,
and $Q\geq2$, and set $n=r+M$ and $R=2^r$.
Let $0<\eps\leq1/16$ and set
$\delta=\rbra{4/3}\rbra{2\eps^2-\eps^4}$.
If $p\in\Delta\rbra{\sbra{M}}$ satisfies
$\mass_s\rbra{p}\leq1-\delta$,
then, with probability at least
$1-16s/\rbra{R\delta^2}-r e^{-R/16}$,
a draw $U_{f,g}\sim\U\rbra{p}$ satisfies
\begin{equation*}
  \Inf_{\overline{T}}\sbra{U_{f,g}}
  \geq\frac{3\delta}{4}=2\eps^2-\eps^4
\end{equation*}
for every $T\subseteq\sbra{n}$ with $\abs{T}\leq r+s$.
On this event,
$\distU\rbra{U_{f,g},\calJ_{r+s}}\geq\eps$.
\end{lemma}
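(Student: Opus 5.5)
First I compute the Pauli spectrum of $U_{f,g}$ exactly. Expanding $\ketbra{x}{x}=2^{-r}\sum_{u\in\{0,1\}^r}(-1)^{u\cdot x}Z^u$ gives
\begin{equation*}
  U_{f,g}=\sum_{u\in\{0,1\}^r}\sum_{j\in\sbra{M}}\Big(2^{-r}\sum_{x:\,f(x)=j}(-1)^{u\cdot x}g_x\Big)\,Z^u\ot Z_j .
\end{equation*}
So every Pauli label with nonzero coefficient has the form $(u,e_j)$: $Z$-type on the address qubits and exactly one $Z$ on a data qubit. Write $c_{u,j}$ for the bracketed coefficient. Fix $T$ with $\abs{T}\leq r+s$. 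I split into two cases according to whether $T$ contains all address qubits.

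\emph{Case 1: $T\supseteq\sbra{r}$.} Then $T$ contains at most $s$ data qubits, indexed by some $D$ with $\abs{D}\le s$. By Parseval over $u$,
\begin{equation*}
  \Inf_{\overline{T}}[U_{f,g}]=\sum_{j\notin D}\sum_u\abs{c_{u,j}}^2=\frac{\abs{\{x: f(x)\notin D\}}}{R}=1-\hat p_f(D),
\end{equation*}
where $\hat p_f$ is the empirical distribution of $f$. It therefore suffices to show $\mass_s(\hat p_f)\le 1-3\delta/4$ with probability at least $1-16s/(R\delta^2)$. This is the step I expect to be the main obstacle, because the maximum ranges over exponentially many sets $D$. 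My plan is to avoid a union bound by a variance argument. Order labels so that $p(1)\ge p(2)\ge\cdots$. Then $\mass_s(\hat p_f)\le \hat p_f(\sbra{s})+\sum_{j>s}\min(\hat p_f(j),\,\cdot)$, which seems awkward, so I would instead use
\begin{equation*}
  \mass_s(\hat p)-\mass_s(p)\le \max_{\abs D\le s}\big(\hat p(D)-p(D)\big)\le\sum_{j}(\hat p(j)-p(j))_+ .
\end{equation*}
This is still too lossy when there are many small labels. The refined route is to bound $\hat p(D)-p(D)\le \sum_{j\in D}\abs{\hat p(j)-p(j)}\le \sqrt{s\sum_j(\hat p(j)-p(j))^2}$ by Cauchy--Schwarz. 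We have $\Ex\sum_j(\hat p(j)-p(j))^2=\sum_j p(j)(1-p(j))/R\le 1/R$. Markov's inequality then gives $\Pr[s\sum_j(\hat p(j)-p(j))^2\ge\delta^2/16]\le16s/(R\delta^2)$, exactly the stated failure term. On the complement, $\hat p(D)\le p(D)+\delta/4\le1-3\delta/4$ uniformly in $D$.

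\emph{Case 2: $T$ omits some address qubit $i\in\sbra{r}$.} Then $\Inf_{\overline T}\ge\Inf_i[U_{f,g}]=\sum_{j}\sum_{u:\,u_i=1}\abs{c_{u,j}}^2$. By Parseval on the address register restricted to fixed $j$, this equals $1$ minus the mass where $u_i=0$. Averaging over $x_i$, the $u_i=0$ part is
\begin{equation*}
  \frac1R\sum_{x':\,\text{pairs}}\frac{\abs{g_{x'0}\mathbf 1[f(x'0)=j]+g_{x'1}\mathbf 1[f(x'1)=j]}^2}{2},
\end{equation*}
summed over $j$. Hence $\Inf_i=\frac{1}{R}\sum_{\text{pairs}}\big(1-\Real(g_{x'0}\overline{g_{x'1}})\mathbf 1[f(x'0)=f(x'1)]\big)$. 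This is an average of $R/2$ independent variables in $[0,2]$ with mean $1$, since the phase ratio is uniform on $\C_Q$ and $Q\ge2$ makes the mean real part zero. Hoeffding gives $\Inf_i\ge1/2\ge3\delta/4$ except with probability $e^{-R/16}$; the constants should come out right by treating each pair term as lying in $[0,2]$. A union bound over the $r$ address qubits gives the $re^{-R/16}$ term. This step needs $\delta\le 2/3$, which holds since $\eps\le1/16$.

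Combining the two cases yields the influence bound for all $T$ simultaneously. \cref{lem:influence-certificate}(ii) with $k=r+s$ then gives $\distU(U_{f,g},\calJ_{r+s})\ge\eps$.
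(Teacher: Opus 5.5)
Your proposal follows the paper's proof essentially step for step. It uses the same Pauli expansion with labels $(u,e_j)$. Case~1 goes through $\Inf_{\overline T}=1-p_f(D)$, then Cauchy--Schwarz, the $1/R$ variance bound and Markov, which is exactly the paper's empirical-mass lemma. Case~2 pairs addresses along coordinate $i$, applies Hoeffding, and takes a union bound over the $r$ address qubits. The proof then concludes with \cref{lem:influence-certificate}(ii).

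There is one factor-of-two slip in Case~2. Your formula $\Inf_i=\frac1R\sum_{\text{pairs}}\bigl(1-\Real(g_{x'0}\overline{g_{x'1}})\mathbf 1[f(x'0)=f(x'1)]\bigr)$ is correct. It has only $R/2$ summands, however, so $\Inf_i$ is half the average of the $[0,2]$-valued pair terms and has mean $1/2$, not $1$. Hoeffding, with range $2$ and deviation $1/2$ for that average, therefore gives $\Inf_i\ge 1/4$ with failure probability $e^{-R/16}$. Your claim $\Inf_i\ge1/2$ with that probability is false as stated.

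The conclusion still holds. You now need $3\delta/4\le1/4$, that is $\delta\le 1/3$ rather than $\delta\le2/3$. This is satisfied, since $\eps\le1/16$ gives $\delta\le1/96$.
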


We prove the lemma by bounding
$\Inf_{\overline{T}}\sbra{U_{f,g}}$
for all $T\subseteq\sbra{n}$ with $\abs{T}\leq r+s$.
We distinguish two cases according to whether $T$
contains all address qubits.
The first case uses the support-size gap of $p$,
while the second uses the random phases.
We establish the two influence bounds and combine them
to obtain the required distance bound.

\paragraph{Case 1: $T$ contains all address qubits.}

For a fixed address function $f$, define its empirical
distribution $p_f$ by
\begin{equation*}
  p_f\rbra{j}
  =\frac1R\abs*{\cbra*{
    x\in\cbra{0,1}^r\colon f\rbra{x}=j}},
\end{equation*}
for every $j\in\sbra{M}$.
Its support satisfies
$\supp\rbra{p_f}=\im\rbra{f}$.

The following lemma controls how much $\mass_s\rbra{p_f}$
can exceed $\mass_s\rbra{p}$.

\begin{lemma}[Empirical mass bound]
\label{lem:empirical-mass}
Fix integers $M\geq2$, $1\leq s\leq M$, and $r\geq1$,
and set $R=2^r$.
Let $p\in\Delta\rbra{\sbra{M}}$.
Sample $f\rbra{x}\overset{\mathrm{i.i.d.}}{\sim}p$
for every $x\in\cbra{0,1}^r$.
Then, for every $\tau>0$,
\begin{equation*}
  \Pr_f\sbra*{\mass_s\rbra{p_f}>\mass_s\rbra{p}+\tau}
  \leq\frac{s}{R\tau^2}.
\end{equation*}
\end{lemma}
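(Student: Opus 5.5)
Our plan is to bound the excess $\mass_s\rbra{p_f}-\mass_s\rbra{p}$ by a sum of one-sided empirical deviations over the top $s$ labels of $p_f$, and then control that sum in second moment. Fix $f$ and let $S_f\subseteq\sbra{M}$ with $\abs{S_f}\leq s$ attain $\mass_s\rbra{p_f}$. Since $\mass_s\rbra{p}\geq p\rbra{S_f}$, we have
\begin{equation*}
  \mass_s\rbra{p_f}-\mass_s\rbra{p}
  \leq p_f\rbra{S_f}-p\rbra{S_f}
  =\sum_{j\in S_f}\rbra*{p_f\rbra{j}-p\rbra{j}}
  \leq\sum_{j\in S_f}\rbra*{p_f\rbra{j}-p\rbra{j}}_+ .
\end{equation*}
The set $S_f$ depends on $f$, so we cannot fix it in advance. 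We remove this dependence with Cauchy--Schwarz over at most $s$ terms:
\begin{equation*}
  \sum_{j\in S_f}\rbra*{p_f\rbra{j}-p\rbra{j}}_+
  \leq\sqrt{s}\,\rbra*{\sum_{j=1}^{M}\rbra*{p_f\rbra{j}-p\rbra{j}}^2}^{1/2}.
\end{equation*}
The right-hand side is now independent of the choice of $S_f$.

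Next we compute the expectation of $D:=\sum_j\rbra{p_f\rbra{j}-p\rbra{j}}^2$. Each $Rp_f\rbra{j}$ is $\operatorname{Bin}\rbra{R,p\rbra{j}}$, so
\begin{equation*}
  \Ex\sbra{D}=\sum_j\frac{p\rbra{j}\rbra{1-p\rbra{j}}}{R}\leq\frac1R.
\end{equation*}
The event $\mass_s\rbra{p_f}>\mass_s\rbra{p}+\tau$ forces $\sqrt{sD}>\tau$, that is, $D>\tau^2/s$. Markov's inequality then gives
\begin{equation*}
  \Pr\sbra{D>\tau^2/s}\leq\frac{s\,\Ex\sbra{D}}{\tau^2}\leq\frac{s}{R\tau^2},
\end{equation*}
which is the claimed bound.

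There is no serious obstacle. The only point that needs care is that the maximizing set depends on the sample, and the Cauchy--Schwarz step reduces this to a deviation that does not depend on the set. A union bound over all sets of size $s$ would give a worse dependence on $M$, so we avoid it.
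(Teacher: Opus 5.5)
Your proof is correct and follows essentially the same route as the paper: Cauchy--Schwarz over at most $s$ labels to reduce to $D=\sum_j(p_f(j)-p(j))^2$, the binomial computation $\Ex[D]\leq 1/R$, and Markov's inequality. The only cosmetic difference is that you bound the one-sided excess through the maximizer $S_f$ of $p_f$, whereas the paper bounds the two-sided deviation via $\abs{\max_S a_S-\max_S b_S}\leq\max_S\abs{a_S-b_S}$ and applies Markov to $\abs{\mass_s(p_f)-\mass_s(p)}^2$, which yields the same constant.
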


\begin{proof}
For every $S\subseteq\sbra{M}$ with $\abs{S}\leq s$,
Cauchy--Schwarz gives
\begin{equation*}
  \abs{p_f\rbra{S}-p\rbra{S}}^2
  \leq s\sum_{j=1}^M\rbra{p_f\rbra{j}-p\rbra{j}}^2.
\end{equation*}
The inequality $\abs{\max_S a_S-\max_S b_S}
\leq\max_S\abs{a_S-b_S}$, applied to sets of size at most $s$,
gives the same bound with the left-hand side replaced by
$\abs{\mass_s\rbra{p_f}-\mass_s\rbra{p}}^2$.
For each $j\in\sbra{M}$, the count $R p_f\rbra{j}$
has distribution $\operatorname{Binomial}\rbra{R,p\rbra{j}}$.
Thus
\begin{equation*}
  \Ex_f\sbra*{\sum_{j=1}^M\rbra{p_f\rbra{j}-p\rbra{j}}^2}
  =\frac1R\sum_{j=1}^M p\rbra{j}\rbra{1-p\rbra{j}}
  \leq\frac1R.
\end{equation*}
Consequently,
$\Ex_f\sbra*{\abs{\mass_s\rbra{p_f}-\mass_s\rbra{p}}^2}
\leq s/R$.
Applying Markov's inequality at threshold $\tau^2$
proves the claim.
\end{proof}

We first compute the Pauli coefficients and data-qubit influences
of $U_{f,g}$ for arbitrary $f$ and $g$.
Since $U_{f,g}$ is diagonal, every Pauli string containing
$X$ or $Y$ has zero coefficient.
For $j\in\sbra{M}$, let $e_j\in\cbra{0,1}^M$
be the $j$th standard basis vector.
For $x\in\cbra{0,1}^r$ and $y\in\cbra{0,1}^M$,
the Pauli label $\rbra{3x,3y}$ has entries $0$ and $3$,
corresponding to $I$ and $Z$, respectively.
Expanding the trace in the computational basis gives
\begin{equation*}
  \wh{U}_{f,g}\rbra{3x,3y}
  =
  \frac{1}{R2^M}
  \sum_{a\in\cbra{0,1}^r}
  \sum_{z\in\cbra{0,1}^M}
  g_a\rbra*{-1}^{x\cdot a+
    \rbra{y+e_{f\rbra{a}}}\cdot z},
\end{equation*}
where binary vector addition is taken modulo two.
For every $v\in\cbra{0,1}^M$,
\begin{equation*}
  \frac1{2^M}\sum_{z\in\cbra{0,1}^M}
  \rbra*{-1}^{v\cdot z}
  =
  \prod_{\ell=1}^M\frac{1+\rbra*{-1}^{v_\ell}}2
  =
  \begin{cases}
    1, & v=0,\\
    0, & v\neq0.
  \end{cases}
\end{equation*}
Applying this identity with $v=y+e_{f\rbra{a}}$,
we find that the sum over $z$ is zero unless
$y=e_{f\rbra{a}}$.
Thus the only possible nonzero Pauli coefficients
have labels $\rbra{3x,3e_j}$, and
\begin{equation}
\label{eq:hard-ensemble-pauli-coefficients}
  \wh{U}_{f,g}\rbra{3x,3e_j}
  =
  \frac1R
  \sum_{\substack{a\in\cbra{0,1}^r\\f\rbra{a}=j}}
  g_a\rbra*{-1}^{x\cdot a}.
\end{equation}

Each Pauli string with a nonzero coefficient acts
nontrivially on exactly one data qubit.
Parseval's identity on the address register and
$\abs{g_a}=1$ therefore give
\begin{equation}
\label{eq:hard-ensemble-data-influence}
  \Inf_{r+j}\sbra{U_{f,g}}
  =
  \sum_{x\in\cbra{0,1}^r}
  \abs*{\wh{U}_{f,g}\rbra{3x,3e_j}}^2
  =
  \frac1R
  \sum_{\substack{a\in\cbra{0,1}^r\\f\rbra{a}=j}}
  \abs{g_a}^2
  =p_f\rbra{j}.
\end{equation}

For any $S\subseteq\sbra{M}$, set
$T=\sbra{r}\cup\cbra{r+j\colon j\in S}$.
Since $T$ contains all address qubits, a Pauli string
with a nonzero coefficient acts nontrivially outside $T$
exactly when its data label is $3e_j$ with $j\notin S$.
Thus \cref{eq:hard-ensemble-data-influence} gives
\begin{equation}
\label{eq:hard-ensemble-outside-influence}
  \Inf_{\overline{T}}\sbra{U_{f,g}}
  =
  \sum_{j\in\sbra{M}\setminus S}p_f\rbra{j}
  =1-p_f\rbra{S}.
\end{equation}

\begin{lemma}[Data-qubit influence]
\label{lem:NO_case_1}
Fix integers $M\geq2$, $r\geq1$, and $Q\geq2$,
and set $n=r+M$ and $R=2^r$.
Let $s\in\sbra{M}$ and $0<\delta\leq1/2$.
Let $p\in\Delta\rbra{\sbra{M}}$ satisfy
$\mass_s\rbra{p}\leq1-\delta$.
Sample $f\rbra{x}\overset{\mathrm{i.i.d.}}{\sim}p$
for every $x\in\cbra{0,1}^r$.
Then, with probability at least
$1-16s/\rbra{R\delta^2}$ over $f$,
\begin{equation*}
  \Inf_{\overline{T}}\sbra{U_{f,g}}
  \geq\frac{3\delta}{4}
\end{equation*}
for every phase mask $g\in\C_Q^R$ and every
$T\subseteq\sbra{n}$ satisfying
$\sbra{r}\subseteq T$ and $\abs{T}\leq r+s$.
\end{lemma}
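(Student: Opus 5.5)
The plan is to combine the exact influence formula \cref{eq:hard-ensemble-outside-influence} with \cref{lem:empirical-mass} at $\tau=\delta/4$. That formula holds for every phase mask $g$, so the randomness in $g$ plays no role here. The only randomness to control is in $f$, which makes the statement uniform over $g$ automatic.

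First, I reduce an arbitrary $T$ to one of the special form. Fix $T\subseteq\sbra{n}$ with $\sbra{r}\subseteq T$ and $\abs{T}\leq r+s$, and set $S=\cbra{j\in\sbra{M}\colon r+j\in T}$. Then $T=\sbra{r}\cup\cbra{r+j\colon j\in S}$ and $\abs{S}=\abs{T}-r\leq s$. By \cref{eq:hard-ensemble-outside-influence},
\begin{equation*}
  \Inf_{\overline{T}}\sbra{U_{f,g}}=1-p_f\rbra{S}\geq1-\mass_s\rbra{p_f}.
\end{equation*}
This holds for every $g$. It therefore suffices to show that $\mass_s\rbra{p_f}\leq1-3\delta/4$ holds with probability at least $1-16s/\rbra{R\delta^2}$ over $f$. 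This is a single event that does not depend on $T$ or $g$.

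Next, I apply \cref{lem:empirical-mass} with $\tau=\delta/4$. This gives
\begin{equation*}
  \Pr_f\sbra*{\mass_s\rbra{p_f}>\mass_s\rbra{p}+\delta/4}\leq\frac{s}{R\rbra{\delta/4}^2}=\frac{16s}{R\delta^2}.
\end{equation*}
On the complementary event, the hypothesis $\mass_s\rbra{p}\leq1-\delta$ yields $\mass_s\rbra{p_f}\leq1-\delta+\delta/4=1-3\delta/4$. Hence $\Inf_{\overline{T}}\sbra{U_{f,g}}\geq3\delta/4$ simultaneously for all admissible $T$ and all $g$, as claimed.

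There is no real obstacle. The only point needing care is that the good event depends on $f$ alone and not on $T$. Quantifying over $T$ through $\mass_s\rbra{p_f}$, rather than taking a union bound over the sets $T$, avoids any loss in the probability. The assumption $\delta\leq1/2$ is not needed for this argument; the bound is simply vacuous when $16s/\rbra{R\delta^2}\geq1$.
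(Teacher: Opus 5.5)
Your proposal is correct and matches the paper's proof: apply \cref{lem:empirical-mass} with $\tau=\delta/4$ to obtain a single good event depending only on $f$, then use \cref{eq:hard-ensemble-outside-influence} with $S=\cbra{j\in\sbra{M}\colon r+j\in T}$ to bound $\Inf_{\overline{T}}\sbra{U_{f,g}}\geq 1-\mass_s\rbra{p_f}\geq 3\delta/4$ for every admissible $T$ and every $g$. You are also right that the hypothesis $\delta\leq1/2$ is not used in this step.
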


\begin{proof}
By \cref{lem:empirical-mass} with $\tau=\delta/4$,
the event $\mass_s\rbra{p_f}\leq\mass_s\rbra{p}+\delta/4$
has probability at least $1-16s/\rbra{R\delta^2}$ over $f$.
On this event, every $S\subseteq\sbra{M}$ with
$\abs{S}\leq s$ satisfies
\begin{equation*}
  1-p_f\rbra{S}
  \geq 1-\mass_s\rbra{p_f}
  \geq 1-\mass_s\rbra{p}-\frac{\delta}{4}
   \geq\frac{3\delta}{4}.
\end{equation*}
For any admissible $T$, the set
$S=\cbra{j\in\sbra{M}\colon r+j\in T}$ has $\abs{S}\leq s$.
Since the event depends only on $f$,
\cref{eq:hard-ensemble-outside-influence} gives the claimed bound
for every admissible $T$ and every phase mask $g$.
\end{proof}

\paragraph{Case 2: $T$ omits an address qubit.}
Now suppose $T\subseteq\sbra{n}$ satisfies
$\abs{T}\leq r+s$ and $\sbra{r}\nsubseteq T$.
Such a set may contain more than $s$ data qubits.
We therefore bound the influence through an omitted
address qubit.
The phase randomness gives a lower bound on the influence
of every address qubit for every fixed address function $f$.

\begin{lemma}[Address-qubit influence]
\label{lem:NO_case_2}
Fix integers $M\geq2$, $r\geq1$, and $Q\geq2$,
and set $n=r+M$ and $R=2^r$.
Let $f:\cbra{0,1}^r\to\sbra{M}$ be an address function.
Sample $g_x\overset{\mathrm{i.i.d.}}{\sim}
\operatorname{Unif}\rbra{\C_Q}$
for every $x\in\cbra{0,1}^r$.
Then, with probability at least
$1-r e^{-R/16}$ over $g$,
\begin{equation*}
  \Inf_{\overline{T}}\sbra{U_{f,g}}
  \geq\frac14
\end{equation*}
for every $T\subseteq\sbra{n}$ satisfying
$\sbra{r}\nsubseteq T$.
\end{lemma}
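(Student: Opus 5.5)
Since $\Inf_{\overline T}\sbra{U}\geq\Inf_i\sbra{U}$ whenever $i\in\overline T$, it suffices to show that with probability at least $1-re^{-R/16}$ over $g$, every address qubit $i\in\sbra{r}$ has $\Inf_i\sbra{U_{f,g}}\geq1/4$. A union bound over the $r$ address qubits reduces this to showing $\Pr_g\sbra{\Inf_i\sbra{U_{f,g}}<1/4}\leq e^{-R/16}$ for each fixed $i$.

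Fix $i$. By \cref{eq:hard-ensemble-pauli-coefficients}, $\Inf_i\sbra{U_{f,g}}=1-\sum_{j}\sum_{x:\,x_i=0}\abs{\wh U_{f,g}(3x,3e_j)}^2$. Restricted to $x_i=0$, the inner sum is a Parseval computation on the address register with the $i$th coordinate averaged out. Pair each address $a$ with $a\oplus e_i$; the coefficients with $x_i=0$ see only the pair-sum. This gives
\[
1-\Inf_i\sbra{U_{f,g}}=\frac1R\sum_{\{a,a\oplus e_i\}}\frac{\abs{g_a+g_{a\oplus e_i}\mathbf 1\cbra{f(a)=f(a\oplus e_i)}}^2}{2}\cdot\frac{2}{2}.
\]
Equivalently, the normalized partial trace over qubit $i$ replaces the diagonal entries by pair averages. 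So I would verify the clean identity
\[
\Inf_i\sbra{U_{f,g}}=\frac1R\sum_{\text{pairs}}\rbra*{2-\tfrac12\abs{g_a+\eta\, g_{a'}}^2}\Big/1,
\]
with $\eta\in\cbra{0,1}$. Each pair then contributes an independent term $c_P\in\sbra{0,2/R}$: it equals $1/R$ when $f(a)\neq f(a')$, and $(1-\Real(g_a\overline{g_{a'}}))/R$ otherwise. This bookkeeping is routine but must be done carefully.

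The key point is that $\Ex\sbra{g_a\overline{g_{a'}}}=0$ for independent uniform $Q$th roots of unity with $Q\geq2$. Hence each pair term has mean $1/R$, and $\Ex\sbra{\Inf_i}=1/2$, regardless of $f$. The $R/2$ pair terms are independent (disjoint pairs use disjoint phases) and bounded in $\sbra{0,2/R}$. Hoeffding's inequality gives
\[
\Pr\sbra{\Inf_i<1/4}\leq\exp\rbra*{-2(1/4)^2/((R/2)(2/R)^2)}=\exp(-R/16),
\]
exactly the claimed rate. The main obstacle is getting the per-pair contribution formula right, including the case $f(a)\neq f(a')$, where the two addresses hit different data qubits and do not interfere. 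After that, Hoeffding and the union bound over $i$ finish the argument.
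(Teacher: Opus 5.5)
Your proposal is correct and is essentially the paper's proof: pair $a$ with $a\oplus e_i$, apply Parseval over the remaining $r-1$ address bits, take per-pair terms equal to $1/R$ or $\rbra{1-\Real\rbra{g_a\overline{g_{a'}}}}/R$ (the paper's $\tfrac2R X_u$) with mean $1/R$, then Hoeffding over $R/2$ independent terms of range $2/R$, a union bound over $i\in\sbra{r}$, and monotonicity of influence. One slip to fix: both intermediate displays put the indicator inside a single $\abs{\cdot}^2$, so a pair with $f(a)\neq f(a')$ gets the wrong value ($1/(2R)$ in $1-\Inf_i$ and $3/(2R)$ in $\Inf_i$, instead of $1/R$); since the two addresses then act on different data qubits you must sum over $j$, i.e.\ $\sum_j\abs{g_a\bbone\cbra{f(a)=j}\pm g_{a'}\bbone\cbra{f(a')=j}}^2=2\pm2\,\bbone\cbra{f(a)=f(a')}\Real\rbra{g_a\overline{g_{a'}}}$, which yields the $c_P$ you actually state.
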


\begin{proof}
We first bound the influence of each address qubit.
Fix an address coordinate $i\in\sbra{r}$.
Throughout the proof, the address function $f$ is fixed,
and all probabilities and expectations are over
the phase mask $g$.

The nonzero Pauli coefficients in
\cref{eq:hard-ensemble-pauli-coefficients} contribute to
$\Inf_i\sbra{U_{f,g}}$ exactly when $x_i=1$.
Hence
\begin{equation*}
  \Inf_i\sbra{U_{f,g}}
  =
  \sum_{\substack{x\in\cbra{0,1}^r\\x_i=1}}
  \sum_{j=1}^M
  \abs*{\wh{U}_{f,g}\rbra{3x,3e_j}}^2.
\end{equation*}
Write each address as $a=\rbra{u,b}$, where
$\rbra{u,b}$ denotes inserting $b\in\cbra{0,1}$
at coordinate $i$ into $u\in\cbra{0,1}^{r-1}$.
Let $\bbone\cbra{\cdot}$ denote the indicator function.
Applying Parseval's identity to the $r-1$ remaining address
coordinates, with normalization
$2^{r-1}/R^2=1/\rbra{2R}$, gives
\begin{align*}
  \Inf_i\sbra{U_{f,g}}
  &=\frac1{2R}
  \sum_{u\in\cbra{0,1}^{r-1}}
  \sum_{j=1}^M
  \abs*{
    g_{\rbra{u,0}}\bbone\cbra{f\rbra{u,0}=j}
    -g_{\rbra{u,1}}\bbone\cbra{f\rbra{u,1}=j}
  }^2\\
  &=\frac12-\frac1R
  \sum_{\substack{u\in\cbra{0,1}^{r-1}\\
    f\rbra{u,0}=f\rbra{u,1}}}
  \Real\rbra*{
    g_{\rbra{u,0}}\overline{g_{\rbra{u,1}}}
  }.
\end{align*}
The second equality uses $\abs{g_a}=1$.

For each $u\in\cbra{0,1}^{r-1}$, define
\begin{equation*}
  X_u
  =
  \begin{cases}
    \displaystyle\frac{1-\Real\rbra*{
      g_{\rbra{u,0}}\overline{g_{\rbra{u,1}}}
    }}2,
      & f\rbra{u,0}=f\rbra{u,1},\\[6pt]
    \displaystyle\frac12,
      & f\rbra{u,0}\neq f\rbra{u,1}.
  \end{cases}
\end{equation*}
The preceding calculation gives
\begin{equation*}
  \Inf_i\sbra{U_{f,g}}
  =
  \frac2R\sum_{u\in\cbra{0,1}^{r-1}}X_u.
\end{equation*}
Since $\abs{g_a}=1$, the phase product
$g_{\rbra{u,0}}\overline{g_{\rbra{u,1}}}$ has real part
in $\sbra{-1,1}$.
Thus $0\leq X_u\leq1$.
Since $f$ is fixed, each $X_u$ depends only on the phases
$g_{\rbra{u,0}}$ and $g_{\rbra{u,1}}$.
Distinct $u$ give disjoint address pairs, so the
independence of the phases implies that the $X_u$
are independent.

Since $Q\geq2$, uniform phase sampling gives
\begin{equation*}
  \Ex_g\sbra*{g_{\rbra{u,b}}}
  =\frac1Q\sum_{\ell=0}^{Q-1}e^{2\pi\mathrm{i}\ell/Q}
  =0
\end{equation*}
for every $u\in\cbra{0,1}^{r-1}$ and $b\in\cbra{0,1}$.
By independence of the two phases,
\begin{equation*}
  \Ex_g\sbra*{
    g_{\rbra{u,0}}\overline{g_{\rbra{u,1}}}}
  =
  \Ex_g\sbra*{g_{\rbra{u,0}}}
  \overline{\Ex_g\sbra*{g_{\rbra{u,1}}}}
  =0.
\end{equation*}
If $f\rbra{u,0}\neq f\rbra{u,1}$, then $X_u=1/2$.
If $f\rbra{u,0}=f\rbra{u,1}$, then
\begin{equation*}
  \Ex_g\sbra*{X_u}
  =
  \frac12-\frac12\Real\rbra*{
    \Ex_g\sbra*{
      g_{\rbra{u,0}}\overline{g_{\rbra{u,1}}}}
  }
  =\frac12.
\end{equation*}
Thus every $X_u$ has mean $1/2$.

The influence $\Inf_i\sbra{U_{f,g}}$ is the average of
$R/2$ independent random variables in $\sbra{0,1}$,
each with mean $1/2$.
Hoeffding's inequality~\cite{Hoe63} therefore gives
\begin{equation*}
  \Pr_g\sbra*{\Inf_i\sbra{U_{f,g}}<\frac14}
  =
  \Pr_g\sbra*{
    \frac2R\sum_{u\in\cbra{0,1}^{r-1}}
    \rbra{X_u-\frac12}<-\frac14
  }
  \leq
  \exp\rbra*{-2\cdot\frac R2\cdot\rbra{\frac14}^2}
  =e^{-R/16}.
\end{equation*}
A union bound over the $r$ address coordinates gives
\begin{equation*}
  \Pr_g\sbra*{
    \exists i\in\sbra{r}\colon
    \Inf_i\sbra{U_{f,g}}<\frac14
  }
  \leq
  \sum_{i=1}^r
  \Pr_g\sbra*{\Inf_i\sbra{U_{f,g}}<\frac14}
  \leq r e^{-R/16}.
\end{equation*}
Thus, with probability at least $1-r e^{-R/16}$ over $g$,
\begin{equation*}
  \Inf_i\sbra{U_{f,g}}\geq\frac14
\end{equation*}
for every $i\in\sbra{r}$.

On this event, let $T\subseteq\sbra{n}$ satisfy
$\sbra{r}\nsubseteq T$, and choose $i\in\sbra{r}\cap \overline{T}$.
Since $i\in \overline{T}$, monotonicity of influence gives
\begin{equation*}
  \Inf_{\overline{T}}\sbra{U_{f,g}}
  \geq\Inf_i\sbra{U_{f,g}}\geq\frac14.
\end{equation*}
Since $T$ was arbitrary, this proves the lemma.
\end{proof}

We now combine \cref{lem:NO_case_1,lem:NO_case_2}
to prove \cref{lem:hard-no-reduction}.

\begin{proof}[Proof of \cref{lem:hard-no-reduction}]
Since $0<\eps\leq1/16$ and
$\delta=\rbra{4/3}\rbra{2\eps^2-\eps^4}$, we have $\delta<1/3$.

By \cref{lem:NO_case_1}, with probability at least
$1-16s/\rbra{R\delta^2}$ over $f$,
\begin{equation*}
  \Inf_{\overline{T}}\sbra{U_{f,g}}\geq\frac{3\delta}{4}
\end{equation*}
for every $g\in\C_Q^R$ and every $T\subseteq\sbra{n}$
with $\sbra{r}\subseteq T$ and $\abs{T}\leq r+s$.

For each fixed $f$, \cref{lem:NO_case_2} gives,
with probability at least $1-r e^{-R/16}$ over $g$,
\begin{equation*}
  \Inf_{\overline{T}}\sbra{U_{f,g}}\geq\frac14
\end{equation*}
for every $T\subseteq\sbra{n}$ with $\sbra{r}\nsubseteq T$.
Since this failure bound holds for every $f$,
it also holds after averaging over $f$.
A union bound shows that both bounds hold with probability
at least $1-16s/\rbra{R\delta^2}-r e^{-R/16}$
over $U_{f,g}\sim\U\rbra{p}$.
Since $3\delta/4<1/4$, either case gives
\begin{equation*}
  \Inf_{\overline{T}}\sbra{U_{f,g}}
  \geq\frac{3\delta}{4}=2\eps^2-\eps^4
\end{equation*}
for every $T\subseteq\sbra{n}$ with $\abs{T}\leq r+s$.
By \cref{lem:influence-certificate}(ii), this implies
$\distU\rbra{U_{f,g},\calJ_{r+s}}\geq\eps$,
proving the lemma.

\end{proof}

\subsection{Exact classicalization of forward queries}\label{subsec:classicalization}

We show that the acceptance probability of a quantum algorithm
with forward-only access, averaged over the ensemble $\U\rbra{p}$,
can be reproduced exactly by a classical algorithm using
independent samples from $p$.

\begin{theorem}[Exact forward-query classicalization]
\label{thm:classicalization}
Fix integers $M\geq2$, $r\geq1$, and $Q\geq2$,
and set $n=r+M$.
Let $0\leq q<Q$ be an integer.
Let $\calA$ be a binary-output quantum algorithm,
independent of $p$, that has forward-only access to an unknown
$n$-qubit unitary and makes at most $q$ queries.
There exists a function
$F_{\calA}:\sbra{M}^q\to\sbra{0,1}$,
independent of $p$, such that for every
$p\in\Delta\rbra{\sbra{M}}$,
\begin{equation*}
  \Ex_{U_{f,g}\sim\U\rbra*{p}}
    \sbra*{\operatorname{Acc}_{\calA}\rbra*{U_{f,g}}}
  =\Ex_{Y\sim p^{\ot q}}
    \sbra*{F_{\calA}\rbra*{Y}}.
\end{equation*}
\end{theorem}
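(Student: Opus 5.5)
First, I reduce to a clean form. I pad the algorithm to exactly $q$ queries by appending dummy queries whose outputs are ignored. I then purify it. Intermediate measurements and classical control can be deferred, and every channel $A_t$ can be replaced by a Stinespring isometry on an enlarged workspace. The result is a pure initial state $\ket{\psi_0}$, isometries $V_0,\ldots,V_q$, and a final projector $\Lambda$. After these changes the acceptance probability is $\norm{\Lambda V_q U V_{q-1}\cdots U V_0\ket{\psi_0}}^2$, with $U=U_{f,g}$ acting on $\mathsf Q$.

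Next I expand each query in the computational basis of the address and data registers. By \cref{eq:unitary-def}, query $t$ applied to basis state $\ket{x_t,z_t}$ contributes the factor $g_{x_t}(-1)^{(z_t)_{f(x_t)}}$. The final state is therefore a sum over query paths $\pi=((x_1,z_1),\ldots,(x_q,z_q))$ of an oracle-independent vector $\ket{w_\pi}$ weighted by $\prod_t g_{x_t}(-1)^{(z_t)_{f(x_t)}}$. Squaring gives a double sum over path pairs $(\pi,\pi')$. The phase part of each term is $\prod_x g_x^{h_\pi(x)}\overline{g_x}^{h_{\pi'}(x)}$, where $h_\pi$ is the address histogram of $\pi$.

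Averaging over i.i.d.\ uniform $g_x\in\C_Q$ gives $\Ex[g_x^{a}\overline{g_x}^{b}]=\bbone\{a\equiv b \bmod Q\}$. Since $0\le a,b\le q<Q$, this forces $a=b$, so only pairs with equal histograms $h_\pi=h_{\pi'}$ survive. This is exactly where forward-only access is used, since all exponents are nonnegative. Grouping by histogram $h$, the average acceptance becomes $\sum_h \Ex_f\bigl[\norm{\Lambda\ket{W_h(f)}}^2\bigr]$, where $\ket{W_h(f)}=\sum_{\pi:h_\pi=h}(\pm1)_{f,\pi}\ket{w_\pi}$. The signs depend only on the values $f(x)$ at the at most $q$ addresses in $\supp(h)$.

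Fix a histogram $h$ with distinct support addresses $x^{(1)}<\cdots<x^{(m)}$, where $m\le q$. The values $f(x^{(i)})$ are i.i.d.\ from $p$, so I identify them with the first $m$ entries of $Y\sim p^{\ot q}$. Define $G_h(y)=\norm{\Lambda\ket{W_h(y)}}^2\ge0$ and set $F'(y)=\sum_h G_h(y)$, which is independent of $p$. Then the average acceptance equals $\Ex_Y[F'(Y)]$.

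The main obstacle is showing that $F'(y)\le1$ pointwise; nonnegativity already holds. To handle this, I exhibit $F'(y)$ as an actual acceptance probability. Consider the oracle $U_{f_y,g}$ in which $f_y$ assigns the sample labels to addresses in a symmetric way. For example, relabel the addresses by a uniformly random injection from $\supp(h)$, or average over a random permutation of the sample order so that each histogram with $m$ support points receives $m$ fresh distinct samples. I would make this precise with a normalization identity: $\sum_h\norm{\ket{W_h(f)}}^2=1$ for every fixed $f$. This follows from the same phase-averaging applied with $\Lambda=I$, since the state norm is $1$ for every $g$. Together with $\Lambda\preceq I$, this gives $\sum_h\norm{\Lambda W_h(f)}^2\le1$ for each fixed $f$.

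The remaining difficulty is that $F'(y)$ must be written as the expectation of such a quantity for a single $f$ built from $y$. I will define $F_{\calA}(y)=\Ex_{\sigma}\Ex_{f}\bigl[\sum_h \norm{\Lambda W_h(f^{y,\sigma}_h)}^2\bigr]$ to resolve this. The function $f^{y,\sigma}_h$ places the permuted samples $y_{\sigma(1)},\ldots,y_{\sigma(m)}$ on $\supp(h)$ in a canonical order. The symmetrization over $\sigma$ makes this a convex combination of the quantities bounded by the normalization identity. I expect the careful bookkeeping in this step to be the hardest part.
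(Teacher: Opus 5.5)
\paragraph{Review.} Everything through the histogram grouping is correct and matches the paper's phase-averaged path-sector decomposition (\cref{lem:path-sector}). Your final symmetrized $F_{\calA}$ is also the same function the paper uses; the inner $\Ex_f$ is vacuous, because $W_h$ depends on $f$ only through $\supp(h)$.

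The gap is the last step. You claim that symmetrizing over $\sigma$ turns $F_{\calA}(y)$ into a convex combination of quantities $\sum_h\norm{\Lambda W_h(f)}^2$, each with a single address function $f$. For a fixed $\sigma$, however, the assignments $f^{y,\sigma}_h$ for different $h$ put different labels on the same address. An address can be the first support point of one histogram and the second support point of another. So $\sum_h\norm{W_h(f^{y,\sigma}_h)}^2$ does not come from any single $f$, and the identity $\sum_h\norm{W_h(f)}^2=1$ says nothing about it. Averaging over $\sigma$, or over random injections, cannot repair this. In general there is no distribution on single functions $f$ whose restriction to every $\supp(h)$ is an ordered draw without replacement from the multiset $y$. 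For example, take $q=2$, $y=(1,2)$, and $R\geq3$: every pair of addresses would have to carry the labels $1$ and $2$, which is impossible by pigeonhole. So the pointwise bound $F_{\calA}\leq1$ is not just a bookkeeping issue; your stated mechanism for it fails.

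The missing idea is algebraic rather than probabilistic. Put $W(y)=\frac{1}{q!}\sum_{\sigma}\sum_h\norm{W_h(y_\sigma)}^2$. Apply the distributional identity you already use for the acceptance probability, with $\Lambda$ replaced by $I$, together with $\sum_h\norm{W_h(f)}^2=1$ for every $f$. This gives $\Ex_{Y\sim p^{\ot q}}\sbra{W(Y)}=1$ for every $p\in\Delta(\sbra{M})$.

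Since $W$ is permutation invariant, it takes a single value $w_c$ on all tuples with label histogram $c$. This expectation is therefore the homogeneous polynomial $\sum_c\frac{q!}{c_1!\cdots c_M!}\,w_c\prod_j p_j^{c_j}$. It agrees with $(\sum_j p_j)^q$ on the simplex, hence on the positive orthant by homogeneity, hence identically. Comparing coefficients gives $w_c=1$ for every $c$. Thus $W\equiv1$, and $0\leq F_{\calA}\leq W=1$; this is \cref{lem:pointwise-normalization} in the paper.

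This is also why the symmetrization is essential rather than cosmetic. A nonsymmetric function such as $1+\bbone\{y_1=1\}-\bbone\{y_2=1\}$ has expectation $1$ under every product law without being identically $1$. So your unsymmetrized $F'$ could not be bounded by this argument.
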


For fixed $\rbra{f,g}$, $\operatorname{Acc}_{\calA}\rbra{U_{f,g}}$
is the probability that $\calA$ accepts, accounting for
its internal randomness and measurement outcomes.
The outer expectation averages over the choice of $\rbra{f,g}$,
which is sampled once and held fixed throughout the computation.
The same $F_{\calA}$ works for every $p$, including
distributions outside the support-size testing promise.
Sampling $Y\sim p^{\ot q}$ and accepting with probability
$F_{\calA}\rbra{Y}$ defines a classical algorithm
using $q$ samples.
Only the existence of this acceptance rule is needed,
since the classical sample lower bound places no restriction
on computation.

We first pad the algorithm to exactly $q$ queries and then
purify its oracle-independent operations.
Fix an algorithm $\calA$ as in \cref{thm:classicalization}.
Throughout this subsection, operators on the query register
are implicitly tensored with the identity on the workspace.

An algorithm using at most $q$ queries can be converted into one
using exactly $q$ queries without changing its acceptance probability
for any fixed oracle. Keep a classical stop flag in the workspace
and store the output when the algorithm terminates.
Subsequent operations preserve the stored output.
At each query slot, if the algorithm has stopped, swap $\mathsf Q$
with a fresh $n$-qubit register initialized to $\ket{0^n}$.
Apply the oracle unconditionally, undo the conditional swap,
and discard the fresh register.
On active branches, this implements the original query.
On stopped branches, trace preservation ensures that the live
registers and stored output remain unchanged.
This construction works in both quantum models of
\cref{subsec:unitary-access}. All conditional swaps are
oracle-independent, so no controlled oracle is required.

For a unitary oracle, the padded algorithm also has a pure-state
representation on a fixed enlarged space.
Let $\rho_{\mathrm{init}}$ and $A_0,\ldots,A_q$ be its initial state
and interleaving channels, as in \cref{subsec:unitary-access}.
Purify $\rho_{\mathrm{init}}$ and prepare all auxiliary registers
at the start. For each $t=0,\ldots,q$, implement $A_t$ by an
oracle-independent unitary dilation using a fresh environment
register initialized to $\ket{0}$.
Extend each dilation by the identity on the other added registers
and retain all added registers throughout the computation.
Tracing out the added registers recovers each $A_t$.
Since the oracle acts only on $\mathsf Q$, the reduced state at
each step agrees with that of the padded algorithm for every
fixed unitary oracle.
For the oracle $U_{f,g}$, the final pure state is therefore
\begin{equation}
\label{eq:classicalization-final-state}
  \ket{\psi_{f,g}}
  =V_q U_{f,g}V_{q-1}\cdots V_1 U_{f,g}V_0\ket{0},
\end{equation}
where $V_0,\ldots,V_q$ are unitaries on the query register
and the enlarged workspace.
The unitary $V_0$ includes preparation of the purified initial
state and the dilation of $A_0$.
Let $\mathsf E$ collect the registers added in this purification and set
$\Lambda=\Lambda_{\calA}\ot I_{\mathsf E}$, where
$\Lambda_{\calA}$ is the accepting POVM element of the padded algorithm.
Then $\cbra{\Lambda,I-\Lambda}$ is a final two-outcome POVM,
$0\leq\Lambda\leq I$, and
\begin{equation*}
  \operatorname{Acc}_{\calA}\rbra{U_{f,g}}
  =\bra{\psi_{f,g}}\Lambda\ket{\psi_{f,g}}.
\end{equation*}
The unitaries $V_0,\ldots,V_q$ and the POVM element $\Lambda$
are independent of $f$, $g$, and $p$.
For $q=0$, $\ket{\psi_{f,g}}=V_0\ket{0}$, and the same
acceptance identity holds.
These dilations apply only to the algorithm's oracle-independent
operations. A general channel oracle is still queried as in
\cref{eq:channel-query-map}, with no access to its environment.

We use this representation to prove two auxiliary lemmas.
For the path-sector construction, assume $1\leq q<Q$.
Expanding each of the $q$ queries into its address blocks
expresses $\ket{\psi_{f,g}}$ as a sum over address sequences
of length $q$.
We group these sequences according to the number of times
each address occurs.

Recall that $R=2^r$.
For an address sequence of length $q$, its address histogram
is the vector $h=\rbra{h_1,\ldots,h_R}$, where $h_x$
counts the occurrences of address $x\in\sbra{R}$.
Thus each $h_x$ is a nonnegative integer and
$\sum_{x\in\sbra{R}}h_x=q$.
For integers $d\geq1$ and $q\geq0$, define
\begin{equation*}
  \mathfrak H_{d,q}=
  \cbra*{
    h\in\cbra{0,\ldots,q}^d
    \colon \sum_{j=1}^d h_j=q
  }.
\end{equation*}
Thus $\mathfrak H_{R,q}$ is the set of address histograms.
For $h\in\mathfrak H_{R,q}$, define
$R_h=\cbra{x\in\sbra{R}\colon h_x>0}$,
the set of addresses occurring in any sequence
with histogram $h$.

\begin{lemma}[Phase-averaged path-sector decomposition]
\label{lem:path-sector}
Fix an algorithm $\calA$ as in
\cref{thm:classicalization}, with $1\leq q<Q$.
Let $\ket{\psi_{f,g}}$ be the final state in
\cref{eq:classicalization-final-state}.
There exist vectors $\ket{\phi_{f,h}}$, indexed by address
functions $f$ and address histograms $h\in\mathfrak H_{R,q}$, such that
for every $f$,
\begin{equation*}
  \Ex_{g\sim\operatorname{Unif}\rbra{\C_Q^R}}\sbra*{
    \ketbra{\psi_{f,g}}{\psi_{f,g}}}
  =\sum_{h\in\mathfrak H_{R,q}}
    \ketbra{\phi_{f,h}}{\phi_{f,h}}.
\end{equation*}
Each vector $\ket{\phi_{f,h}}$ is independent of $g$ and $p$
and depends on $f$ only through its restriction to $R_h$.
Moreover, $\abs{R_h}\leq q$.
\end{lemma}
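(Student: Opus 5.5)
We will expand each query of $U_{f,g}$ in its address blocks. Write $P_x=\ketbra{x}{x}\ot I$ on the address register, so that
\[
U_{f,g}=\sum_{x\in\sbra{R}}g_x\,P_x\,Z_{f\rbra{x}},
\]
where $Z_{f\rbra{x}}$ acts on the data register. Substituting this into \cref{eq:classicalization-final-state} gives a sum over address sequences $\mathbf x=\rbra{x_1,\ldots,x_q}\in\sbra{R}^q$:
\[
\ket{\psi_{f,g}}=\sum_{\mathbf x}\Bigl(\prod_{t=1}^q g_{x_t}\Bigr)\ket{\chi_{f,\mathbf x}},
\qquad
\ket{\chi_{f,\mathbf x}}=V_qP_{x_q}Z_{f\rbra{x_q}}V_{q-1}\cdots V_1P_{x_1}Z_{f\rbra{x_1}}V_0\ket{0}.
\]
The phase product depends on $\mathbf x$ only through its histogram $h$, and equals $g^h:=\prod_x g_x^{h_x}$. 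Grouping terms by histogram, we set
\[
\ket{\phi_{f,h}}=\sum_{\mathbf x\colon\mathrm{hist}\rbra{\mathbf x}=h}\ket{\chi_{f,\mathbf x}},
\]
which yields $\ket{\psi_{f,g}}=\sum_h g^h\ket{\phi_{f,h}}$.

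The vector $\ket{\phi_{f,h}}$ involves no $g$ and no $p$. Each contributing sequence uses only addresses in $R_h$, so $\ket{\phi_{f,h}}$ depends on $f$ only through $f|_{R_h}$. Moreover, $\abs{R_h}\leq\sum_x h_x=q$.

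Next we average the outer product over $g$:
\[
\Ex_g\sbra*{\ketbra{\psi_{f,g}}{\psi_{f,g}}}=\sum_{h,h'}\Ex_g\sbra*{g^h\,\overline{g^{h'}}}\,\ketbra{\phi_{f,h}}{\phi_{f,h'}}.
\]
By independence of the phases, the expectation factorizes as $\prod_x\Ex\sbra{g_x^{h_x-h'_x}}$. For a uniform $Q$th root of unity $\omega$, $\Ex\sbra{\omega^m}=1$ if $Q\mid m$ and $0$ otherwise. Since $0\leq h_x,h'_x\leq q<Q$, we have $\abs{h_x-h'_x}<Q$, so $Q\mid\rbra{h_x-h'_x}$ only when $h_x=h'_x$. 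Hence the cross terms vanish unless $h=h'$, and the diagonal terms have coefficient one. This gives the claimed decomposition.

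The main obstacle is to justify the path expansion itself. It requires that each $V_t$ acts on the enlarged space while the oracle acts only on $\mathsf Q$, so the projectors $P_x$ and the Paulis $Z_j$ are tensored with the identity elsewhere. Given this, the expansion is just multilinearity of the product. The hypothesis $q<Q$ is exactly what excludes wrap-around collisions modulo $Q$. This is also where forward-only access is used: every query contributes the exponent $+1$ and never $-1$, so all exponents lie in $\cbra{0,\ldots,q}$.
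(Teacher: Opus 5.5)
Your proposal is correct and follows the paper's proof essentially step for step. You expand each query into its address blocks, group address sequences by histogram to define $\ket{\phi_{f,h}}$, and eliminate the cross terms with $h\neq h'$ using the root-of-unity identity together with $\abs{h_x-h'_x}\leq q<Q$. The locality claims are argued exactly as in the paper: independence from $g$ and $p$, dependence on $f$ only through its restriction to $R_h$, and $\abs{R_h}\leq q$.
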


\begin{proof}
Expanding each query using the address decomposition
of $U_{f,g}$ gives
\begin{equation*}
  \ket{\psi_{f,g}}
  =\sum_{\rbra{x_1,\ldots,x_q}\in\sbra{R}^q}
    \rbra*{\prod_{t=1}^q g_{x_t}}
    V_q\rbra*{\ketbra{x_q}{x_q}\ot Z_{f\rbra{x_q}}}
    V_{q-1}\cdots
    V_1\rbra*{\ketbra{x_1}{x_1}\ot Z_{f\rbra{x_1}}}
    V_0\ket{0}.
\end{equation*}
For an address sequence $\rbra{x_1,\ldots,x_q}$ with
address histogram $h$, define its phase factor by
\begin{equation*}
  \chi_h\rbra{g}
  =\prod_{x\in\sbra{R}}g_x^{h_x}
  =\prod_{t=1}^q g_{x_t}.
\end{equation*}

Group the address sequences by their address histogram $h$.
For each group, factor out the common phase
$\chi_h\rbra{g}$ and denote the sum of the remaining
vectors by $\ket{\phi_{f,h}}$.
This gives, for every phase mask $g\in\C_Q^R$,
\begin{equation}
\label{eq:path-grouping}
  \ket{\psi_{f,g}}
  =\sum_{h\in\mathfrak H_{R,q}}
    \chi_h\rbra{g}\ket{\phi_{f,h}}.
\end{equation}
Every address sequence with address histogram $h$ visits
only addresses in $R_h$.
After removing the phase factor, the corresponding vectors
depend only on the fixed circuit and the labels at these addresses.
Thus $\ket{\phi_{f,h}}$ is independent of $g$ and $p$
and depends on $f$ only through its restriction to $R_h$.
Since $h_x\geq1$ for every $x\in R_h$,
$\abs{R_h}\leq\sum_{x\in\sbra{R}}h_x=q$.

Paths with the same address histogram may interfere.
We now average over $g$ uniform on $\C_Q^R$ to remove
cross terms between distinct address histograms.
For every $x\in\sbra{R}$ and integer $d$, the finite
geometric-series identity gives
\begin{equation*}
  \Ex_{g_x\sim\operatorname{Unif}\rbra{\C_Q}}
    \sbra*{g_x^d}
  =\frac1Q\sum_{\ell=0}^{Q-1}e^{2\pi\mathrm{i}\ell d/Q}
  =\begin{cases}
    1,& Q\mid d,\\
    0,& Q\nmid d.
  \end{cases}
\end{equation*}
Let $h,h'\in\mathfrak H_{R,q}$.
Since $\abs{h_x-h'_x}\leq q<Q$ for every $x\in\sbra{R}$,
$Q$ divides $h_x-h'_x$ if and only if $h_x=h'_x$.
By independence of the coordinates of $g$,
\begin{align*}
  \Ex_{g\sim\operatorname{Unif}\rbra{\C_Q^R}}
    \sbra*{\chi_h\rbra{g}\overline{\chi_{h'}\rbra{g}}}
  =\prod_{x\in\sbra{R}}
    \Ex_{g_x\sim\operatorname{Unif}\rbra{\C_Q}}
      \sbra*{g_x^{h_x-h'_x}}
  =\begin{cases}
    1,& h=h',\\
    0,& h\neq h'.
  \end{cases}
\end{align*}
Expand $\ketbra{\psi_{f,g}}{\psi_{f,g}}$ using
\cref{eq:path-grouping} and average over $g$.
The phase-orthogonality identity above removes all terms with $h\neq h'$.
The terms with $h=h'$ give the claimed decomposition,
completing the proof.
\end{proof}

The vectors $\ket{\phi_{f,h}}$ need not be normalized
or mutually orthogonal.

The condition $q<Q$ is needed for phase orthogonality
between distinct address histograms.
When $q=Q$, an address histogram with all $Q$ visits
at a single address $x$ has phase factor $g_x^Q=1$.
Choosing different addresses therefore gives distinct
histograms with identical phase factors.
Averaging over $g$ consequently leaves the corresponding
cross terms unchanged.

For every fixed $f$, taking the trace of the decomposition
in \cref{lem:path-sector} and using the normalization of
$\ket{\psi_{f,g}}$ gives
\begin{equation*}
  \sum_{h\in\mathfrak H_{R,q}}
    \braket{\phi_{f,h}}{\phi_{f,h}}=1.
\end{equation*}
Multiplying both sides of the same decomposition by
$\Lambda$ and taking the trace gives
\begin{equation*}
  \Ex_{g\sim\operatorname{Unif}\rbra{\C_Q^R}}
    \sbra*{\operatorname{Acc}_{\calA}\rbra{U_{f,g}}}
  =\sum_{h\in\mathfrak H_{R,q}}
    \bra{\phi_{f,h}}\Lambda\ket{\phi_{f,h}}.
\end{equation*}
The following lemma converts an expectation identity valid for
every distribution into pointwise normalization of a symmetric function.

\begin{lemma}[Pointwise normalization]
\label{lem:pointwise-normalization}
Let $M,q\geq1$ be integers, and let
$W:\sbra{M}^q\to\R$ be invariant under coordinate permutations.
Suppose that, for every $p\in\Delta\rbra{\sbra{M}}$,
\begin{equation*}
  \Ex_{Y\sim p^{\ot q}}\sbra*{W\rbra{Y}}=1.
\end{equation*}
Then $W\rbra{y}=1$ for every $y\in\sbra{M}^q$.
\end{lemma}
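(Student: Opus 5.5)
The plan is to reduce the hypothesis to a polynomial identity on the simplex and then compare coefficients. For $p\in\Delta\rbra{\sbra{M}}$,
\begin{equation*}
  \Ex_{Y\sim p^{\ot q}}\sbra*{W\rbra{Y}}
  =\sum_{y\in\sbra{M}^q}W\rbra{y}\prod_{t=1}^q p\rbra{y_t}.
\end{equation*}
Since $W$ is symmetric, it depends on $y$ only through its type $c\rbra{y}=\rbra{c_1,\ldots,c_M}$, where $c_j$ counts occurrences of $j$ and $\sum_jc_j=q$. Writing $w_c$ for the common value, the expectation becomes the homogeneous degree-$q$ polynomial
\begin{equation*}
  P\rbra{p}=\sum_{c}\binom{q}{c_1,\ldots,c_M}w_c\,p_1^{c_1}\cdots p_M^{c_M}.
\end{equation*}
The hypothesis says that $P\rbra{p}=1$ on the simplex.

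To turn this into an identity of polynomials, I will homogenize the constant. On the simplex,
\begin{equation*}
  1=\rbra*{\sum_jp_j}^q=\sum_c\binom{q}{c_1,\ldots,c_M}p_1^{c_1}\cdots p_M^{c_M}.
\end{equation*}
Hence the homogeneous degree-$q$ polynomial
\begin{equation*}
  D\rbra{p}=\sum_c\binom{q}{c_1,\ldots,c_M}\rbra{w_c-1}p^c
\end{equation*}
vanishes on the simplex. By homogeneity, $D\rbra{\lambda p}=\lambda^qD\rbra{p}$, so $D$ vanishes on the whole nonnegative orthant $\R_{\geq0}^M$: every nonzero point there is a positive multiple of a simplex point, and $D\rbra{0}=0$ since $q\geq1$.

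A polynomial that vanishes on $\R_{\geq0}^M$ is the zero polynomial. This follows by induction on the number of variables, using that a univariate polynomial vanishing on $\interval[open]{0}{\infty}$ is zero; alternatively, the orthant has nonempty interior. Therefore every coefficient $\binom{q}{c}\rbra{w_c-1}$ vanishes. Since multinomial coefficients are positive, $w_c=1$ for every type $c$, and hence $W\rbra{y}=1$ for all $y$.

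The only delicate point is the homogenization step. Without it, the constant $1$ and the degree-$q$ terms cannot be compared directly, because the simplex is not Zariski dense in $\R^M$. Multiplying the constant by $\rbra{\sum_jp_j}^q$ makes both sides homogeneous of degree $q$, and scaling then extends the identity from the simplex to the orthant. Symmetry is what makes the coefficient comparison determine $W$ pointwise: without it, only the sums of $W$ over each type class would be determined.
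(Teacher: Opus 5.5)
Your proof is correct and takes essentially the same route as the paper: group tuples by label histogram, form the homogeneous degree-$q$ polynomial with multinomial weights, extend the identity from the simplex to the orthant by homogeneity, and compare coefficients. The only cosmetic difference is that you subtract $(\sum_j x_j)^q$ to get a polynomial vanishing on the closed orthant, whereas the paper shows directly that $P(x)=(\sum_j x_j)^q$ on the open positive orthant.
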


\begin{proof}
Let $c=\rbra{c_1,\ldots,c_M}\in\mathfrak H_{M,q}$.
A tuple $y\in\sbra{M}^q$ has label histogram $c$ if each
label $j\in\sbra{M}$ occurs exactly $c_j$ times.
Any two tuples with the same label histogram differ
only by a coordinate permutation.
Since $W$ is invariant under such permutations, it has
a common value on these tuples, which we denote by $w_c$.

For $x=\rbra{x_1,\ldots,x_M}\in\R^M$, define
the homogeneous polynomial of degree $q$
\begin{equation*}
  P\rbra{x}=
  \sum_{c\in\mathfrak H_{M,q}}
    \frac{q!}{c_1!\cdots c_M!}\,
    w_c\prod_{j=1}^M x_j^{c_j}.
\end{equation*}
Grouping tuples by their label histograms gives
$P\rbra{p}
=\Ex_{Y\sim p^{\ot q}}\sbra{W\rbra{Y}}=1$
for every $p\in\Delta\rbra{\sbra{M}}$.
For $x\in\R^M$ with positive coordinates,
homogeneity therefore gives
\begin{equation*}
  P\rbra{x}
  =\rbra*{\sum_{j=1}^M x_j}^q
    P\rbra*{\frac{x}{\sum_{j=1}^M x_j}}
  =\rbra*{\sum_{j=1}^M x_j}^q.
\end{equation*}
The two polynomials agree on a nonempty open set,
so they are identical.
Comparing coefficients gives $w_c=1$ for every
$c\in\mathfrak H_{M,q}$.
Thus $W\rbra{y}=1$ for every $y\in\sbra{M}^q$.
\end{proof}

\begin{proof}[Proof of \cref{thm:classicalization}]
If $q=0$, $\calA$ makes no oracle queries,
so its acceptance probability is independent of $U_{f,g}$.
Define $F_{\calA}$ on the empty tuple to equal
this probability.
The required identity then holds for every $p$.
Assume henceforth that $1\leq q<Q$.

By \cref{lem:path-sector}, for each $h\in\mathfrak H_{R,q}$,
$\ket{\phi_{f,h}}$ depends on $f$ only through the labels
at the $\abs{R_h}\leq q$ addresses in $R_h$.
We represent these labels by the first $\abs{R_h}$ entries
of a tuple $y\in\sbra{M}^q$.
This allows the contributions from all address histograms
to be summed on the same sample space.

For each $h\in\mathfrak H_{R,q}$, list the addresses
in $R_h$ in increasing order as
$x_{h,1}<\cdots<x_{h,\abs{R_h}}$.
Given $y=\rbra{y_1,\ldots,y_q}\in\sbra{M}^q$, choose any
address function $f$ satisfying
$f\rbra{x_{h,i}}=y_i$ for every $i\in\sbra{\abs{R_h}}$
and define
\begin{equation*}
  \ket{\phi_h\rbra{y}}=\ket{\phi_{f,h}}.
\end{equation*}
Such an address function exists for every $y$.
By \cref{lem:path-sector}, changing $f$ outside $R_h$
does not change $\ket{\phi_{f,h}}$, so
$\ket{\phi_h\rbra{y}}$ is well defined.
It depends only on the first $\abs{R_h}$ entries of $y$,
with every visit to $x_{h,i}$ using the same label $y_i$.
Both the address ordering and the map
$y\mapsto\ket{\phi_h\rbra{y}}$ are independent of $p$.

For a fixed tuple $y$, different address histograms may assign
different entries of $y$ to the same address.
The vectors $\ket{\phi_h\rbra{y}}$ therefore need not arise
from a single address function $f$.
To obtain a pointwise normalization, we average over coordinate
permutations and apply \cref{lem:pointwise-normalization}.

Let $\mathfrak S_q$ be the symmetric group on $\sbra{q}$.
For $y\in\sbra{M}^q$ and $\pi\in\mathfrak S_q$, define
$y_\pi=
\rbra{y_{\pi\rbra{1}},\ldots,y_{\pi\rbra{q}}}$.
Averaging the histogram contributions over all coordinate
permutations, define
\begin{align*}
  F_{\calA}\rbra{y}
  =\frac1{q!}
    \sum_{\pi\in\mathfrak S_q}\sum_{h\in\mathfrak H_{R,q}}
    \bra{\phi_h\rbra{y_\pi}}\Lambda
    \ket{\phi_h\rbra{y_\pi}}, \qquad
  W\rbra{y}
  =\frac1{q!}
    \sum_{\pi\in\mathfrak S_q}\sum_{h\in\mathfrak H_{R,q}}
    \braket{\phi_h\rbra{y_\pi}}{\phi_h\rbra{y_\pi}}.
\end{align*}
Both functions are invariant under permutations of the entries
of $y$ and are independent of $p$.
Since $0\leq\Lambda\leq I$,
\begin{equation}
\label{eq:sample-weight-order}
  0\leq F_{\calA}\rbra{y}\leq W\rbra{y},
\end{equation}
for every $y \in \sbra{M}^q$.
We will show that $W$ is identically one, making $F_{\calA}$
a valid acceptance rule.

Fix $p\in\Delta\rbra{\sbra{M}}$, an address histogram
$h\in\mathfrak H_{R,q}$, and a permutation $\pi\in\mathfrak S_q$.
Let $Y\sim p^{\ot q}$ and $f\sim p^{\ot R}$.
Since labels at distinct addresses are independent samples from $p$,
the tuple
$\rbra{f\rbra{x_{h,1}},\ldots,f\rbra{x_{h,\abs{R_h}}}}$
has distribution $p^{\ot\abs{R_h}}$.
The tuple
$\rbra{Y_{\pi\rbra{1}},\ldots,Y_{\pi\rbra{\abs{R_h}}}}$
has the same distribution, since $\pi$ selects distinct
coordinates of $Y$.
By the definition of $\ket{\phi_h\rbra{y}}$, the vectors
$\ket{\phi_h\rbra{Y_\pi}}$ and $\ket{\phi_{f,h}}$
have the same distribution. Hence
\begin{equation*}
  \Ex_{Y\sim p^{\ot q}}\sbra*{
    \ketbra{\phi_h\rbra{Y_\pi}}{\phi_h\rbra{Y_\pi}}}
  =\Ex_{f\sim p^{\ot R}}\sbra*{
    \ketbra{\phi_{f,h}}{\phi_{f,h}}}.
\end{equation*}
Summing this operator identity over $h\in\mathfrak H_{R,q}$
and averaging over $\pi\in\mathfrak S_q$ gives the
ensemble-averaged final-state operator by
\cref{lem:path-sector}.
Taking its trace against $\Lambda$ yields the acceptance
identity below, and taking its trace alone yields the
normalization identity.
Thus, for every $p\in\Delta\rbra{\sbra{M}}$,
\begin{equation}
\label{eq:sample-acceptance}
  \Ex_{Y\sim p^{\ot q}}\sbra*{F_{\calA}\rbra*{Y}}
  =\Ex_{U_{f,g}\sim\U\rbra*{p}}
      \sbra*{\operatorname{Acc}_{\calA}\rbra*{U_{f,g}}},
  \qquad
  \Ex_{Y\sim p^{\ot q}}\sbra*{W\rbra*{Y}}=1.
\end{equation}

The function $W$ is independent of $p$ and invariant under
coordinate permutations.
Since the second identity in \cref{eq:sample-acceptance} holds for every
$p\in\Delta\rbra{\sbra{M}}$, \cref{lem:pointwise-normalization}
implies $W\rbra{y}=1$ for every $y\in\sbra{M}^q$.
By \cref{eq:sample-weight-order},
$0\leq F_{\calA}\rbra{y}\leq1$ for every
$y\in\sbra{M}^q$.
The function $F_{\calA}$ is independent of $p$ by
construction, and \cref{eq:sample-acceptance} gives the
required acceptance-probability identity for every
$p\in\Delta\rbra{\sbra{M}}$, completing the proof.
\end{proof}

We conclude by explaining why the proof relies on
forward-only access.
An inverse query at address $x$ contributes the phase
$g_x^{-1}$.
The exponent of $g_x$ is therefore the signed visit count
at address $x$: the number of forward visits minus
the number of inverse visits.
Even when all signed counts are zero, the corresponding
path contributions may still depend on the labels of $f$.
For $j\in\sbra{M}$, define
\begin{equation*}
  H_j=
  I^{\ot\rbra{j-1}}\ot H\ot I^{\ot\rbra{M-j}},
\end{equation*}
where $H$ is the single-qubit Hadamard gate.
For example, take $r\geq2$ and $Q\geq3$.
Since $\overline{g_x}g_x=1$ for every address $x$,
\begin{equation*}
  U_{f,g}^*\rbra{I^{\ot r}\ot H_1}U_{f,g}
  =\sum_{x\in\sbra{R}}\ketbra{x}{x}
    \ot Z_{f\rbra{x}}H_1Z_{f\rbra{x}}.
\end{equation*}
Each path visits the same address once in each direction,
so all signed counts are zero and all paths belong to
a single phase sector.
For the input state
\begin{equation*}
  \frac1{\sqrt R}\sum_{x\in\sbra{R}}
    \ket{x}\ot\ket{0^M},
\end{equation*}
the first data qubit in the output component at address $x$
is in the state $\rbra{\ket{0}-\ket{1}}/\sqrt2$ if
$f\rbra{x}=1$, and $\rbra{\ket{0}+\ket{1}}/\sqrt2$ otherwise.
The single phase sector therefore depends on whether
$f\rbra{x}=1$ at each of the $R>2$ addresses,
although the circuit uses only two queries.
Grouping paths by signed counts therefore does not
give the label-dependence bound in \cref{lem:path-sector},
so the present proof does not extend to inverse queries.

\subsection{Lower bound for testing junta unitaries}\label{subsec:lower-bound-proofs}

We now choose the ensemble parameters and apply
\cref{thm:classicalization} to transfer the classical sample lower
bound to forward queries.

\begin{proof}[Proof of \cref{thm:unitary-junta-lower-bound}]
Fix a sufficiently large absolute constant $C$.
Let $k$ be sufficiently large depending on $C$,
and fix an integer $n\geq2k$ and
$2^{-k/16}\leq\eps\leq1/16$.
Set
\begin{equation}
\label{eq:lower-bound-parameters}
  \delta=\frac43\rbra{2\eps^2-\eps^4},\quad
  r=\ceil*{\log\rbra{Ck/\delta^2}},\quad
  R=2^r,\quad s=k-r,\quad M=n-r,\quad
  Q=\floor*{s/\delta}+1.
\end{equation}
Only $M$ depends on $n$.
The choice of $\eps$ ensures that $0<\delta<1/32$.

We next show that $r\leq k/2$.
From the definition of $r$, $\delta\geq\eps^2$, and
$\log\rbra{1/\eps}\leq k/16$, we obtain
\begin{equation*}
  r \leq\log\rbra{Ck/\delta^2}+1
   \leq\frac{k}{4}+\log k+\log C+1
    \leq\frac{k}{2}.
\end{equation*}
The last inequality holds for all sufficiently large $k$,
uniformly over the stated range of $\eps$.
Hence $k/2\leq s\leq k$.
Also, $M-2s=n-2k+r\geq0$, so $M\geq2s$.

We next bound the failure probability in
\cref{lem:hard-no-reduction}.
Using $s\leq k$, $r\leq k/2$, and
$R\geq Ck/\delta^2$, we obtain
\begin{equation}
\label{eq:hard-ensemble-failure-bound}
  \frac{16s}{R\delta^2}+r e^{-R/16}
  \leq\frac{16}{C}+\frac{k}{2}e^{-Ck/\rbra{16\delta^2}}
   <\frac1{20}.
\end{equation}
The last inequality holds for sufficiently large $C$
and then sufficiently large $k$, uniformly for $\delta\leq1$.

For $p\in\Delta\rbra{\sbra{M}}$, each unitary
$U_{f,g}$ in $\U\rbra{p}$ acts on $r+M=n$ qubits.
If $\abs{\supp\rbra{p}}\leq s$, then
\cref{lem:completeness} and $r+s=k$ imply that
$U_{f,g}$ is a $k$-junta unitary.

Suppose $\mass_s\rbra{p}\leq1-\delta$.
By \cref{lem:hard-no-reduction,eq:hard-ensemble-failure-bound},
with probability at least $19/20$ over
$U_{f,g}\sim\U\rbra{p}$,
\begin{equation*}
  \Inf_{\overline{T}}\sbra{U_{f,g}}
  \geq\frac{3\delta}{4}=2\eps^2-\eps^4
\end{equation*}
for every $T\subseteq\sbra{n}$ with $\abs{T}\leq k$.
On this event, \cref{lem:influence-certificate}(ii) gives
\begin{equation*}
  \distU\rbra{U_{f,g},\calJ_k}\geq\eps,
\end{equation*}
where $\calJ_k$ is the class of $n$-qubit
$k$-junta unitaries.

Let $\calA$ be a bounded-error tester for $n$-qubit
$k$-junta unitaries with distance parameter $\eps$,
using at most $q$ forward queries.
If $q\geq Q$, then
\begin{equation*}
  q\geq Q>\frac{s}{\delta}
  =\Omega\rbra{k/\eps^2},
\end{equation*}
which implies the required lower bound.

Suppose $q<Q$.
Applying \cref{thm:classicalization} to $\calA$
gives a function $F:\sbra{M}^q\to\sbra{0,1}$,
independent of $p$, whose expectation under
$Y\sim p^{\ot q}$ equals $\calA$'s acceptance
probability averaged over $\U\rbra{p}$,
for every $p\in\Delta\rbra{\sbra{M}}$.
The YES guarantee and the NO event of probability
at least $19/20$ give
\begin{align*}
  \abs{\supp\rbra{p}}\leq s
  &\quad\Longrightarrow\quad
    \Ex_{Y\sim p^{\ot q}}\sbra{F\rbra{Y}}
    \geq\frac23,\\
  \mass_s\rbra{p}\leq1-\delta
  &\quad\Longrightarrow\quad
    \Ex_{Y\sim p^{\ot q}}\sbra{F\rbra{Y}}
    \leq\frac{19}{20}\cdot\frac13+\frac1{20}
    <\frac25.
\end{align*}
Sampling $Y\sim p^{\ot q}$ and accepting with
probability $F\rbra{Y}$ defines a classical tester that
succeeds with probability at least $3/5$ on every
promised distribution.
Since $M\geq2s$, this tester also works for distributions
on $\sbra{2s}$, viewed as distributions on $\sbra{M}$
with zero mass outside $\sbra{2s}$.
This extension preserves both support size and $\mass_s$.
Since $s\geq k/2$ is sufficiently large and $0<\delta<1/32$,
\cref{thm:classical-support-lb} implies
\begin{equation*}
  q=\Omega\rbra*{\frac{s}{\delta\log s}}
   =\Omega\rbra*{\frac{k}{\eps^2\log k}},
\end{equation*}
where we used $s=\Theta\rbra{k}$ and
$\delta=\Theta\rbra{\eps^2}$.
This completes the proof.
\end{proof}

\subsection{Lower bound for testing junta channels}
\label{subsec:hard-channel-reduction}

The same ensemble yields the channel lower bound.  We use
\cref{lem:unitary-channel-influence-certificate} to certify distance
from all junta channels.

\begin{proof}[Proof of \cref{thm:channel-junta-lower-bound}]
Use the parameters in \cref{eq:lower-bound-parameters}
and the ensemble $\U\rbra{p}$ for
$p\in\Delta\rbra{\sbra{M}}$.
Each $U_{f,g}$ acts on $r+M=n$ qubits.
If $\abs{\supp\rbra{p}}\leq s$, then
\cref{lem:completeness} implies that $U_{f,g}$ is a $k$-junta
unitary, so its induced channel $\Phi_{U_{f,g}}$ is a
$k$-junta channel.

Suppose $\mass_s\rbra{p}\leq1-\delta$.
By \cref{lem:hard-no-reduction,eq:hard-ensemble-failure-bound},
with probability at least $19/20$ over
$U_{f,g}\sim\U\rbra{p}$,
\begin{equation*}
  \Inf_{\overline{T}}\sbra{U_{f,g}}
  \geq2\eps^2-\eps^4
\end{equation*}
for every $T\subseteq\sbra{n}$ with $\abs{T}\leq k$.
On this event, \cref{lem:unitary-channel-influence-certificate} gives
\begin{equation*}
  \distC\rbra{\Phi_{U_{f,g}},\calC_k}^2
  \geq2\eps^2-\eps^4
    -\frac12\rbra{2\eps^2-\eps^4}^2
  \geq\eps^2,
\end{equation*}
where $\calC_k$ is the class of all $n$-qubit
$k$-junta channels and the last inequality uses
$\eps\leq1/16$.
Indeed, with $t=\eps^2\leq1/256$, the difference between
the preceding expression and $t$ is
$t\sbra*{\rbra{1-3t}+t^2\rbra{2-t/2}}\geq0$.
Thus the induced channels satisfy the same YES and NO
guarantees as the unitary ensemble.

Let $\calA$ be such a channel tester using at most
$q$ queries.
If $q\geq Q$, then $q>s/\delta=\Omega\rbra{k/\eps^2}$
by \cref{eq:lower-bound-parameters} and $k/2\leq s\leq k$.
This implies the stated lower bound.
Suppose $q<Q$.
Each query to $\Phi_{U_{f,g}}$ can be implemented
by one forward query to $U_{f,g}$, including for inputs
entangled with the workspace.
Applying \cref{thm:classicalization} and the same
acceptance-probability calculation gives a classical
support-size tester using $q$ samples with success
probability at least $3/5$.
Restricting to distributions supported on $\sbra{2s}$,
\cref{thm:classical-support-lb} implies
\begin{equation*}
  q=\Omega\rbra*{\frac{s}{\delta\log s}}
   =\Omega\rbra*{\frac{k}{\eps^2\log k}}.
\end{equation*}
\end{proof}

\paragraph{Acknowledgment}
J.~B.\ is supported by the Quantum Advantage Turbo Charger (QATCH)
programme, funded by UKRI (Grant No.~UKRI4257), and by the
Quantum Advantage Pathfinder (QAP) programme, funded by EPSRC (Grant No.~EP/X026167/1).
M.~G.\ is supported by the National Key Research and Development Program of China (Grant No.
2023YFA1009403).
P.~Y.\ is supported by the National Natural Science Foundation of China (Grant Nos. 62332009 and 12347104), the Quantum Science and Technology—National Science and Technology Major Project (Grant No. 2021ZD0302901), the NSFC/RGC Joint Research Scheme (Grant No. 12461160276), the Natural Science Foundation of Jiangsu Province (No. BK20243060), the Fundamental and Interdisciplinary Disciplines Breakthrough Plan of the Ministry of Education of China (No. JYB2025XDXM118), the "111 Center"(No. B26023), and the Fundamental Research Funds for the Central Universities (Grant no. 2026300376).

\paragraph{AI use statement}
OpenAI GPT-6 Astra and GPT-5.6 Sol were used extensively throughout the preparation of this manuscript, including for mathematical exploration, checking proof arguments, drafting, and language polishing. Portions of the manuscript text were redrafted or modified with AI assistance across all sections. The authors verified and refined the proofs, revised the text, and take full responsibility for the content of the final manuscript.

\bibliographystyle{alphaurl}
\phantomsection
\addcontentsline{toc}{section}{References}
\bibliography{main}

\appendix
\section{Moment-Matched Hard Priors}\label{app:classical-hard-priors}

We prove \cref{thm:classical-support-lb} in two steps.
First, we construct two priors supported on distributions
with separated support sizes.
The sample laws induced by these priors are close in
total variation.
We then mix each distribution with the same fixed distribution
on a disjoint support. This gives the $1/\delta$ factor
in the sample complexity lower bound.
We follow the moment-matching approach of \cite{WY19}.

For a prior $\pi$ on $\Delta\rbra{\sbra{m}}$ and an
integer $t\geq0$, write
$\mathsf Q_{\pi}^{\rbra{t}}
=\Ex_{p\sim\pi}\sbra*{p^{\ot t}}$.
We draw $p\sim\pi$ once, then draw $t$ independent
samples from $p$.

\begin{lemma}[Classical hard priors]
\label{lem:classical-hard-priors}
There is an absolute constant $c_0>0$ with the following property.
For every sufficiently large integer $m$, one can choose an integer
$0\leq s_0<m$ and priors $\pi_0,\pi_1$
on $\Delta\rbra{\sbra{m}}$ satisfying the conditions below.
\begin{itemize}
  \item For $p\sim\pi_0$, almost surely
  $\abs{\supp\rbra{p}}\leq s_0$.
  \item For $p\sim\pi_1$, almost surely
  $\abs{\supp\rbra{p}}\geq s_0+m/4$ and
  $p\rbra{i}\geq1/m$ for every $i\in\supp\rbra{p}$.
\end{itemize}
For $t_\star=\floor{c_0m/\log m}$,
\begin{equation*}
  \dtv{\mathsf Q_{\pi_0}^{\rbra{t_\star}}}
      {\mathsf Q_{\pi_1}^{\rbra{t_\star}}}
  \leq\frac1{10}.
\end{equation*}
\end{lemma}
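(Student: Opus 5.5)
We follow the Wu--Yang moment-matching construction for support-size estimation \cite{WY19}, in the form adapted to the minimum-probability regime of \cite{VV11}. Set $L=\ceil{c_1\log m}$ for a small absolute constant $c_1$ and work on the interval $\sbra{1,\lambda}$ with $\lambda=\Theta\rbra{\log m}$. First we build two random variables $V_0,V_1$ on $\cbra{0}\cup\sbra{1,\lambda}$ with equal means $\Ex V_0=\Ex V_1=1$ and matching moments $\Ex V_0^j=\Ex V_1^j$ for $j=1,\ldots,L$, and with a gap $\Pr\sbra{V_1=0}+1/2\leq\Pr\sbra{V_0=0}$. By LP duality, the largest achievable gap in zero-mass equals the best uniform approximation error of $1/x$ on $\sbra{1,\lambda}$ by degree-$L$ polynomials (after the standard reweighting $V\mapsto$ size-biased form). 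Chebyshev-polynomial bounds show this error stays bounded below by an absolute constant whenever $L\lesssim\sqrt{\lambda}$, or more sharply whenever $\lambda\gtrsim L^2$. This is the step where the constants $c_1$ and $\lambda/L^2$ are fixed.

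Next we define the priors. We draw $m$ i.i.d.\ copies $V^{\rbra{b}}_1,\ldots,V^{\rbra{b}}_m$ and set $p\rbra{i}=V_i/m$. The result is an approximate distribution with total mass about $1$ and nonzero entries at least $1/m$, and its support size concentrates around $m\Pr\sbra{V_b\neq0}$. We pick $s_0$ slightly above $m\rbra{1-\Pr\sbra{V_0=0}}$, so the support-size gap is at least $m/2$ minus fluctuations, which exceeds $m/4$. To obtain exact probability distributions with almost-sure guarantees, we condition on the high-probability event $\mathcal G$ on which the total mass lies in $\sbra{1-\eta,1+\eta}$ and the support size lies on the correct side of the threshold, and then normalize. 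Normalization changes each probability by at most a factor $1\pm\eta$. To keep $p\rbra{i}\geq1/m$ exactly, we scale $V$ by $\rbra{1+\eta}$ beforehand or use support on $\sbra{1+\eta,\lambda}$. Conditioning changes the induced sample laws by at most $\Pr\sbra{\mathcal G^c}$, which Hoeffding bounds by $e^{-\Omega\rbra{m/\lambda^2}}$.

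For indistinguishability we Poissonize. With $\mathrm{Poi}\rbra{t}$ samples from the unnormalized vectors, the symbol counts are independent $\mathrm{Poi}\rbra{tV_i/m}$. The total variation between the two count laws is at most $m$ times the single-coordinate distance between the mixtures $\Ex\sbra{\mathrm{Poi}\rbra{tV_0/m}}$ and $\Ex\sbra{\mathrm{Poi}\rbra{tV_1/m}}$, or we work in Hellinger/$\chi^2$ for tensorization. Because the first $L$ moments match and $tV/m\leq t\lambda/m=:\mu$, the single-letter distance is bounded by $\sum_{j>L}\rbra{e\mu/j}^j\lesssim\rbra{e\mu/L}^{L}$. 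Choosing $t=c_0m/\log m$ makes $\mu=c_0\lambda/\log m$ a small constant, so this term is $m^{-\Omega\rbra{c_1\log\rbra{1/c_0}}}=o\rbra{1/m}$, and the total stays below $1/40$. Finally we de-Poissonize: with $t'=2t_\star$, the Poisson count exceeds $t_\star$ with high probability, and fixed-$t_\star$ samples are a function of the Poissonized ones, so the bound transfers. Together with the normalization and conditioning errors, the distance is at most $1/10$.

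The main obstacle is balancing the parameters. The approximation-theoretic gap needs $\lambda\gtrsim L^2$, while the tail bound needs $t\lambda/m\ll L$. Taking $\lambda\asymp\log m$ and $L\asymp\sqrt{\log m}$ would only give $t\asymp m/\sqrt{\log m}$ scaling, not the required $m/\log m$. To get the sharper rate we instead use the bound $\abs{\mathrm{Poi}\text{ mixture difference}}\leq\rbra{e\mu/2L}^{L}$ with $L\asymp\log m$ and $\lambda\asymp\log^2 m$, following \cite{WY19}. The support-size lower bound $\geq s_0+m/4$ survives because the zero-mass gap is still constant. We also check that the minimum probability is $1/m$ and not $1/\rbra{m\lambda}$; this is fine because nonzero atoms satisfy $V\geq1$. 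We expect getting these constants consistent to be the delicate part.
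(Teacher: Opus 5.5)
Your proposal is correct and is essentially the paper's argument. Your final parameters $\lambda\asymp\log^2 m$ and $L\asymp\log m$ replace the unworkable $\lambda=\Theta(\log m)$ of your first paragraph. Under them it is $\mu/L$, not $\mu$, that is a small constant. They match the paper's choice $\lambda=16(\ln m)^2$, $d=\lfloor\ln m\rfloor$, and the rest follows the same steps: i.i.d.\ scaling by $1/m$ with atoms bounded away from $1$, conditioning and normalization, Poissonized tensorization of a single-letter $(e\mu/L)^L$ bound, and de-Poissonization.

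There are two minor differences. The paper builds the moment-matched pair explicitly from interpolation nodes and computes the zero-mass gap $2\xi_m\geq 1/3$ directly, rather than appealing to LP duality. (Under that duality the gap is twice the best approximation error of $1/x$ by polynomials of degree $L-1$.) The paper also bounds the single-letter distance by a Poisson-thinning coupling instead of a series expansion.
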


\begin{proof}[Proof of \cref{lem:classical-hard-priors}]
\emph{Moment-matched variables.}
We first construct two random variables with matching moments.
Set $d=\floor{\ln m}$, $\lam=16\rbra*{\ln m}^2$,
and $\nu=\rbra*{\lam/m}^{1/4}$.
For sufficiently large $m$, we have $d\geq2$,
$0<\nu\leq1/5$, and $1+\nu<\lam$.
Let $a=1+\nu$, $\eta_m=\sqrt{a/\lam}$, and
$r_m=\rbra{1-\eta_m}/\rbra{1+\eta_m}$.

To choose the interpolation nodes, define
\begin{equation*}
  \vphi\rbra{\theta}
  =\rbra{d-1}\theta
    +2\arctan\rbra*{\eta_m\tan\rbra{\theta/2}}
  \qquad\rbra{0\leq\theta<\pi},
\end{equation*}
and extend it continuously by $\vphi\rbra{\pi}=d\pi$.
Because $d\geq2$ and $\eta_m>0$, both terms defining $\vphi$
are strictly increasing on $\interval[open right]{0}{\pi}$.
Together with the continuous extension at $\pi$, this shows
that $\vphi$ increases from $0$ to $d\pi$.
For $j=0,\ldots,d$, let $\theta_j$ be the unique solution
of $\vphi\rbra{\theta_j}=\rbra{d-j}\pi$, and set
\begin{equation*}
  x_j=\frac{\lam+a+\rbra{\lam-a}\cos\theta_j}{2},
  \qquad
  w_j=\prod_{\substack{0\leq k\leq d\\k\neq j}}
    \abs{x_j-x_k}^{-1}.
\end{equation*}
Then $a=x_0<\cdots<x_d=\lam$ and $w_j>0$.
Comparing coefficients of $x^d$ in the Lagrange
interpolation formula gives
\begin{equation*}
  \sum_{j=0}^d\rbra{-1}^j w_j P\rbra{x_j}=0
\end{equation*}
for every polynomial $P$ of degree at most $d-1$.
Taking $P=1$ gives
$\sum_{j\text{ even}}w_j=\sum_{j\text{ odd}}w_j$.
Let $Z$ be this common sum.

Define $X_0,X_1$ on $\cbra{0,x_0,\ldots,x_d}$ by
\begin{equation*}
  \Pr\sbra{X_0=x_j}=
  \begin{cases}
    \dfrac{w_j}{Zx_j}, & j\text{ odd},\\[4pt]
    0, & j\text{ even},
  \end{cases}
  \qquad
  \Pr\sbra{X_1=x_j}=
  \begin{cases}
    \dfrac{w_j}{Zx_j}, & j\text{ even},\\[4pt]
    0, & j\text{ odd},
  \end{cases}
\end{equation*}
for $j=0,\ldots,d$, and
\begin{equation*}
  \Pr\sbra{X_0=0}=1-\frac1Z\sum_{j\text{ odd}}\frac{w_j}{x_j},
  \qquad
  \Pr\sbra{X_1=0}=1-\frac1Z\sum_{j\text{ even}}\frac{w_j}{x_j}.
\end{equation*}
Since $x_j\geq a$ and the weights of each parity sum to $Z$,
each distribution assigns total probability at most $1/a<1$
to positive values.
Thus both distributions are well defined.
Applying the interpolation identity to $P\rbra{x}=x^{k-1}$
for each integer $1\leq k\leq d$, and using the definition
of $Z$ when $k=1$, gives
\begin{equation*}
  \Ex\sbra*{X_0}=\Ex\sbra*{X_1}=1,
  \qquad
  \Ex\sbra*{X_0^k}=\Ex\sbra*{X_1^k}
  \quad\rbra*{2\leq k\leq d}.
\end{equation*}

We now compute
$g_m=\Pr\sbra*{X_1>0}-\Pr\sbra*{X_0>0}$.
For each integer $j\geq0$, let $T_j$ be the Chebyshev
polynomial defined by $T_j\rbra{\cos\theta}=\cos\rbra{j\theta}$.
Define
\begin{equation*}
  C_m\rbra{z}=T_d\rbra{z}+2r_mT_{d-1}\rbra{z}
    +r_m^2T_{d-2}\rbra{z},
  \qquad
  \xi_m=\frac{\rbra{1+\eta_m}^2}{2a}r_m^d.
\end{equation*}
We have
$C_m\rbra{\cos\theta}
=\Real\rbra*{e^{\mathrm{i}\rbra{d-2}\theta}
  \rbra{e^{\mathrm{i}\theta}+r_m}^2}$ and
$1+2r_m\cos\theta+r_m^2
=\abs{e^{\mathrm{i}\theta}+r_m}^2$.
For $0\leq\theta<\pi$,
$\arg\rbra{e^{\mathrm{i}\theta}+r_m}
=\theta/2+\arctan\rbra{\eta_m\tan\rbra{\theta/2}}$.
Thus
\begin{equation*}
  \frac{C_m\rbra{\cos\theta}}{1+2r_m\cos\theta+r_m^2}
  =\cos\rbra{\vphi\rbra{\theta}}.
\end{equation*}
The identity also holds at $\theta=\pi$ by continuity.
Set $z\rbra{x}=\rbra{2x-\lam-a}/\rbra{\lam-a}$ and
\begin{equation*}
  P\rbra{x}
  =\frac{1-\xi_m\rbra{-1}^d\frac{\lam-a}{4r_m}
      C_m\rbra{z\rbra{x}}}{x}
  \qquad\rbra{x\neq0}.
\end{equation*}
The identity
$T_k\rbra{-\rbra{r_m+r_m^{-1}}/2}
=\rbra{-1}^k\rbra{r_m^k+r_m^{-k}}/2$
together with $z\rbra{0}=-\rbra{r_m+r_m^{-1}}/2$
and the definition of $\xi_m$ gives
$\xi_m\rbra{-1}^d\frac{\lam-a}{4r_m}
  C_m\rbra{z\rbra{0}}=1$.
The numerator therefore vanishes at $x=0$, so $P$ extends to a
polynomial of degree at most $d-1$.
Since $z\rbra{x_j}=\cos\theta_j$ and
$\vphi\rbra{\theta_j}=\rbra{d-j}\pi$, the preceding identity gives
$C_m\rbra{z\rbra{x_j}}=\rbra{-1}^{d-j}
  \rbra{1+2r_mz\rbra{x_j}+r_m^2}$.
Using $1+2r_mz\rbra{x_j}+r_m^2
=4r_mx_j/\rbra{\lam-a}$ in the definition of $P$
yields $1/x_j-P\rbra{x_j}=\rbra{-1}^j\xi_m$.
The interpolation identity gives
\begin{equation*}
  g_m=\frac1Z\sum_{j=0}^d\frac{\rbra{-1}^jw_j}{x_j}
     =\frac{\xi_m}{Z}\sum_{j=0}^dw_j
     =2\xi_m.
\end{equation*}
We have $1<a\leq6/5$, $d\eta_m\leq\sqrt a/4$,
and $1-r_m\leq2\eta_m$.
Bernoulli's inequality gives
\begin{equation*}
  g_m\geq\frac{r_m^d}{a}
  \geq\frac{1-2d\eta_m}{a}
  \geq\frac1a-\frac1{2\sqrt a}
  \geq\frac13.
\end{equation*}
The last inequality uses $1/a\geq5/6$ and
$1/\rbra{2\sqrt a}\leq1/2$.

\noindent\emph{Conditioning and normalization.}
We next form random vectors from independent copies,
then condition and normalize them to obtain the priors.
For each $b\in\cbra{0,1}$, draw independent copies
$X_{b,1},\ldots,X_{b,m}$ of $X_b$ and define
\begin{equation*}
  v^{\rbra{b}}=
    \frac1m\rbra*{X_{b,1},\ldots,X_{b,m}},
  \qquad
  S_b=\abs{\supp\rbra*{v^{\rbra{b}}}},
  \qquad
  \mu_b=\Ex\sbra*{S_b}.
\end{equation*}
Then $\mu_b=m\Pr\sbra*{X_b>0}$, so
$\mu_1-\mu_0=mg_m$.
Define the event
\begin{equation*}
  E_b=\cbra*{
    \abs{\norm{v^{\rbra{b}}}_1-1}\leq\nu,
    \ \abs{S_b-\mu_b}\leq m^{2/3}
  }.
\end{equation*}
Since $S_b$ is binomial, $\operatorname{Var}\sbra{S_b}\leq m$.
Using independence, $\Ex\sbra{X_b}=1$, and
$X_b^2\leq\lam X_b$, we obtain
\begin{equation*}
  \Ex\sbra*{\norm{v^{\rbra{b}}}_1}=1,
  \qquad
  \operatorname{Var}\sbra*{\norm{v^{\rbra{b}}}_1}
  =\frac{\operatorname{Var}\sbra{X_b}}{m}
  \leq\frac{\lam}{m}.
\end{equation*}
Chebyshev's inequality and a union bound give
\begin{equation*}
  1-\Pr\sbra*{E_b}
  \leq\frac{\lam}{m\nu^2}+m^{-1/3}
  =\nu^2+m^{-1/3}
  \leq\frac1{50}
\end{equation*}
for sufficiently large $m$.
Thus both events have positive probability.

Let $\wh\pi_b$ be the law of $v^{\rbra{b}}$
given $E_b$.
For $v\sim\wh\pi_b$, define
$p=v/\norm{v}_1$
and let $\pi_b$ be its law.
This is well defined since
$0<1-\nu\leq\norm{v}_1\leq1+\nu$.
Normalization preserves support.
Each nonzero $v_i$ is at least $\rbra{1+\nu}/m$,
so $p\rbra{i}\geq1/m$ for every $i\in\supp\rbra{p}$.

Set $s_0=\floor{\mu_0+m^{2/3}}$.
On $E_0$, $S_0\leq s_0$.
On $E_1$,
\begin{equation*}
  S_1-s_0\geq mg_m-2m^{2/3}
  \geq\frac m3-2m^{2/3}
  \geq\frac m4
\end{equation*}
for sufficiently large $m$.
Since $\Pr\sbra{E_1}>0$ and $S_1\leq m$,
we have $0\leq s_0\leq3m/4<m$.

\noindent\emph{Poissonized sample laws.}
It remains to bound the distance between the sample laws.
For countable sample spaces, we use the same definition
of total variation with a countable sum.
For $\tau\geq0$, let $\mathsf H_b^{\rbra{\tau}}$
be the law of $\rbra{N_1,\ldots,N_m}$ generated as follows.
Draw $v\sim\wh\pi_b$.
Given $v$, draw independent counts
\begin{equation*}
  N_i\sim\operatorname{Poi}\rbra*{\tau v_i}
  \qquad\rbra*{i\in\sbra{m}}.
\end{equation*}

Without conditioning on $E_b$, the counts are independent,
since the $X_{b,i}$ are independent.
Each count has law
$\Ex\sbra*{\operatorname{Poi}\rbra*{\tau X_b/m}}$,
where the expectation is over $X_b$.
The unconditional count law is a mixture with weight
$\Pr\sbra{E_b}$ on $\mathsf H_b^{\rbra{\tau}}$.
The remaining weight is $1-\Pr\sbra{E_b}$.
Thus conditioning on $E_b$ changes the law in total
variation by at most $1-\Pr\sbra{E_b}\leq1/50$.

We next compare the one-coordinate laws.
If $\tau=0$, both laws assign probability $1$ to $0$.
Assume henceforth that $\tau>0$.
Draw $K\sim\operatorname{Poi}\rbra*{\tau\lam/m}$
independently of $X_b$.
Given $K$ and $X_b$, draw
$J_b\sim\operatorname{Binomial}\rbra*{K,X_b/\lam}$.
Summing over $K$ shows that, given $X_b$, the law of $J_b$
is $\operatorname{Poi}\rbra*{\tau X_b/m}$.
For integers $0\leq j\leq h\leq d$,
conditioning on $K=h$ gives
\begin{equation*}
  \Pr\sbra{J_b=j\mid K=h}
  =\binom{h}{j}\Ex\sbra*{
    \rbra{X_b/\lam}^j\rbra{1-X_b/\lam}^{h-j}}.
\end{equation*}
The expression inside the expectation is a polynomial in $X_b$ of
degree at most $h\leq d$. Moment matching therefore gives the same
conditional law for $b=0$ and $b=1$.
Hence
\begin{equation*}
  \dtv{\Ex\sbra*{\operatorname{Poi}\rbra*{\tau X_0/m}}}
      {\Ex\sbra*{\operatorname{Poi}\rbra*{\tau X_1/m}}}
  \leq\Pr\sbra{K>d}.
\end{equation*}
If $\tau\lam/m<d$, Markov's inequality applied to
$\rbra*{md/\rbra{\tau\lam}}^K$ gives
\begin{equation*}
  \Pr\sbra{K>d}
  \leq\rbra*{\frac{\tau\lam}{md}}^d
       e^{d-\tau\lam/m}
  \leq\rbra*{\frac{e\tau\lam}{md}}^d.
\end{equation*}
The last bound also holds when $\tau\lam/m\geq d$,
since its right-hand side is at least $1$.
The distance between product laws is at most the sum
of the coordinate distances.
Adding the two conditioning errors gives
\begin{equation*}
  \dtv{\mathsf H_0^{\rbra{\tau}}}
      {\mathsf H_1^{\rbra{\tau}}}
  \leq\frac2{50}
    +m\rbra*{\frac{e\tau\lam}{md}
      }^{d}.
\end{equation*}

Set $t_\star=\floor{c_0m/\log m}$ and
$\tau=8t_\star$, where $c_0>0$ is an absolute constant.
By the definitions of $d$ and $\lam$, we can choose
$c_0$ small enough that
$e\tau\lam/\rbra{md}\leq e^{-4}$
for all sufficiently large $m$.
Since $d\geq\ln m-1$, we obtain
\begin{equation*}
  \dtv{\mathsf H_0^{\rbra{\tau}}}
      {\mathsf H_1^{\rbra{\tau}}}
  \leq\frac2{50}+me^{-4d}
  \leq\frac2{50}+\frac{e^4}{m^3}
  \leq\frac3{50}
\end{equation*}
for sufficiently large $m$.

\noindent\emph{Returning to a fixed sample count.}
For sufficiently large $m$, we also have $t_\star\geq\ln50$.
Given $v$, the total count
$N_{\mathrm P}=\sum_{i=1}^mN_i$
is Poisson with mean $8t_\star\norm{v}_1$.
Conditionally on $v$,
$\Ex\sbra*{2^{-N_{\mathrm P}}\mid v}
=\exp\rbra*{-4t_\star\norm{v}_1}$.
Since $\norm{v}_1\geq1-\nu\geq1/2$,
Markov's inequality gives
\begin{equation*}
  \Pr\sbra*{N_{\mathrm P}<t_\star\mid v}
  \leq2^{t_\star}\exp\rbra*{-4t_\star\norm{v}_1}
  \leq e^{-t_\star}\leq\frac1{50}.
\end{equation*}

Define a randomized map $\calK$ from histograms
to $\sbra{m}^{t_\star}$.
If $N_{\mathrm P}\geq t_\star$, uniformly permute
the list containing $N_i$ copies of each $i\in\sbra{m}$,
and output the first $t_\star$ entries.
Otherwise, output a fixed sequence in $\sbra{m}^{t_\star}$.
This map does not depend on $b$.

Fix $v$. Independently of $N_{\mathrm P}$, draw an
infinite sequence of independent samples from $p$.
Given $N_{\mathrm P}=h$, the histogram of the first $h$
samples has law $\operatorname{Multinomial}\rbra{h,p}$.
Averaging over $N_{\mathrm P}$ gives the required count law
for this $v$.
Conditioned on the histogram, the sample order is uniform, so we can
use this order in $\calK$.
Its output agrees with the first $t_\star$ samples
unless $N_{\mathrm P}<t_\star$, which has probability
at most $1/50$.
Averaging over $v\sim\wh\pi_b$ gives
\begin{equation*}
  \dtv
    {\calK\rbra*{\mathsf H_b^{\rbra{8t_\star}}}}
    {\mathsf Q_{\pi_b}^{\rbra{t_\star}}}
  \leq\frac1{50}.
\end{equation*}
Contraction of total variation under $\calK$
and the triangle inequality give
\begin{equation*}
  \dtv{\mathsf Q_{\pi_0}^{\rbra{t_\star}}}
      {\mathsf Q_{\pi_1}^{\rbra{t_\star}}}
  \leq\frac1{50}+\frac3{50}+\frac1{50}
  =\frac1{10}.\qedhere
\end{equation*}
\end{proof}

\begin{proof}[Proof of \cref{thm:classical-support-lb}]
Fix a sufficiently large integer $s$ and $0<\delta<1/32$.
All asymptotic bounds below are uniform in $\delta$.
Let $s_0,\pi_0,\pi_1$ be given by
\cref{lem:classical-hard-priors} with $m=s$.
Set $\alpha=4\delta$, and let $u$ be uniform on
$\cbra{s+1,\ldots,2s-s_0}$.
For $p\in\Delta\rbra{\sbra{s}}$, extend $p$ and $u$
by zero to $\sbra{2s}$ and set
\begin{equation*}
  \wt p=\alpha p+\rbra{1-\alpha}u.
\end{equation*}
For $b\in\cbra{0,1}$, let $\wt\pi_b$ be the law
of $\wt p$ when $p\sim\pi_b$.

The supports of $p$ and $u$ are disjoint, so
$\abs{\supp\rbra{\wt p}}
=\abs{\supp\rbra{p}}+s-s_0$.
Thus $\wt\pi_0$ is supported on YES instances.
For $\wt p\sim\wt\pi_1$, almost surely
$\abs{\supp\rbra{\wt p}}\geq s+s/4$ and every
nonzero probability on $\sbra{s}$ is at least $\alpha/s$.
On $\supp\rbra{u}$, each probability is
$\rbra{1-\alpha}/\rbra{s-s_0}\geq\alpha/s$,
since $s-s_0\leq s$ and $1-\alpha\geq\alpha$.
Thus every nonzero probability of $\wt p$ is at least $\alpha/s$.
Every set of at most $s$ labels misses at least $s/4$
elements of $\supp\rbra{\wt p}$. Hence
\begin{equation*}
  \mass_s\rbra{\wt p}
  \leq1-\frac{s}{4}\cdot\frac{\alpha}{s}=1-\delta.
\end{equation*}
Thus $\wt\pi_1$ is supported on NO instances.

Set $t_\star=\floor{c_0s/\log s}>0$.
Let $t$ be an integer with $0\leq t\leq t_\star/\rbra{20\alpha}$.
Suppose a tester uses at most $t$ samples and succeeds
with probability at least $2/3$ on every promised input.
Define a randomized map from $\sbra{s}^{t_\star}$
to $\sbra{2s}^t$ as follows.
Draw $t$ independent bits, each equal to $1$ with
probability $\alpha$, and let $B$ be their sum.
If $B>t_\star$, output a fixed sequence.
Otherwise, fill the positions marked $1$ with the input
samples in order. Fill the remaining positions with
independent samples from $u$.
This map does not depend on $p$ or $b$.
Since $\Ex\sbra{B}=t\alpha$, Markov's inequality gives
\begin{equation*}
  \Pr\sbra{B>t_\star}
  \leq\frac{t\alpha}{t_\star}\leq\frac1{20}.
\end{equation*}

Fix $p$ and run the procedure without the cutoff,
using independent samples from $p$ whenever needed.
The output has law $\wt p^{\ot t}$.
Use the same bits and samples in both procedures.
Their outputs agree whenever $B\leq t_\star$.
Averaging over $p\sim\pi_b$ shows that the map sends
$\mathsf Q_{\pi_b}^{\rbra{t_\star}}$ to a law within
total-variation distance $1/20$ of
$\mathsf Q_{\wt\pi_b}^{\rbra{t}}$.
Contraction of total variation and the triangle inequality
therefore give
\begin{equation*}
  \dtv{\mathsf Q_{\wt\pi_0}^{\rbra{t}}}
      {\mathsf Q_{\wt\pi_1}^{\rbra{t}}}
  \leq\frac1{20}+\frac1{10}+\frac1{20}
  =\frac15.
\end{equation*}
Let $F:\sbra{2s}^t\to\sbra{0,1}$ be the tester's
acceptance function, independent of the unknown distribution,
as in \cref{subsec:unitary-access}.
The total-variation bound implies that the tester's acceptance
probabilities under the two sample laws differ by at most $1/5$.
The success guarantee requires a difference of at least $1/3$,
a contradiction.
Since $t_\star=\Theta\rbra{s/\log s}$ and $\alpha=4\delta$,
every tester with success probability at least $2/3$
therefore requires
$\Omega\rbra{s/\rbra{\delta\log s}}$ samples.
Any tester with a fixed success probability greater than
$1/2$ can be amplified to success probability at least
$2/3$ by independent repetition and a majority vote.
The number of repetitions depends only on the original
success probability, proving the stated lower bound.
\end{proof}

\end{document}